\let\set\mathbb
\def\<#1>{\langle#1\rangle}
\newtheorem*{lemma*}{Lemma}
\newtheorem{theorem}{Theorem}[section]
\newtheorem{prop}[theorem]{Proposition}
\newtheorem{corollary}[theorem]{Corollary}
\newtheorem{lemma}[theorem]{Lemma}
\newtheorem{remark}[theorem]{Remark}
\newtheorem{algorithm}[theorem]{Algorithm}
\newtheorem{definition}[theorem]{Definition}
\newtheorem{example}[theorem]{Example}
\newtheorem{convention}[theorem]{Convention}
\newcommand{\myitem}[1]{%
	\item[(#1)]\protected@edef\@currentlabel{#1}%
}
\def\eatspace#1{#1}
\def\step#1#2{\par\kern1pt\hangindent#2em\hangafter=1\noindent\rlap{\small#1}\kern#2em\relax\eatspace}
\let\set\mathbb
\def\<#1>{\langle#1\rangle}
\newcommand{\va} { {\bf a}}
\newcommand{\vc} { {\bf c}}
\newcommand{\bN} { {\mathbb{N}}}
\newcommand{\bC} { {\mathbb{C}}}
\newcommand{\bQ} { {\mathbb{Q}}}
\newcommand{\bZ} { {\mathbb{Z}}}
\newcommand{\Z} { {\mathcal{Z}}}
\def\Span{\operatorname{Span}}
\def\coeff{\operatorname{Coeff}}
\def\row{\operatorname{Row}}
\def\qsquo{\operatorname{qsquo}}
\def\lcm{\operatorname{lcm}}
\def\ind{\operatorname{ind}}
\def\Sing{\operatorname{Sing}}
\def\Extra{\operatorname{Extra}}
\def\Gal{\operatorname{Gal}}
\def\ord{\operatorname{ord}}
\def\lclm{\operatorname{lclm}}
\def\lclm{\operatorname{lclm}}
\begin{document}

\title{Symmetric Division of Linear Ordinary Differential Operators
	\thanks{L.\ Du was supported by the Austrian FWF grant 10.55776/I6130. M.\ Kauers was supported by the Austrian FWF grant 10.55776/PAT8258123, 10.55776/I6130, and 10.55776/PAT9952223.
	}}

\author{Lixin Du and Manuel Kauers\\
	\normalsize Institute for Algebra, Johannes Kepler University, Linz, A4040, Austria\\
	{\sf \normalsize  lixindumath@gmail.com, manuel.kauers@jku.at}
}
\date{}
\maketitle
\begin{abstract}
  The symmetric product of two ordinary linear differential operators $L_1,L_2$ is
  an operator whose solution set contains the product $f_1f_2$ of any solution
  $f_1$ of $L_1$ and any solution $f_2$ of~$L_2$. It is well known how to compute
  the symmetric product of two given operators $L_1,L_2$. In this paper we consider
  the corresponding division problem: given a symmetric product $L$ and one of its
  factors, what can we say about the other factors?
\end{abstract}
\section{Introduction}

This work is about linear differential operators with rational function coefficients,
i.e., operators that can be written in the form 
\[
  p_0 + p_1 D + \cdots + p_r D^r,
\]
where $D$ refers to the derivation with respect to~$x$, and $p_0,\dots,p_r$ are
certain rational functions in~$x$. Such operators act in a natural way on differential
fields, for example on the field of formal Laurent series. The result of applying the
above operators to a series $f$ is meant to be the series
\[
  p_0(x)f(x) + p_1(x)f'(x) + \cdots + p_r(x)f^{(r)}(x),
\]
where by $p_i(x)$ we mean the series expansions of the rational function~$p_i$.

If $L$ is an operator and $f$ is a series, we write $L\cdot f$ for the series
resulting from applying $L$ to~$f$. A series $f$ is called
\emph{D-finite}~\cite{stanley80,kauers23} if there exists a nonzero operator $L$ such that
$L\cdot f=0$. Such an $L$ is called an \emph{annihilating operator}
for~$f$. D-finite series play an important role in computer algebra. There are
many algorithms for solving problems about D-finite series, and these algorithms
nowadays are routinely applied in areas in which such problems naturally arise.

A D-finite series is uniquely determined by an annihilating operator and a
finite number of initial terms. For this reason, algorithms for D-finite series
rely heavily on computations with operators. To enable computations with
operators, the set $C(x)[D]$ of all operators is turned into a ring by defining
addition and multiplication in such a way that the action of this ring on the
field $C((x))$ of Laurent series via operator application turns that field into
a $C(x)[D]$-module. This means that addition and multiplication are set up in
such a way that we have $(L+M)\cdot f=L\cdot f+M\cdot f$ and $(LM)\cdot
f=L\cdot(M\cdot f)$ for every $L,M\in C(x)[D]$ and every $f\in C((x))$. The
resulting ring $C(x)[D]$ of differential operators is an example of an Ore
algebra~\cite{ore33,BronsteinPetkovsek1996,kauers23}. Its multiplication is not commutative but governed by the
commutation rule $Dx=xD+1$, which reflects the product rule for differentiation.

The arithmetic in the ring $C(x)[D]$ of operators is thus quite different from the
arithmetic in the field $C((x))$. In particular, if $L$ and $M$ are annihilating operators
of $f$ and~$g$, respectively, then $L+M$ is usually not an annihilating operator of $f+g$,
and $LM$ is usually not an annihilating operator of $fg$. Nevertheless, if $f$ and $g$ are
D-finite, then so are their sum $f+g$ and their product~$fg$.
An annihilating operator for $f+g$ can be obtained from $L$ and $M$ by taking a
common left multiple of these operators, i.e., an operator that can be written as $AL$
and also as $BM$ for certain operators $A,B$. Such operators always exist, and there is
one of minimal order which is unique up to left-multiplication by nonzero rational functions.
This operator is called the \emph{least common left multiple} of $L$ and~$M$.
See \cite{BostanChyzakSalvyLi2012lclm,kauers14f,kauers23} for information about the
computation of such common left multiples.

Similarly, there is a construction by which an annihilating operator for the product $fg$
can be obtained from $L$ and~$M$. Again, among all these operators there is one of minimal
order, and this one is unique up to left-multiplication by nonzero rational functions.
It is called the \emph{symmetric product} of $L$ and $M$ and denoted by $L\otimes M$. 
This product is not to be confused with the product $LM$ obtained via the multiplication
in the ring $C(x)[D]$. See \cite{BronsteinMuldersWeil1997,gaillard25,kauers23} for more
about the computation of symmetric products.

So we have two distinct kinds of multiplication for operators: the regular
product and the symmetric product. What are the corresponding divisions? For the
regular product, this is easy to answer. With respect to this product, despite
the lack of commutativity the ring $C(x)[D]$ very much behaves like a
commutative univariate polynomial ring. In particular, it is a left Euclidean
domain~\cite{ore33,BronsteinPetkovsek1996,kauers23}. We have a notion of
division with remainder which works very much like ordinary polynomial division,
and we have a Euclidean algorithm. In fact, the extended Euclidean algorithm in
$C(x)[D]$ gives rise to one way of computing least common left multiples of
operators.

It is less clear how to do division with respect to the symmetric
product. Apparently, this question has not been systematically addressed before,
and the purpose of the present article is to develop some theory and algorithms
for this division. The task under consideration is, given two operators $M$
and~$L$, to find another operator, $Q$, such that $M=L\otimes Q$. We call such
an operator $Q$ a \emph{symmetric quotient} of $M$ and~$L$. The solutions $g$
of a symmetric quotient have the property that for every solution $f$ of~$L$,
the product $fg$ is a solution of~$M$. Note that this is not the same as trying
to compute an annihilating operator for the quotient $f/g$ of a solution $f$ of
$L$ by a solution $g$ of~$M$. Indeed, these quotients are usually not D-finite~\cite{harris85}.

We were led to study symmetric division by an attempt to construct a cryptographic
key exchange system based on operator arithmetic. The idea was that Alice chooses
two operators $L$ and $A$ and sends $L$ and $L\otimes A$ to Bob. Bob chooses an
operator $B$ and sends $L\otimes B$ back to Alice. Knowing $L\otimes B$ and~$A$,
Alice can compute $L\otimes A\otimes B$ (note that the symmetric product is
commutative, unlike the regular operator product), and knowing $L\otimes A$ and~$B$,
Bob can compute $L\otimes A\otimes B$ as well, so they have constructed a shared secret. 

The rationale of this crypto system was that while the symmetric product of two
operators can be efficiently computed, it is unclear how to do symmetric
division, so an attacker won't easily be able to recover $A$ from the knowledge
of both $L$ and $L\otimes A$. In a way, our main result is that this crypto
system is not secure, because symmetric division can be done. Although we cannot
solve the problem in full generality, our algorithms suffice to render the idea
obsolete.

While this motivation may seem a bit far fetched, we believe that symmetric division
is of interest in its own right and that the ideas behind our algorithms are worth
being shared. A key tool is the concept of colon spaces, an adaption of the definition
of colon ideals to vector space, which we introduce in Section~\ref{sec:colonspace} after reviewing
in Section~\ref{sec:prelim} the relevant background for this paper. In Section~\ref{sec:divisionalg} we
present an algorithm for computing what we will call local quasi-symmetric quotients.
This is a variant of the symmetric division problem that we found most tractable.
The algorithm of Section~\ref{sec:divisionalg} depends on a number of subroutines which are
detailed in Sections \ref{sec:truncation}--\ref{sec:degree}. 
In Section~\ref{sec:special} we discuss how (global) quasi-symmetric quotients can be constructed
for certain special kinds of operators.

\section{Preliminaries}\label{sec:prelim}

Throughout this paper, let $C$ be an algebraically closed field of
characteristic zero. Let $C(x)$ be the field of rational functions in $x$ over $C$. Let $C[[x]]$ be the ring of formal power series and let $C((x))$ be its quotient field, i.e., the field of formal Laurent series.

\subsection{Truncated Series}

For any $k\in \bZ$, let $T_{k}: C((x)) \to C((x))$ be the map defined by
\[T_k(f) = \sum_{i=j}^k a_i x^i,\]
for all $f=\sum_{i=j}^\infty a_ix^i\in C((x))$ with $a_i\in C$. The expression $T_k(f)$ is called a \emph{truncation of $f$ at precision~$k$}. We may use the notation
\[T_k(f)= \sum_{i=j}^k a_i x^i + O(x^{k+1})\]
to make the truncation precision more explicit. We recall some basic properties of truncated series,  which follow directly from the definitions of addition and multiplication of series, see~\cite[\S 1.1]{kauers23} for details.
\begin{lemma}\label{lem: series add}
	Let $f=\sum_{i= \lambda_0}^{\lambda}a_ix^i+O(x^{\lambda+1})$ and $g=\sum_{i= \mu_0}^{\mu}b_ix^i+O(x^{\mu+1})$ be Laurent series in $C((x))$, where $\lambda_0,\lambda,\mu_0,\mu\in\bZ$ with $\lambda\geq \lambda_0$, $\mu\geq \mu_0$. Then
	\[f+g=\sum_{i=\min\{\lambda_0,\mu_0\}}^{\min\{\lambda,\mu\}} (a_i+b_i)x^i +O(x^{\min\{\lambda,\mu\}+1}).\]
\end{lemma}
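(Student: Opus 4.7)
The plan is to unpack the $O(\cdot)$-notation directly from its definition and then add term-by-term. Before starting I would fix a convention that makes the right-hand side unambiguous: whenever an index lies outside the declared range of one of the sequences, the corresponding coefficient is taken to be zero. Thus $a_i:=0$ for $i<\lambda_0$ and $b_i:=0$ for $i<\mu_0$, which makes $(a_i+b_i)x^i$ meaningful on the entire range $\min\{\lambda_0,\mu_0\}\leq i\leq\min\{\lambda,\mu\}$. Note that this range is non-empty since $\lambda_0\leq\lambda$ and $\mu_0\leq\mu$ imply $\min\{\lambda_0,\mu_0\}\leq\min\{\lambda,\mu\}$.

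Without loss of generality assume $\lambda\leq\mu$, so that $\min\{\lambda,\mu\}=\lambda$. By the meaning of the truncation notation write
\[
f=\sum_{i=\lambda_0}^{\lambda}a_ix^i+r_f,\qquad
g=\sum_{i=\mu_0}^{\mu}b_ix^i+r_g,
\]
with $r_f\in x^{\lambda+1}C[[x]]$ and $r_g\in x^{\mu+1}C[[x]]\subseteq x^{\lambda+1}C[[x]]$ (the inclusion uses $\mu\geq\lambda$). Adding gives $f+g=\sum_{i=\lambda_0}^{\lambda}a_ix^i+\sum_{i=\mu_0}^{\mu}b_ix^i+(r_f+r_g)$, where the final parenthesis lies in $x^{\lambda+1}C[[x]]$ because that set is closed under addition.

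The next step is to split the second finite sum at the cut-off $i=\lambda$ into indices $\mu_0\leq i\leq\lambda$ and indices $\lambda+1\leq i\leq\mu$. The latter block already lies in $x^{\lambda+1}C[[x]]$ and is absorbed into the $O(x^{\lambda+1})$ tail. The former combines with $\sum_{i=\lambda_0}^{\lambda}a_ix^i$, and via the zero-extension convention the two sums merge into the single sum $\sum_{i=\min\{\lambda_0,\mu_0\}}^{\lambda}(a_i+b_i)x^i$. This yields the claimed identity.

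The only real subtlety is bookkeeping, not mathematics: the edge case $\mu_0>\lambda$ in which the supports $[\lambda_0,\lambda]$ and $[\mu_0,\mu]$ are disjoint. In that case the middle block of the split is empty, $b_i=0$ for all $i\leq\lambda$ by convention, $\min\{\lambda_0,\mu_0\}=\lambda_0$, and the stated identity degenerates to $f+g=f+O(x^{\lambda+1})$, which is correct since then $g\in x^{\mu_0}C[[x]]\subseteq x^{\lambda+1}C[[x]]$. Because no deeper ingredient is involved, I do not anticipate any serious obstacle; the proof is a careful exercise in the definitions.
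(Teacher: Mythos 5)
Your proof is correct, and it is exactly the direct verification the paper has in mind: the paper offers no written proof, merely asserting that the lemma ``follows directly from the definitions of addition of series,'' which is precisely what your unpacking of the $O(\cdot)$-remainders, the WLOG reduction to $\lambda\leq\mu$, and the zero-extension convention carry out. No further comment is needed.
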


\begin{lemma}\label{lem: series multi}
	Let $f=x^{\lambda_0}\sum_{i= 0}^{\lambda_1}a_ix^i+O(x^{\lambda_0+\lambda_1+1})$ and $g=x^{\mu_0}\sum_{i=0}^{\mu_1}b_ix^i+O(x^{\mu_0+\mu_1+1})$ be Laurent series in $C((x))$, where $\lambda_0,\mu_0\in\bZ$ and $\lambda_1,\mu_1\in\bN$. 
	\begin{enumerate}[label=(\roman*)]
		\item\label{it:series1}  The product $fg$ satisfies	\[fg=x^{\lambda_0+\mu_0}\sum_{i=0}^{\min\{\lambda_1,\mu_1\}} \left(\sum_{j=0}^ia_jb_{i-j}\right)x^i +O(x^{\min\{\lambda_1,\mu_1\}+\lambda_0+\mu_0+1}).\]
		\item\label{it:series2} If $a_{0}\neq 0$, then $f$ is invertible and 
		\[f^{-1}= x^{-\lambda_0}\left(\sum_{i=0}^{\lambda_1}c_i x^i\right) + O(x^{\lambda_1-\lambda_0+1}),\]
		where $c_0=\frac{1}{a_0}$, $c_i=-\frac{1}{a_0}(\sum_{j=1}^i a_jc_{i-j})$ for all $1\leq i\leq \lambda_1$.
	\end{enumerate}
\end{lemma}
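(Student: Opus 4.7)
The plan is to handle both parts by reducing to the standard Cauchy product identity and then keeping careful track of which coefficients are fully determined by the given truncations. For part~(i), I would first pull out the common factor $x^{\lambda_0+\mu_0}$ by writing $f = x^{\lambda_0}(A + h_1)$ and $g = x^{\mu_0}(B + h_2)$, where $A = \sum_{i=0}^{\lambda_1} a_i x^i$, $B = \sum_{i=0}^{\mu_1} b_i x^i$, and $h_1 \in x^{\lambda_1+1}C[[x]]$, $h_2 \in x^{\mu_1+1}C[[x]]$. Expanding
\[
  (A+h_1)(B+h_2) = AB + Ah_2 + Bh_1 + h_1h_2,
\]
the three remainder terms have valuations at least $\mu_1+1$, $\lambda_1+1$, and $\lambda_1+\mu_1+2$ respectively, so their sum lies in $x^{\min\{\lambda_1,\mu_1\}+1}C[[x]]$. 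On the other hand, the coefficient of $x^k$ in $AB$ is $\sum_{j=0}^k a_j b_{k-j}$ for every $0 \le k \le \lambda_1+\mu_1$; in the range $0 \le k \le \min\{\lambda_1,\mu_1\}$ the artificial constraints $a_j = 0$ for $j > \lambda_1$ and $b_{k-j} = 0$ for $k-j > \mu_1$ never activate. Reinserting the $x^{\lambda_0+\mu_0}$ factor then yields the stated truncation.

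For part~(ii), I would again factor $x^{\lambda_0}$ out of $f$ and reduce to $\lambda_0 = 0$, i.e., $f = \sum_{i=0}^{\lambda_1} a_i x^i + O(x^{\lambda_1+1})$ with $a_0 \neq 0$. Any representative in $C[[x]]$ of this truncation has nonzero constant term and is thus a unit, so its inverse has the form $\sum_{i\ge 0} c_i x^i \in C[[x]]$; the coefficients $c_0,\dots,c_{\lambda_1}$ depend only on $a_0,\dots,a_{\lambda_1}$. To identify them, I would apply part~(i) to the product of $f$ and the truncation $\sum_{i=0}^{\lambda_1} c_i x^i + O(x^{\lambda_1+1})$, set it equal to $1+O(x^{\lambda_1+1})$, and match coefficients. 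This gives $a_0 c_0 = 1$ and $\sum_{j=0}^i a_j c_{i-j} = 0$ for $1 \le i \le \lambda_1$, which solves into exactly the recurrence $c_0 = 1/a_0$ and $c_i = -\tfrac{1}{a_0}\sum_{j=1}^i a_j c_{i-j}$. Finally, multiplying by $x^{-\lambda_0}$ recovers the claimed expansion, and one checks that the precision of the result is $\lambda_1 - \lambda_0 + 1$ as stated.

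There is no serious obstacle here; the content is essentially the Cauchy product together with precision bookkeeping. The only point that requires genuine care is to verify that the three cross-terms $Ah_2$, $Bh_1$, $h_1h_2$ collapse into a single $O(x^{\min\{\lambda_1,\mu_1\}+1})$ error, rather than contaminating the displayed coefficients, and this is precisely what the valuation bounds on $h_1$ and $h_2$ guarantee.
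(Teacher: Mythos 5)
Your proof is correct. The paper does not actually prove this lemma---it is stated as a recalled fact that ``follows directly from the definitions,'' with the details delegated to the cited reference---and your argument (splitting off the known polynomial parts, bounding the valuations of the three cross-terms $Ah_2$, $Bh_1$, $h_1h_2$, and obtaining the inversion recurrence by matching coefficients in $f\cdot f^{-1}=1$ via part~(i)) is exactly the direct verification the authors have in mind.
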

\begin{corollary}\label{cor: series division}
	Let $r\in \bN\setminus\{0\}$ and let $f_i=\sum_{j=i-1}^{\infty}a_jx^j, g_i=\sum_{j=i-1}^\infty b_jx^j$ be power series in $C[[x]]$, where $1\leq i\leq r$ and $b_{i-1}\neq 0$ (while $a_{i-1}$ may be zero). Then for all $k\geq 0$, we have
	\[T_k\left(\frac{f_i}{g_i}\right) = T_k\left(\frac{T_{k+r-1}(f_i)}{T_{k+r-1}(g_i)}\right).\]
\end{corollary}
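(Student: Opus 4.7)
The plan is to reduce the statement to the corresponding facts on truncated products and inverses from Lemma~\ref{lem: series multi}. The key observation is that $f_i$ and $g_i$ share a common factor $x^{i-1}$. Set $\tilde f_i:=f_i/x^{i-1}\in C[[x]]$ and $\tilde g_i:=g_i/x^{i-1}\in C[[x]]$; then $\tilde g_i(0)=b_{i-1}\neq 0$, so $\tilde g_i$ is a unit in $C[[x]]$ and $f_i/g_i=\tilde f_i\,\tilde g_i^{-1}$ is an honest power series. Since truncation scales through monomials, $T_N(x^{i-1}h)=x^{i-1}T_{N-i+1}(h)$ for every $h\in C[[x]]$, so the claimed identity is equivalent to
\[
T_k\!\bigl(\tilde f_i\,\tilde g_i^{-1}\bigr)
\;=\;T_k\!\Bigl(T_{k+r-i}(\tilde f_i)\cdot \bigl(T_{k+r-i}(\tilde g_i)\bigr)^{-1}\Bigr),
\]
with the inverse on the right-hand side formed in $C[[x]]$; observe that $T_{k+r-i}(\tilde g_i)$ still has constant term $b_{i-1}\neq 0$ and is thus invertible. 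The hypothesis $i\le r$ enters here as $k+r-i\ge k$.

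Two short applications of Lemma~\ref{lem: series multi} then finish the job. Its part~(ii) expresses each coefficient $c_j$ of the inverse through a recursion involving only $b_{i-1},\dots,b_{i-1+j}$; it follows that $T_k(\tilde g_i^{-1})=T_k\bigl((T_{k+r-i}(\tilde g_i))^{-1}\bigr)$, since both inverses agree in their first $k+1$ coefficients. Its part~(i), specialised to two power series $A,B\in C[[x]]$ where $A$ starts at some $x^{\lambda_0}$ with $\lambda_0\ge 0$ (allowing $a_{i-1}=0$) and $B$ starts at $x^0$, yields $T_k(AB)=T_k\bigl(T_k(A)\,T_k(B)\bigr)$. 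Applying this to $A=\tilde f_i$ and $B=\tilde g_i^{-1}$, together with the previous observation, produces the displayed identity.

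No significant obstacle is expected; the result amounts to routine bookkeeping with truncation orders. The two points that require mild care are (a) the shift between the ambient precision $k+r-1$ and the working precision $k+r-i$ after factoring out $x^{i-1}$, and (b) the fact that $\tilde f_i$ need not have a nonzero constant term, which forces tracking the leading power $\lambda_0$ when invoking Lemma~\ref{lem: series multi}(i). In both cases the inequality $i\le r$ is precisely what keeps every working precision at least $k$.
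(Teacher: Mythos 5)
Your proof is correct and rests on the same two ingredients as the paper's: Lemma~\ref{lem: series multi}\ref{it:series2} to see that the inverse of a truncation agrees with the truncation of the inverse to precision at least $k$ (using $i\le r$), and Lemma~\ref{lem: series multi}\ref{it:series1} for the product. Your preliminary normalization (factoring out $x^{i-1}$ so that $\tilde g_i$ is a unit of $C[[x]]$) is only a cosmetic reorganization of the paper's bookkeeping, which instead keeps the valuation $i-1$ and tracks it through the $O(x^{\cdots})$ exponents.
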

\begin{proof}
	Since $T_{k+r-1}(g_i)=\sum_{j=i-1}^{k+r-1}b_jx^j +O(x^{k+r}) = x^{i-1}\sum_{j=0}^{k+r-i}b_{j+i-1}x^j + O(x^{k+r})$, by Lemma~\ref{lem: series multi}.\ref{it:series2} we have $\frac{1}{g_i}= \frac{1}{T_{k+r-1}(g_i)} + O(x^{k+r-2i+2})$ for all $1\leq i\leq r$. Moreover, $\frac{1}{T_{k+r-1}(g_i)}= x^{-(i-1)}\sum_{j=0}^\infty c_j x^j$ for some $c_j\in C$. Since $f_i= T_{k+r-1}(f_i) + O(x^{k+r})$, by Lemma~\ref{lem: series multi}.\ref{it:series1} we get \[\frac{f_i}{g_i} =\frac{T_{k+r-1}(f_i)}{T_{k+r-1}(g_i)} + O(x^{\min\{k+r-1-(i-1), \,k+r-2i+1+(i-1)\}+(i-1)-(i-1)+1}).\]
	Therefore, $\frac{f_i}{g_i}-\frac{T_{k+r-1}(f_i)}{T_{k+r-1}(g_i)} = O(x^{k+r-i+1}) = O(x^{k+1})$ for all $1\leq i\leq r$. This completes the proof.
\end{proof}

\subsection{The ring of linear differential operators}

Let $C(x)[D]$ be an Ore algebra, where $D$ is the differentiation with respect to $x$ and satisfies the commutation rule $Dx=xD+1$. Operators in $C(x)[D]$ have the form $L=\ell_0+\ell_1 D+\cdots +\ell_rD^r \in C(x)[D]$ with $\ell_0,\ell_1,\ldots,\ell_r\in C(x)$. When $\ell_r\neq 0$, we call $\ord(L) := r$ the \emph{order} of~$L$. Let $F$ be a differential ring and write $'$ for its derivation. The Ore algebra $C(x)[D]$ acts on $F$ via
\[(\ell_0+\ell_1D+\cdots+\ell_rD^r)\cdot f=\ell_0 f + \ell_1f' + \cdots + \ell_rf^{(r)}.\]
An element $f\in F$ is called a \emph{solution} of an operator $L\in C(x)[D]$ if $L\cdot f =0$. For $L \in C(x)[D]$, we call $V_F(L) := \{f\in F\mid L\cdot f =0\}$ the \emph{solution space} of $L$ in $F$.  For convenience, we write $V(L)$ to denote the solution space of $L$ in the field of formal Laurent series $C((x))$. An element $c\in F$ is called a \emph{constant} if $D \cdot c = 0$. The set of constants in a ring forms a subring and in a field forms a subfield.



The Ore algebra $C(x)[D]$ is a right Euclidean domain, and the (extended) Euclidean algorithm carries over almost literally to this setting~\cite{BronsteinPetkovsek1996,kauers23}. For every $L_1,L_2\in C(x)[D]$, $L_1\neq 0$, there exist unique $Q,R\in C(x)[D]$ such that $L_2 = QL_1 +R$ and $\ord(R)<\ord(L_1)$. If $R=0$, we say that $L_1$ is a \emph{right factor} of $L_2$ and that $L_2$ is a \emph{left multiple} of $L_1$. 
An element $f\in F$ is called \emph{D-finite} if there exists a nonzero operator $L\in C(x)[D]$ such that $L\cdot f= 0$. Such an $L$ is called an \emph{annihilator} of $f$. Among all  annihilators, one of minimal order is called a \emph{minimal annihilator}. Since every left ideal of $C(x)[D]$ is left principal, every annihilator of $f$ is a left multiple of its minimal annihilator. The following lemma describes properties of the solution spaces of an operator and its right factors.
\begin{lemma}[\!\!{\cite[Lemma 2.1]{Singer1996reducibility}}]\label{lem:sol-rightfactor}
	Let $L_1,L_2\in C(x)[D]$ and assume that $\ord(L_1)= r_1$, $\ord(L_2)= r_2$. Let $F$ be a differential field extension of $C(x)$ having the same constant $C$. 
	\begin{enumerate}[label=(\roman*)]
		\item\label{it:rf1} $\dim_CV_F(L_1) \leq r_1$.
		\item\label{it:rf2}  If $\dim_CV_F(L_1)=r_1$ and $V_F(L_1)\subseteq V_F(L_2)$, then $L_1$ is a right factor of~$L_2$.
		\item\label{it:rf3} If $\dim_CV_F(L_1) = r_1$ and $L_2$ is a right factor of $L_1$, then $\dim_CV_F(L_2) = r_2$ and
		$V_F(L_2) \subseteq V_F(L_1)$.
	\end{enumerate}
\end{lemma}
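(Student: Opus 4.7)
The three parts cascade: (i) furnishes the dimension bound, (ii) combines it with right Euclidean division in $C(x)[D]$, and (iii) uses rank--nullity together with (i). I plan to prove them in that order.

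For (i), I argue by contradiction. If $f_0,\ldots,f_{r_1}\in V_F(L_1)$ were $C$-linearly independent, write $L_1=\ell_0+\ell_1 D+\cdots+\ell_{r_1}D^{r_1}$ with $\ell_{r_1}\neq 0$. The identities $L_1\cdot f_j=0$ for $j=0,\ldots,r_1$ form a homogeneous linear system in the unknowns $\ell_0,\ldots,\ell_{r_1}$ whose coefficient matrix is $\bigl(f_j^{(i)}\bigr)_{0\leq j,i\leq r_1}$; since this system admits the nonzero solution $(\ell_0,\ldots,\ell_{r_1})\in F^{r_1+1}$, the matrix is singular, so the Wronskian of $f_0,\ldots,f_{r_1}$ vanishes. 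The standard Wronskian criterion in a differential field with constant subfield $C$ then forces the $f_j$ to be $C$-linearly dependent, contradicting the assumption.

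For (ii), right Euclidean division in $C(x)[D]$ produces $Q,R$ with $L_2=QL_1+R$ and $\ord(R)<r_1$. Applying both sides to any $f\in V_F(L_1)$ gives $R\cdot f=L_2\cdot f-Q\cdot(L_1\cdot f)=0$, so $V_F(L_1)\subseteq V_F(R)$; if $R$ were nonzero, part (i) applied to $R$ would give $\dim_C V_F(R)\leq\ord(R)<r_1=\dim_C V_F(L_1)$, a contradiction. Hence $R=0$. For (iii), write $L_1=QL_2$ with $\ord(Q)=r_1-r_2$. The inclusion $V_F(L_2)\subseteq V_F(L_1)$ is immediate from $L_1\cdot g=Q\cdot(L_2\cdot g)=0$ for $g\in V_F(L_2)$. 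To obtain the dimension equality, consider the $C$-linear map
\[
	\phi\colon V_F(L_1)\longrightarrow V_F(Q),\qquad f\longmapsto L_2\cdot f,
\]
which is well-defined because $Q\cdot\phi(f)=L_1\cdot f=0$ and whose kernel is precisely $V_F(L_2)$. Rank--nullity together with part (i) applied to both $L_2$ and $Q$ yields
\[
	r_1=\dim_C V_F(L_1)=\dim_C V_F(L_2)+\dim_C\im(\phi)\leq r_2+(r_1-r_2)=r_1,
\]
so every inequality in the chain is an equality, and in particular $\dim_C V_F(L_2)=r_2$.

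The only non-routine input is the Wronskian criterion underlying (i), namely that in a differential field with constant subfield $C$ the vanishing of the Wronskian of finitely many elements is equivalent to their $C$-linear dependence. This is a classical fact in differential algebra, so I do not anticipate any genuine obstruction; everything else is bookkeeping with Euclidean division and rank--nullity.
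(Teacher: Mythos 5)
Your proof is correct. The paper itself gives no proof of this lemma---it is quoted verbatim from Singer's paper as a known fact---and your argument (the Wronskian criterion for $C$-linear dependence over a differential field with constants $C$ for part (i), right Euclidean division plus the dimension bound for part (ii), and the map $f\mapsto L_2\cdot f$ with rank--nullity for part (iii)) is exactly the standard proof of this classical result, with all the small points (nonvanishing of $R$ before invoking (i), identification of $\ker\phi$ with $V_F(L_2)$ via the already-established inclusion) handled properly.
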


An operator $L\in C(x)[D]$ is called a \emph{common left multiple} of two operators $L_1,L_2 \in C(x)[D]$ if there exist $R_1,R_2 \in C(x)[D]$ such that $L=R_1L_1= R_2L_2$. Among all such common left multiples, one of minimal order is called a \emph{least common left multiple~(lclm)}. We write $\lclm(L_1,L_2)$ for the unique primitive lclm of $L_1$ and $L_2$, i.e., the lclm whose coefficients are coprime polynomials in $C[x]$ and whose leading coefficient is primitive in $x$. A key feature of the lclm is that whenever $f_1$ is a solution of $L_1$ and $f_2$ is a solution of $L_2$, then their sum $f_1+f_2$ is a solution of $\lclm(L_1,L_2)$.  For the efficient computation of the least common left multiple, see~\cite{BostanChyzakSalvyLi2012lclm}. There is a similar construction for multiplication. For any two nonzero differential operators $L_1,L_2\in C(x)[D]$, there exists a unique primitive operator $L_1\otimes L_2$ of lowest order, called the \emph{symmetric product} of $L_1$ and~$L_2$, such that
whenever $f_1$ is a solution $L_1$ and $f_2$ is a solution of $L_2$, then their product $f_1f_2$ is a solution of $L_1\otimes L_2$. As a special case, the \emph{$s$-th symmetric power} of an operator $L\in C(x)[D]$ is defined as $L^{\otimes s} =L\otimes \cdots\otimes L$. For the efficient computation of the symmetric powers, see~\cite{BronsteinMuldersWeil1997}. Note that unlike the multiplication in~$C(x)[D]$, the symmetric product is commutative. We recall the following properties of the lclm (see~\cite[Lemma 3.2]{Singer93} or \cite[\S 4.2, Ex.~13; solution on p.~578]{kauers23}) and of the symmetric product (see \cite[Corollary 2.9]{VanderputSinger2003Galois} or~\cite[Proposition 2.6]{Singer1979algebraic}).
\begin{lemma}\label{lem:sol-lcml-sym}
	Let $L_1,L_2\in C(x)[D]$. Let $F$ be a differential field extension of $C(x)$ having the same constant $C$. Assume that $\dim_{C}(V_F(L_1)) = \ord(L_1)$ and  $\dim_{C}(V_F(L_2)) = \ord(L_2)$. Then
	\begin{enumerate}[label=(\roman*)]
		\item\label{it:sol-lclm} $\dim_{C}V_F(\lclm(L_1,L_2)) = \ord(\lclm(L_1,L_2))$ and
		\begin{equation*}
			V_F(\lclm(L_1,L_2)) = V_F(L_1) + V_F(L_2);
		\end{equation*}
		
		\item\label{it:sol-sym} $\dim_{C}V_F(L_1\otimes L_2) = \ord(L_1\otimes L_2)$ and
		\begin{equation*}
			V_F(L_1 \otimes L_2) = \Span_C\{gh \mid g \in V_F(L_1),\, h \in V_F(L_2)\}.
		\end{equation*}
	\end{enumerate}
\end{lemma}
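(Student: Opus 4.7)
For part~\ref{it:sol-lclm}, I would work through the greatest common right divisor. Let $G$ be a greatest common right divisor of $L_1$ and $L_2$. Since $C(x)[D]$ is a left principal ideal domain, $G$ generates the left ideal $C(x)[D]L_1+C(x)[D]L_2$, so one can write $G=AL_1+BL_2$ for some $A,B\in C(x)[D]$. Applying this to any $f\in V_F(L_1)\cap V_F(L_2)$ gives $G\cdot f=0$, and the reverse inclusion is immediate because $G$ right-divides both $L_1$ and $L_2$; hence $V_F(L_1)\cap V_F(L_2)=V_F(G)$. Since $G$ is a right factor of $L_1$ and $\dim_CV_F(L_1)=\ord(L_1)$, Lemma~\ref{lem:sol-rightfactor}.\ref{it:rf3} yields $\dim_CV_F(G)=\ord(G)$. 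Combining this with Grassmann's formula and the standard identity $\ord(\lclm(L_1,L_2))=\ord(L_1)+\ord(L_2)-\ord(G)$ gives $\dim_C(V_F(L_1)+V_F(L_2))=\ord(\lclm(L_1,L_2))$. Because $\lclm(L_1,L_2)$ is a left multiple of both $L_1$ and $L_2$, we have $V_F(L_1)+V_F(L_2)\subseteq V_F(\lclm(L_1,L_2))$, and Lemma~\ref{lem:sol-rightfactor}.\ref{it:rf1} forces equality of dimensions and therefore of the spaces themselves.

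For part~\ref{it:sol-sym}, set $W:=\Span_C\{gh\mid g\in V_F(L_1),\,h\in V_F(L_2)\}$ and $d:=\dim_CW$. The defining property of the symmetric product gives $W\subseteq V_F(L_1\otimes L_2)$, hence $d\le\dim_CV_F(L_1\otimes L_2)\le\ord(L_1\otimes L_2)$ by Lemma~\ref{lem:sol-rightfactor}.\ref{it:rf1}. For the reverse inequality, I would pick a $C$-basis $w_1,\dots,w_d$ of $W$ and exploit the hypothesis that the constants of $F$ are exactly $C$: the classical Wronskian criterion then guarantees that the determinant formed from $w_1,\dots,w_d$ and their first $d$ derivatives does not vanish, so the Wronskian-determinant construction produces an operator $\tilde L\in F[D]$ of order exactly $d$ annihilating every element of $W$, in particular every product $gh$. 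The minimality characterisation of $L_1\otimes L_2$ as the least-order operator (over $C(x)$) annihilating all such products then forces $\ord(L_1\otimes L_2)\le\ord(\tilde L)=d$, since any such operator is a left multiple of $L_1\otimes L_2$ in $F[D]$. All inequalities in the chain collapse, yielding $\dim_CV_F(L_1\otimes L_2)=\ord(L_1\otimes L_2)=d$, and since $W\subseteq V_F(L_1\otimes L_2)$ with the same dimension, $W=V_F(L_1\otimes L_2)$.

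The main obstacle is the Wronskian step in~\ref{it:sol-sym}: it simultaneously requires the classical equivalence ``$C$-linearly independent $\iff$ nonvanishing Wronskian'' in a differential field whose constant subfield is exactly $C$ (which is precisely where the assumption on constants is used), and the fact that an operator in $F[D]$ annihilating every product $gh$ must be a left multiple of the intrinsically defined $L_1\otimes L_2\in C(x)[D]$. The rest is routine bookkeeping with dimensions, orders, and the identity relating $\lclm$ and greatest common right divisors.
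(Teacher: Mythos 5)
The paper does not actually prove this lemma --- it recalls it from the literature (citing Singer's Lemma~3.2 and \cite[\S 4.2, Ex.~13]{kauers23} for \ref{it:sol-lclm}, and van der Put--Singer, Cor.~2.9 / Singer 1979, Prop.~2.6 for \ref{it:sol-sym}) --- so there is no in-paper argument to compare with. Your treatment of part~\ref{it:sol-lclm} is correct and self-contained: the identification $V_F(L_1)\cap V_F(L_2)=V_F(G)$ for $G$ a gcrd via the B\'ezout relation, Lemma~\ref{lem:sol-rightfactor}.\ref{it:rf3} to get $\dim_C V_F(G)=\ord(G)$, Ore's order formula for the lclm, and Grassmann's formula close the argument cleanly.

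Part~\ref{it:sol-sym}, however, has a genuine gap at the step $\ord(L_1\otimes L_2)\le d$. The Wronskian construction does produce a monic $\tilde L\in F[D]$ of order $d$ annihilating $W$, but the assertion that $\tilde L$ is a left multiple of $L_1\otimes L_2$ in $F[D]$ is precisely what has to be proved, and the minimality of $L_1\otimes L_2$ only yields divisibility in the opposite direction. Concretely: the operators in $F[D]$ annihilating $W$ form a left ideal, which by the order/dimension count is generated by $\tilde L$; since $L_1\otimes L_2$ annihilates $W$, this gives $L_1\otimes L_2=R\tilde L$ for some $R\in F[D]$, hence $\ord(L_1\otimes L_2)\ge d$ --- which you already knew from $W\subseteq V_F(L_1\otimes L_2)$. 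The minimality characterisation says that operators in $C(x)[D]$ annihilating all products of \emph{generic} solutions are left multiples of $L_1\otimes L_2$; it does not say that an operator with coefficients in $F$ annihilating only the particular products lying in $F$ is such a left multiple. Indeed, the latter statement is equivalent to $\ord(L_1\otimes L_2)=d$, i.e.\ to the claim being proved, so the argument is circular. (Compare the analogous failure for a single solution: the minimal annihilator in $C(x)[D]$ of one element $g_1h_1\in W$ may have order strictly larger than $1=\dim_C\Span_C\{g_1h_1\}$, so ``small annihilator over $F$'' does not by itself bound the order of the minimal annihilator over $C(x)$.)

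To close the gap one must show that no proper right factor of $L_1\otimes L_2$ over $F$ can annihilate all of $W$, equivalently that the $C$-linear relations among the products $g_ih_j$ of basis solutions are exactly those forced by the generic construction. The standard route is through the tensor product of the associated differential modules in a Picard--Vessiot ring, where $V(M_1\otimes M_2)\cong V(M_1)\otimes_C V(M_2)$ and $L_1\otimes L_2$ is the operator of the cyclic submodule generated by $e_1\otimes e_2$, whose solution space is the image of $V(M_1)\otimes_C V(M_2)$ under multiplication; a Galois-descent step then transfers the count to a general $F$ with constants $C$. This is why the paper cites van der Put--Singer rather than giving an elementary proof; if you want a self-contained argument you need to supply that module-theoretic (or an equivalent generic-evaluation) step.
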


The following lemma (see \cite[\S 4.1 Ex.~23; solution on p.~575]{kauers23}) shows that the symmetric product is distributive over the lclm.
\begin{lemma}\label{lem:distributivity}
	Let $L, Q_1, Q_2\in C(x)[D]$. Then $L\otimes (\lclm(Q_1,Q_2)) =\lclm(L\otimes Q_1, L\otimes Q_2)$.
\end{lemma}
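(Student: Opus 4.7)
The plan is to apply Lemma~\ref{lem:sol-lcml-sym} on both sides and show that $L \otimes \lclm(Q_1, Q_2)$ and $\lclm(L\otimes Q_1, L\otimes Q_2)$ have the same $C$-vector space of solutions over a suitable differential extension $F$ of $C(x)$; uniqueness of the primitive representatives of lclm and symmetric product will then force the two operators to be equal.

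First I would pass to a differential field $F \supset C(x)$ whose field of constants is still $C$ and in which every operator appearing in the identity attains its maximal solution dimension. A Picard--Vessiot extension for $L\cdot Q_1\cdot Q_2$ (say) suffices, since right factors and the constructions $\otimes$, $\lclm$ do not introduce new singularities beyond those of the factors. Inside this $F$, Lemma~\ref{lem:sol-lcml-sym}\ref{it:sol-lclm} gives $V_F(\lclm(Q_1,Q_2)) = V_F(Q_1) + V_F(Q_2)$, and part~\ref{it:sol-sym} gives an explicit description of the solution spaces of the relevant symmetric products.

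Next, using bilinearity of multiplication in~$F$, I would compute
\begin{align*}
V_F(L\otimes\lclm(Q_1,Q_2))
 &= \Span_C\{gh : g\in V_F(L),\ h\in V_F(Q_1)+V_F(Q_2)\}\\
 &= \Span_C\{gh_1 : g\in V_F(L),\ h_1\in V_F(Q_1)\}\\
 &\quad + \Span_C\{gh_2 : g\in V_F(L),\ h_2\in V_F(Q_2)\}\\
 &= V_F(L\otimes Q_1) + V_F(L\otimes Q_2)\\
 &= V_F(\lclm(L\otimes Q_1,L\otimes Q_2)),
\end{align*}
where the first and third equalities use Lemma~\ref{lem:sol-lcml-sym}\ref{it:sol-sym} and the fourth uses Lemma~\ref{lem:sol-lcml-sym}\ref{it:sol-lclm} applied to $L\otimes Q_1$ and $L\otimes Q_2$, both of which also have full-dimensional solution spaces over~$F$.

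Finally, since both $L\otimes\lclm(Q_1,Q_2)$ and $\lclm(L\otimes Q_1,L\otimes Q_2)$ have solution spaces of dimension equal to their orders, Lemma~\ref{lem:sol-rightfactor}\ref{it:rf2} applied in each direction shows that each is a right factor of the other; hence they have the same order and differ only by a nonzero element of~$C(x)$. Since both are by convention the distinguished primitive operators of their respective constructions, they coincide. The only real subtlety in this plan is arranging the extension $F$ where all six operators simultaneously have full-dimensional solution spaces; once Picard--Vessiot theory supplies such an $F$, the rest is a formal unwinding of the definitions through the lemmas already available.
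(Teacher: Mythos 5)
Your argument is correct and is the standard solution-space proof of this identity; the paper itself does not prove Lemma~\ref{lem:distributivity} but cites it to the literature, and your chain of equalities via Lemma~\ref{lem:sol-lcml-sym} together with the primitivity convention is exactly the expected justification. The only point worth tightening is the choice of $F$: rather than asserting that a Picard--Vessiot extension of the product $LQ_1Q_2$ works (which is true, but needs the small observation that every left and right factor of an operator acquires a full solution space in that operator's Picard--Vessiot extension), it is cleaner to take the Picard--Vessiot extension of $\lclm(L,Q_1,Q_2)$ and invoke Lemma~\ref{lem:sol-rightfactor}\ref{it:rf3} directly.
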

As a consequence of Lemma~\ref{lem:distributivity}, we obtain the following corollary.
\begin{corollary}\label{cor:lclm-quotient-2}
	Let $L,Q_1,Q_2,M\in C(x)[D]$. If both $L\otimes Q_1$ and $ L\otimes Q_2$ are right factors of $M$, then $L\otimes(\lclm(Q_1,Q_2))$ is also a right factor of $M$. In particular, if $M=L\otimes Q_1 =L\otimes Q_2$, then $M=L\otimes (\lclm(Q_1,Q_2))$.
\end{corollary}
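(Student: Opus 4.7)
The plan is to deduce both claims almost directly from Lemma~\ref{lem:distributivity}, using only the standard universal property of the lclm in the left-principal domain $C(x)[D]$.

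First, I would set up the first statement. Since $L\otimes Q_1$ and $L\otimes Q_2$ are both right factors of $M$, the operator $M$ is a common left multiple of $L\otimes Q_1$ and $L\otimes Q_2$. Because $C(x)[D]$ is a left principal ideal domain, every common left multiple of two operators is a left multiple of their lclm; equivalently, $\lclm(L\otimes Q_1, L\otimes Q_2)$ is a right factor of $M$. Now applying Lemma~\ref{lem:distributivity} rewrites this lclm as
\[
  \lclm(L\otimes Q_1,\, L\otimes Q_2) \;=\; L\otimes \lclm(Q_1,Q_2),
\]
so $L\otimes \lclm(Q_1,Q_2)$ is a right factor of $M$, which is the first assertion.

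For the ``in particular'' part, assume $M = L\otimes Q_1 = L\otimes Q_2$ (as primitive operators). Then $\lclm(L\otimes Q_1, L\otimes Q_2) = \lclm(M,M) = M$, and Lemma~\ref{lem:distributivity} again yields $L\otimes \lclm(Q_1,Q_2) = M$, as required.

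There is essentially no main obstacle: the distributivity lemma does all the work. The only small point to be careful about is the normalization convention. Since $\lclm$ denotes the primitive lclm throughout the paper and $\lclm(M,M) = M$ for any primitive $M$, the identifications above are equalities of operators, not merely equalities up to a rational factor, so no rescaling is needed in the final step.
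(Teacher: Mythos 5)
Your proof is correct and is exactly the argument the paper intends: the paper states the corollary as an immediate consequence of Lemma~\ref{lem:distributivity}, and your derivation (a common left multiple is a left multiple of the lclm, then rewrite $\lclm(L\otimes Q_1, L\otimes Q_2)$ via distributivity) fills in the same steps. The remark about the primitive normalization is a nice touch but does not change the substance.
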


\subsection{Indicial polynomials}
We recall properties of linear differential equations in~\cite[Chap. XVI, XVII]{Ince1926} or \cite[Chap. V]{Poole1936}. Let $N\in C[x][D]$ be a linear differential operator with polynomial coefficients. If $N$ is given with rational coefficients, its denominators can be cleared. Then the action of $N$ on a monomial $z^s$ with $s\in\bN$ yields a polynomial
\begin{equation}\label{eq:Nx^s}
	N\cdot x^s = x^{s+\sigma_{N}}(p_0(s)+p_1(s)x+\cdots+p_t(s)x^t),
\end{equation}
where $\sigma_N\in \bZ$, $t\in \bN$, $p_i\in C[s]$ and $p_0\neq 0$. The coefficients of $p_i(s)$ depend on those coefficients of~$N$. The polynomial $p_0(s)$ is called the \emph{indicial polynomial} of $N$ at $0$, denoted by $\ind_0(L)$. By linear combination, the action of $N$ on a formal power series $f=\sum_{i=0}^\infty c_ix^i\in C[[x]]$ is the formal power series
\[N\cdot f = \sum_{i=0}^\infty c_i (N\cdot x^i)=\sum_{k=0}^\infty \left(c_0 p_k(0) + \cdots + c_kp_0(k)\right)x^{k+\sigma_N},\]
where $p_i(x)=0$ if $i>t$. Then the differential equation $N\cdot f= 0$ implies that
\begin{equation}\label{eq:Nf-rec1}
	c_0p_0(0) = 0,\, c_1p_0(1)+c_0p_1(0)=0,\,\ldots,\, c_{t-1}p_0(t-1)+\cdots+c_0p_{t-1}(0)=0
\end{equation}
and the linear recurrence of order $t$
\begin{equation}\label{eq:Nf-rec2}
	c_kp_0(k)+\cdots + c_{k-t}p_t(0)  =0,\quad \forall\,k\geq t.
\end{equation}
Let 
\begin{equation}\label{eq:Z_N}
	\Z_N := \{k\in \bN\mid p_0(k)=0\}
\end{equation}
be the set of nonnegative integer roots of the indicial polynomial of $N$ at $0$. For all $k\notin \Z_N$, the coefficient $c_k$ is determined from the previous ones. In particular, this discussion leads to the following basic property of power series solutions, see~\cite{Ince1926}.  
\begin{lemma}\label{lem:series-sol-rec}
	Let $N\in C[x][D]$ be such that $N\cdot x^s=x^{s+\sigma_N}(p_0(s)+p_1(s)x+\cdots + p_t(s)x^t)$ is in the form~\eqref{eq:Nx^s}. Let $k_0=\max\Z_N$. If $f=\sum_{i=0}^\infty c_ix^i\in C[[x]]$ is a formal power series solution of $N$, then,	for all $k> k_0$,
	\[c_k =-\frac{1}{p_0(k)}\left(\,\sum_{i=1}^{\min\{t,k\}}c_{k-i}p_i(k-i)\right).\]
\end{lemma}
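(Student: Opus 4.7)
The plan is to derive the claimed recurrence directly by expanding $N\cdot f$ using linearity and the hypothesis on the action of $N$ on monomials, then extracting the coefficient of each power of $x$ and using $N\cdot f=0$. First I would write
\[N\cdot f \;=\; \sum_{i=0}^\infty c_i\,(N\cdot x^i) \;=\; \sum_{i=0}^\infty c_i\, x^{i+\sigma_N}\sum_{j=0}^{t} p_j(i)\, x^j,\]
and then reindex by the total exponent $k=i+j$ to obtain
\[N\cdot f \;=\; \sum_{k=0}^\infty \Bigl(\sum_{j=0}^{\min\{t,k\}} c_{k-j}\, p_j(k-j)\Bigr)\, x^{k+\sigma_N}.\]
The inner sum is automatically finite because $p_j$ does not occur for $j>t$ and $c_{k-j}$ does not occur for $j>k$. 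This is just the content of the identities~\eqref{eq:Nf-rec1} and~\eqref{eq:Nf-rec2} already discussed before the lemma, packaged into a single formula valid for every $k\ge 0$.

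Next, since $N\cdot f=0$, every coefficient on the right must vanish, giving
\[c_k\, p_0(k) \;+\; \sum_{j=1}^{\min\{t,k\}} c_{k-j}\, p_j(k-j) \;=\; 0\]
for every $k\ge 0$. For $k>k_0=\max\Z_N$, the definition of $\Z_N$ in~\eqref{eq:Z_N} ensures $p_0(k)\ne 0$, so we may divide by $p_0(k)$ and isolate $c_k$, obtaining precisely the formula claimed.

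The argument is essentially bookkeeping, so there is no real obstacle: the lemma merely packages the recurrence~\eqref{eq:Nf-rec2} in a form solved for $c_k$ in the regime where $p_0(k)$ does not vanish. The only subtlety worth flagging is the upper limit $\min\{t,k\}$ in the sum, which accounts for the two natural truncations at play: $N$ has only the coefficient polynomials $p_0,\dots,p_t$ (truncation at~$t$), and the series $f$ starts at index $0$ (truncation at~$k$).
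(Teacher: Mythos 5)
Your proof is correct and follows the same route as the paper: the paper simply observes that $p_0(k)\neq 0$ for $k>k_0$ and invokes the coefficient identities \eqref{eq:Nf-rec1} and \eqref{eq:Nf-rec2}, which are exactly the relations you rederive by expanding $N\cdot f$ and reindexing. Your version just spells out the bookkeeping that the paper delegates to the preceding discussion.
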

\begin{proof}
	Since $p_0(k)\neq 0$ for all $k> k_0$, the result follows from the equations \eqref{eq:Nf-rec1} and~\eqref{eq:Nf-rec2}. 
\end{proof}

\subsection{Generalized series solutions}
Let $L=\ell_0 + \ell_1D+\cdots+\ell_rD^r\in C(x)[D]$ be a fixed operator of order $r$. A point $ \xi \in C$ is called a \emph{singularity} of $L$ if it is a pole of one of the rational functions $\ell_0/\ell_r, \ldots,\ell_{r-1}/\ell_r$. The point $\infty$ is called a \emph{singularity} if, after the substitution $x\mapsto x^{-1}$, the origin $0$ becomes a singularity. A point $\xi\in C \cup \{\infty\}$ which is not a singularity is called an \emph{ordinary point} of~$L$. If $\xi$ is an ordinary point of $L$, then after the change of variables $x\mapsto x+\xi$, zero becomes an ordinary point of the transformed operator. If $0$ is an ordinary point of $L$, then $L$ has $r$ linearly independent solutions in $C[[x]]$, see the following lemma. Since the dimension of the solution space $V(L) \subseteq C((x))$ can not exceed its order, these solutions also form a basis of $V(L)$.

%

\begin{lemma}[{\cite[Theorem 3.16]{kauers23}}]\label{lem:ordinary_fund_sys}
	Let $L\in C(x)[D]$ be an operator of order $r$. If $0$ is an ordinary point of~$L$, then $L$ has $r$ linearly independent solutions in $C[[x]]$ of the form
	\[g_1 = 1 + O(x^r),\quad g_2 = x + O(x^r),\quad\ldots,\quad g_r = x^{r-1} + O(x^r).\]
\end{lemma}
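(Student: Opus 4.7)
My plan is to reduce the claim to the recurrence framework established in Lemma~\ref{lem:series-sol-rec} and then to show that at an ordinary point the first $r$ coefficients of a power series solution can be chosen freely. Since $0$ is ordinary for $L$, each ratio $\ell_i/\ell_r$ lies in $C(x)$ and is regular at $0$, so I would first choose a polynomial $q\in C[x]$ with $q(0)\ne 0$ that clears all of these denominators. The resulting operator $N := q\cdot(D^r + \sum_{i<r}(\ell_i/\ell_r)D^i)$ then lies in $C[x][D]$, has exactly the same solution space as $L$ in $C((x))$, and fits the framework of equation~\eqref{eq:Nx^s}.

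The next step is to expand $N\cdot x^s$. Writing $N=q\,D^r + \sum_{i<r}b_i D^i$ with $b_i\in C[x]$ and reading off the coefficient of the lowest power of $x$ gives $\sigma_N=-r$ and the indicial polynomial $p_0(s)=q(0)\,s(s-1)\cdots(s-r+1)$. In particular $\Z_N=\{0,1,\dots,r-1\}$, so $k_0=r-1$, and $p_0(k)\ne 0$ for every $k\ge r$. Lemma~\ref{lem:series-sol-rec} then guarantees that any tuple $(c_0,\dots,c_{r-1})\in C^r$ extends uniquely via the recurrence~\eqref{eq:Nf-rec2} to a sequence $(c_k)_{k\ge 0}$.

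The delicate point, and what I expect to be the main obstacle, is checking that the initial equations~\eqref{eq:Nf-rec1} impose no constraint at all on $c_0,\dots,c_{r-1}$. For $k<r$ the $k$-th such equation takes the form $\sum_{j=0}^k c_{k-j}p_j(k-j)=0$, so it suffices to prove that $p_j(i)=0$ whenever $i+j\le r-1$. I would make this explicit by expanding
\[
  p_j(s) = q_j\,s(s-1)\cdots(s-r+1) + \sum_{i=0}^{r-1} b_{i,\,j-r+i}\,s(s-1)\cdots(s-i+1),
\]
with the convention that $b_{i,m}=0$ for $m<0$, and observing that for $i+j\le r-1$ every term vanishes at $s=i$: the leading falling factorial has a zero factor because $0\le i\le r-1$, and each remaining summand either has a coefficient $b_{i',\,j-r+i'}$ that is zero (when $i'<r-j$) or a falling factorial $s(s-1)\cdots(s-i'+1)$ with a zero factor at $s=i$ (when $i'\ge r-j$, since then $i\le i'-1$). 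The equations~\eqref{eq:Nf-rec1} for $r\le k<t$ coincide with the recurrence~\eqref{eq:Nf-rec2} and pose no new restriction either.

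Combining everything, each choice of $(c_0,\dots,c_{r-1})\in C^r$ determines a unique power series solution of $L$ in $C[[x]]$, and letting $(c_0,\dots,c_{r-1})$ run through the standard basis yields the claimed solutions $g_1=1+O(x^r),\dots,g_r=x^{r-1}+O(x^r)$. Their linear independence over $C$ is immediate from the distinct leading terms modulo~$x^r$. Beyond the bookkeeping in the vanishing argument for $p_j(i)$, the proof is essentially a repackaging of the indicial polynomial formalism already set up in the paper.
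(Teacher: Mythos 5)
Your argument is correct: the identification of the indicial polynomial $p_0(s)=q(0)\,s(s-1)\cdots(s-r+1)$ at an ordinary point, the verification that $p_j(i)=0$ whenever $i+j\le r-1$ (so the initial equations \eqref{eq:Nf-rec1} impose no constraints on $c_0,\dots,c_{r-1}$), and the unique determination of $c_k$ for $k\ge r$ together give exactly the claimed $r$-dimensional space of power series solutions. The paper itself does not prove this lemma but cites it from the literature, and your proof is the standard indicial-polynomial/recurrence argument underlying that reference, so nothing further is needed.
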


A singularity $\xi \in C\cup\{\infty\}$ of~$L$ is called \emph{apparent}  if the solution space of $L$ in $C[[x-\xi]]$ (or $C[[x^{-1}]]$ if $\xi=\infty$) has dimension $r$. A singularity $\xi\in C\cup\{\infty\}$ is called \emph{regular} if the indicial polynomial of $L$ at $\xi$ has degree $r$, and it is called \emph{irregular} otherwise. For each $\xi\in  C$, an operator $L$ of order $r$ admits $r$ linearly independent solutions of the form
\begin{equation}\label{EQ:generalized_sol_a}
	(x-\xi)^\alpha \exp(p((x-\xi)^{-1}))b(x-\xi, \log(x-\xi))
\end{equation}
for some $\alpha\in C$, $p\in C[x^{1/v}]$ and $b\in C[[x^{1/v}]][y]$ with $v\in\set N\setminus\{0\}$ and $p(0)=0$. Such objects are called \emph{generalized series solutions} at $\xi$, see~\cite{vanHoeij97b,kauers23}. For $\xi=\infty$, the operator $L$ admits $r$ linearly independent solutions of the form
\begin{equation}\label{EQ:generalized_sol_infty}
	x^{-\alpha}\exp(p(x))b(x^{-1}, \log(x))
\end{equation}
for some $\alpha\in C$, $p\in C[x^{1/v}]$and $b\in C[[x^{1/v}]][y]$ with $v\in\set N\setminus\{0\}$ and $p(0)=0$. If $\xi$ is a regular singularity, then all series solutions of $L$ at $\xi$ have $p=0$ and $v=1$. 
As a change of variables can always bring us back to the case $\xi=0$, it suffices to consider the case $x=0$. Let $C[[[x^{1/v}]]]$ be the ring of all finite $C$-linear combinations of series of the form $x^\alpha b(x,\log(x))$ with $\alpha\in C$ and $b\in C[[x^{1/v}]][y]$.

Since the series solutions in the form~\eqref{EQ:generalized_sol_a} (or \eqref{EQ:generalized_sol_infty} at $\infty$) may have fractional exponents, we consider $L\in C[x^{1/v}][D]$ in the following lemma. The indicial polynomial of an operator in $C[x^{1/v}][D]$ is defined similarly to the classical case, see~\cite[Definition 3.34]{kauers23}. 
\begin{definition}
	For a series $f\in x^\alpha C[[x^{1/v}]][\log(x)]$, a term $x^\beta \log(x)^\gamma$ is called an \emph{initial term} of~$f$ if $\beta$ is minimal among all exponents of $x$ appearing in $f$, and among the terms with exponent $\beta$, it has minimal $\gamma$. The exponent $\beta$ is called the \emph{local exponent} of $f$.
\end{definition} 

If $\alpha$ is a $\mu$-fold root of the indicial polynomial of $L$ at $0$, then $L$ has $\mu$ linearly independent solutions in $x^\alpha C[[x]][\log(x)]$ starting with the initial terms $x^\alpha \log(x)^{\gamma}$ for some~$\gamma$. More precisely, the following result can be obtained from further computations based on~\cite[Theorem 3.38 (item 2), Theorem 3.45]{kauers23}.
\begin{lemma}\label{lem: indicial - log}
	Let $L\in C[x^{1/v}][D]$ for some $v\in\bN \setminus\{0\}$. Suppose that the indicial polynomial $\ind_0(L)$ of $L$ at $0$ factorizes as  \[\ind_0(L) = c (s-\alpha_1)^{\mu_1}\cdots (s-\alpha_I)^{\mu_I},\] where $c\in C\setminus\{0\},\mu_1,\ldots, \mu_I\in \bN\setminus\{0\}$, the roots $\alpha_1,\ldots, \alpha_I\in C$ are distinct. Then the solution space of $L$ in $C[[[x^{1/v}]]]$ has a basis $g_{i,j}\, (i=1,\ldots,I,\, j = 1,\ldots,\mu_i)$ in the form:
	\begin{align*}
		g_{1,1}&= x^{\alpha_1} + \cdots, && \cdots, &g_{I,1}&= x^{\alpha_I} + \cdots, \\
		g_{1,2}&=x^{\alpha_1} \log(x) + \cdots, && \cdots, & g_{I,2}&= x^{\alpha_I} \log(x) + \cdots, \\
		& \vdots && && \vdots \\
	    g_{1,\mu_1}&=x^{\alpha_1} \log(x)^{\mu_1 - 1} + \cdots, && \cdots, & g_{I,\mu_I -1}&=x^{\alpha_I} \log(x)^{\mu_I-1} + \cdots.
	\end{align*}
	where $x^{\alpha_i} \log(x)^{j-1}$ is the initial term of $g_{i,j}$. 
\end{lemma}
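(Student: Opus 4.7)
The plan is to assemble the basis root-by-root using the classical Frobenius construction (packaged in Theorems 3.38(2) and 3.45 of \cite{kauers23}), then verify linear independence by a minimal-exponent argument and completeness by a dimension count.

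First I would invoke the cited theorems: for each indicial root $\alpha_i$ of multiplicity $\mu_i$, they yield $\mu_i$ linearly independent solutions of $L$ in $x^{\alpha_i}C[[x^{1/v}]][\log(x)]$ whose coefficients at the leading exponent $\alpha_i$ involve the log-monomials $1,\log(x),\ldots,\log(x)^{\mu_i-1}$ and form a full-rank $\mu_i\times\mu_i$ matrix. An elementary column reduction inside each root-group then replaces these by a new basis $g_{i,1},\ldots,g_{i,\mu_i}$ of the same subspace in which $g_{i,j}$ has minimal $x$-exponent $\alpha_i$ and the smallest log-power appearing at that exponent is exactly $j-1$; that is, $x^{\alpha_i}\log(x)^{j-1}$ is the initial term of $g_{i,j}$ in the sense of the preceding definition.

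For linear independence, suppose $\sum_{i,j}c_{i,j}g_{i,j}=0$ and pick $i_0$ with $c_{i_0,j}\ne 0$ for some $j$ such that $\alpha_{i_0}$ is minimal with respect to the partial order $\alpha\preceq\alpha' \iff \alpha'-\alpha\in\tfrac{1}{v}\bN$. For any $i\ne i_0$ with some $c_{i,j}\ne 0$, either $\alpha_{i_0}\prec\alpha_i$ strictly or $\alpha_i,\alpha_{i_0}$ are incomparable; in both cases the exponents appearing in $g_{i,j}$ lie in $\alpha_i+\tfrac{1}{v}\bN$ and do not include $\alpha_{i_0}$, so $g_{i,j}$ has no $x^{\alpha_{i_0}}$-term. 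Hence at exponent $\alpha_{i_0}$ only the $g_{i_0,j}$ contribute, and successively extracting the coefficient of $x^{\alpha_{i_0}}\log(x)^{k}$ for $k=0,1,\ldots,\mu_{i_0}-1$ forces $c_{i_0,1}=\cdots=c_{i_0,\mu_{i_0}}=0$ by the upper-triangular structure built into the $g_{i_0,j}$, a contradiction.

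For completeness I would use the dimension bound $\dim_C V_{C[[[x^{1/v}]]]}(L)\le\deg\ind_0(L)=\sum_i\mu_i$, which is implicit in the cited theorems: any nonzero element of $V_{C[[[x^{1/v}]]]}(L)$ has a local exponent that is a root of the indicial polynomial, and the Frobenius process shows that each root $\alpha_i$ accounts for at most $\mu_i$ independent solutions. The $\sum_i\mu_i$ independent $g_{i,j}$ therefore saturate the bound and form a basis. The main obstacle is the resonance case in which two distinct roots differ by an element of $\tfrac{1}{v}\bN$ and force logarithmic terms to appear in solutions that would naively be log-free; this is precisely what Theorems 3.38(2) and 3.45 of \cite{kauers23} handle, and the normalization step above absorbs the resulting entanglements so that the final basis takes the clean form stated.
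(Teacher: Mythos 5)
The paper does not actually prove this lemma: it states that the result ``can be obtained from further computations based on'' \cite[Theorem 3.38(2), Theorem 3.45]{kauers23} and leaves it at that. Your outline is the natural way to supply those further computations, and its second and third steps are sound: the minimal-exponent argument for independence is correct (exponents of $g_{i,j}$ lie in $\alpha_i+\tfrac1v\bN$, so a $\preceq$-minimal contributing root isolates its own block, and the staircase of lowest log powers $0,1,\dots,\mu_{i_0}-1$ then kills the coefficients one by one), and completeness does follow once one knows $\dim_C V_{C[[[x^{1/v}]]]}(L)\le\deg\ind_0(L)$, which is the standard count of regular solutions at a singular point.

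The soft spot is your very first step. You assert that the cited theorems already yield, for each root $\alpha_i$, $\mu_i$ solutions whose coefficients at the leading exponent $\alpha_i$ form a \emph{full-rank} $\mu_i\times\mu_i$ matrix in $1,\log(x),\dots,\log(x)^{\mu_i-1}$. That full-rank statement is precisely the content of the lemma: the paper's own preceding sentence only guarantees initial terms $x^{\alpha_i}\log(x)^{\gamma}$ ``for some $\gamma$'', and if the $\mu_i$ solutions attached to $\alpha_i$ all had lowest log power $\ge 1$ at exponent $\alpha_i$ (or had coinciding lowest log powers), no column reduction inside the group could manufacture the missing low powers --- elimination can only raise the initial term, never lower it. So either the cited theorems really do deliver the full-rank normal form (in which case your reduction is redundant and the lemma is immediate), or you must open up the Frobenius construction itself: the $\mu_i$ solutions arise as successive derivatives $\partial_s^j\bigl(x^s\sum_k c_k(s)x^k\bigr)\big|_{s=\alpha_i}$, $j=0,\dots,\mu_i-1$, of a parametrized series whose leading coefficient $c_0(s)$ is a unit at $s=\alpha_i$, and it is this explicit form that forces the coefficient of $x^{\alpha_i}$ in the $j$-th solution to be $\log(x)^j$ plus lower log powers, even in the resonant case. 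As written, the resonance difficulty you correctly identify at the end is deferred back to the citation rather than resolved, which is no worse than what the paper does, but it means the decisive step of your proof is an appeal rather than an argument.
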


The operator $L$ is called {\em Fuchsian} if all its singularities in $C\cup\{\infty\}$ are regular. Let $L\in C(x)[D]$ be a Fuchsian operator. For each $\xi\in C\cup\{\infty\}$, let
\begin{equation}\label{EQ: S_xi(L)-fuchsian}
	S_\xi (L):= \sum_{j=1}^r e_j(\xi) - \frac{r(r-1)}{2}
\end{equation}
where the numbers $e_j(\xi)$ are the local exponents of $L$ at $\xi$ (they are the roots of the indicial polynomial of $L$ at $\xi$). If $\xi$ is an ordinary point, then, by Lemma~\ref{lem:ordinary_fund_sys}, we have $S_\xi(L) =0$. 
The Fuchs relation (see~\cite[\S 15.4]{Ince1926} or~\cite[\S 20]{Poole1936}) states that
\begin{equation}\label{EQ: fuchs rel}
	\sum_{\xi\in \Sing(L) \cup \{\infty\}} S_\xi (L) = -r(r-1)
\end{equation}
where $\Sing(L)$ is the set of singularities of $L$ in $C$. 

\begin{example}\label{ex:2*3=4,Q}
	The operator $Q=(x-1)^3D^3 - 3(x-1)^2D^2 + 6(x-1)D - 6\in \bC(x)[D]$ is Fuchsian, with two regular singularities at $1$ and $\infty$. At the point $1$, the indicial polynomial \[\ind_1(Q) = (s-1)(s-2)(s-3)\] has degree 3, corresponding to the series solutions $(x - 1), (x - 1)^2, (x - 1)^3$ with local exponents $1$, $2$, $3$ respectively. At the point $\xi=\infty$, the indicial polynomial \[\ind_\infty(Q) = (s+3)(s+2)(s+1)\] also has degree 3, corresponding to the series solutions $x^3\,(1+O(\frac{1}{x})), x^2(1+O(\frac{1}{x})), x(1+O(\frac{1}{x}))$ with local exponents $-3$, $-2$, $-1$ respectively. Thus, for the operator $Q$:
	\begin{align*}
		S_1(Q) &= 1+2+3 - 3 = 3,\\
		S_\infty(Q) &= -3-2-1 -3 = -9,
	\end{align*}
	and the Fuchs relation~\eqref{EQ: fuchs rel} reduces to
	\[3 - 9 = - 6.\]
\end{example}

Let $L = \ell_0 + \ell_1 D +\cdots + \ell_r D^r \in C[x][D]$. For each term $x^j$ occurring in $\ell_i(x)$, draw a halfline in the plane that starts at $(i, j)$ and continues in the direction $(-1, -1)$, and determine the convex hull of all these halflines. The boundary of this convex hull is called the \emph{Newton polygon} of $L$ at $0$. Then every slope $1-c$ width $w$ corresponds to $w$ linearly independent solutions with an exponential part in the form $\exp(p(x^{-1}))$ with $\deg(p) =-c$ for some $p\in\bigcup_{v\in\bN\setminus\{0\}} C[x^{1/v}]$, see~\cite[\S 3.4]{kauers23}. 

Suppose that $f(x) = \exp(p(x^{-1})) g(x)$ is a solution of $L$ for some series $g(x)$, where $p\in C[x^{1/v}]$ with $v\in \bN\setminus\{0\}$. Then $g(x)$ is a solution of $\tilde L = \exp(-p(x^{-1}))\,L\,\exp(p(x^{-1}))$. 
\begin{definition}\label{Def: generalized indicial poly}
	Let $L\in C[x][D]$ and $p\in C[x^{1/v}]$ with $v\in \bN\setminus\{0\}$. The \emph{generalized indicial polynomial} of $L$ at $0$ and $\exp(p(x^{-1}))$, denoted by $\ind_{0,\exp(p(x^{-1}))}(L)$, is defined as the indicial polynomial of $\tilde L = \exp(-p(x^{-1}))\,L\,\exp(p(x^{-1}))\in C[x^{1/v},x^{-1/v}][D]$. When $p(x^{-1})=0$, the generalized indicial polynomial coincides with the classical indicial polynomial of $L$ at $0$.
\end{definition}

Let $L$ be an operator of order $r$. For each $\xi \in C\cup\{\infty\}$, $S_\xi(L)$ is defined as before:
\begin{equation}\label{EQ: S_xi(L)}
S_\xi (L):= \sum_{j=1}^r e_j(\xi) - \frac{r(r-1)}{2},
\end{equation} 
but now $e_j(\xi)$ are the generalized local exponents of $L$ at $\xi$, see~\cite[p. 297]{Corel01} or~\cite[\S 3.3]{Aldossari20} for their definition (they are the roots of the generalized indicial polynomials of $L$ at $\xi$). Let 
\begin{equation}\label{EQ: I_xi(L)}
	I_\xi(L) := 2 \sum_{1\leq i<j\leq r} \deg(p_i - p_j)
\end{equation}
where $p_i$ are the exponential parts of $L$ at $\xi$. If $\xi$ is a regular singularity, then $I_\xi(L) =0$. The generalized Fuchs relation (see~\cite{Bertrand99, Corel01} or~\cite[\S 3.5]{Aldossari20}) states that
\begin{equation}\label{EQ: fuchs rel generalized}
	\sum_{\xi\in \Sing(L) \cup \{\infty\}} (S_\xi (L) - \frac{1}{2} I_{\xi}(L))= -r(r-1).
\end{equation}

\begin{example}\label{ex:2*2=4,Q}
	The operator $Q=(x-1)D^2 +xD-1\in \bC(x)[D]$ is non-Fuchsian. It has two singularities at $1$ and $\infty$. The point $1$ is regular and apparent. Its indicial polynomial \[\ind_1(Q) = s(s-2)\] has degree 2, corresponding to the series solutions $1+O(x - 1), (x - 1)^2+O((x - 1)^3)$ with local exponents $0$, $2$ respectively. Its Newton polygon at $0$ has one edge of slop $1$ and width $2$. The point $\xi=\infty$ is irregular. Its generalized indicial polynomials are \[\ind_{\infty, \exp(0)}(Q) = s+1,\quad\ind_{\infty, \exp(x)}(Q) = -s,\] 
	corresponding to series solutions $x, \exp(x)$ with generalized local exponents are $-1$, $0$ respectively. Its Newton polygon at $\infty$ (i.e. the Newton polygon at $0$ of the operator obtained by substituting $x\mapsto x^{-1}$ into $Q$) has two edges: one of slop $1$ and width $1$, and one of slope $2$ and width $1$. For the operator $Q$,
	\begin{align*}
		S_1(Q) &= 0+2 - 1 = 1,\\
		S_\infty(Q) &= -1 + 0 -1 = -2,\\
		I_\infty(Q) &= 2\cdot 1= 2,
	\end{align*}
	and the generalized Fuchs relation~\eqref{EQ: fuchs rel generalized} reduces to
	\[1 - 2 -  1 = - 2.\]
\end{example}

\section{The colon space}\label{sec:colonspace}
For two ideals $I,J$ of a commutative ring $R$, the set
\[(I:J) := \{r\in R \mid r J \subseteq I\}\]
is called the \emph{ideal quotient} (or \emph{colon ideal}) of $I$ by $J$. If $R$ is a polynomial ring in several variables over~$C$, one can compute a Gr\"obner basis of a colon ideal, see details in~\cite[\S 4.4]{CoxJohnDonal15-book}. 

In this section, let $K$ be a ring extension of $C$. Then $K$ is naturally a $C$-algebra, i.e., a $C$-vector space equipped with a compatible ring structure. A typical choice for $K$ is the ring of formal power series $C[[x]]$ or the field of formal Laurent series $C((x))$. As an analog of colon ideals, we introduce the following notion.
\begin{definition}\label{def: colon space}
	Let $V$ be a $C$-vector subspace of $K$ and $U$ be a subset of $K$. The \emph{colon space} of \,$V$ by $U$ in $K$ is defined as the set
	\[(V: U) :=\{h\in K \mid hU \subseteq V\},\]
	which is a $C$-vector subspace of $K$.
\end{definition}
The solution space of a symmetric quotient is contained in the corresponding colon space.

\begin{lemma}\label{lem: V(Q) subseteq V(M):V(L)}
	Let $L,Q, M\in C(x)[D]$ be such that $L\otimes Q$ is a right factor of $M$. Let $F$ be a differential ring extension of $C(x)$ and let $(V_F(M): V_F(L))$ be the colon space in $F$. Then $V_{F}(Q)\subseteq (V_F(M): V_F(L))$.
\end{lemma}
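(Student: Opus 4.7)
The plan is to unwind the definition of the colon space and then apply the defining property of the symmetric product together with the factorization $M = R\,(L\otimes Q)$ coming from the right-factor assumption.

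First I would fix an arbitrary $h \in V_F(Q)$ and aim to show $h \in (V_F(M):V_F(L))$, which by Definition~\ref{def: colon space} amounts to showing $hf \in V_F(M)$ for every $f \in V_F(L)$. So I take an arbitrary $f\in V_F(L)$, i.e.\ $L\cdot f = 0$ and $Q\cdot h=0$.

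Next I would invoke the defining property of the symmetric product: for any solution $f$ of $L$ and any solution $h$ of $Q$ in a differential extension, the product $fh$ is annihilated by $L\otimes Q$. This is the whole point of how $L\otimes Q$ is constructed (it is essentially the analogue for differential rings of Lemma~\ref{lem:sol-lcml-sym}\,\ref{it:sol-sym}; the identity $(L\otimes Q)\cdot(fh)=0$ holds formally in any differential ring extension of $C(x)$, not just in fields with the same constants, because it comes from the construction of $L\otimes Q$ via a linear dependence among the derivatives of a generic product of solutions). Hence $fh \in V_F(L\otimes Q)$.

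Since by hypothesis $L\otimes Q$ is a right factor of $M$, there exists $R\in C(x)[D]$ with $M = R\,(L\otimes Q)$. Applying this to $fh$ gives
\[
M\cdot (fh) \;=\; R\cdot\bigl((L\otimes Q)\cdot (fh)\bigr) \;=\; R\cdot 0 \;=\; 0,
\]
so $fh \in V_F(M)$. Since $f\in V_F(L)$ was arbitrary, we get $h\,V_F(L)\subseteq V_F(M)$, hence $h \in (V_F(M):V_F(L))$, and since $h\in V_F(Q)$ was arbitrary, $V_F(Q)\subseteq (V_F(M):V_F(L))$.

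There is no real obstacle here: the only subtle point is making sure the defining solution property of $L\otimes Q$ is available in the ring setting of $F$ (not a field with the same constants, as in Lemma~\ref{lem:sol-lcml-sym}). This is fine because we only need one implication --- that products of solutions are annihilated by $L\otimes Q$ --- which is built into the definition of the symmetric product and holds in any differential ring extension of $C(x)$.
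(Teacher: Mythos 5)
Your proof is correct and follows essentially the same route as the paper's: unwind the colon-space definition, use the defining property of the symmetric product to get $fh\in V_F(L\otimes Q)$, and then pass through the factorization $M=R\,(L\otimes Q)$ to land in $V_F(M)$. The paper compresses the last two steps into the single phrase ``by the definition of symmetric products,'' so your version is just a more explicit writing of the same argument, including the (correct) remark that the one implication needed from the symmetric product holds in any differential ring extension of $C(x)$.
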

\begin{proof}
	For any $h\in V_F(Q)$, we have $h\in F$ and $h V_F(L)\subseteq V_F(M)$ by the definition of symmetric products. This implies  $h\in (V_F(M): V_F(L))$ by the definition of the colon space.
\end{proof}

The colon space satisfies the following basic properties, analogous to those of colon ideals in the polynomial ring, (see~\cite[Proposition 13 and Theorem 14 in \S 4.4]{CoxJohnDonal15-book}).
\begin{prop}\label{prop: colon properties}  Let $g$ be an invertible element in $K$ and let $V, U, U_1,\ldots, U_r$ be $C$-vector subspaces of~$K$. Then
	\begin{enumerate}[label=(\roman*)]
		\item\label{it:colon1}  $(V: \{g\})=\{f/g \mid f\in V\}$.
		\item\label{it:colon2}  If $\{f_1,\ldots,f_n\}$ generates $V$ as a $C$-vector space, then $(V: \{g\})$ is generated by $\left\{\frac{f_1}{g},\ldots,\frac{f_n}{g}\right\}$.
		\item\label{it:colon3} If $\{g_1,\ldots,g_r\}$ generates $U$ as a $C$-vector space, then $(V: U)=(V: \{g_1,\ldots,g_r\})$.
		\item\label{it:colon4} $(V: (\sum_{j=1}^rU_j))=\bigcap_{j=1}^r (V: U_j)$. 
	\end{enumerate}
\end{prop}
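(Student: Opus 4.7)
My plan is to prove all four parts by unwinding the definition of the colon space and using that $K$ is a $C$-algebra, so multiplication by any fixed element of $K$ is a $C$-linear endomorphism of~$K$. The statements are essentially ``diagonal'' consequences of this linearity, with part~(i) supplying the basic identification that the remaining three parts lean on. I expect no substantial obstacle; the only thing to keep track of is that we use both that $V$ is closed under $C$-linear combinations and that $C$ sits centrally in~$K$.

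For part~(i), I would simply note that $h \in (V:\{g\})$ means $hg \in V$, and since $g$ is invertible in~$K$, this is equivalent to $h = (hg)g^{-1} = f/g$ for some $f \in V$. Conversely, if $f \in V$ then $(f/g)\cdot g = f \in V$, so $f/g \in (V:\{g\})$. Part~(ii) is then immediate from (i) together with the fact that the map $V \to K$, $f \mapsto f/g$, is $C$-linear, so the image of a spanning set is a spanning set of the image.

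For part~(iii), the inclusion $(V:U) \subseteq (V:\{g_1,\dots,g_r\})$ is trivial since $\{g_1,\dots,g_r\}\subseteq U$. For the reverse inclusion, I would take any $h$ with $hg_i \in V$ for all $i$, then for an arbitrary $u = \sum_i c_i g_i \in U$ with $c_i \in C$ compute
\[
  hu \;=\; h\sum_{i=1}^r c_i g_i \;=\; \sum_{i=1}^r c_i (hg_i) \;\in\; V,
\]
using that $C$ is central in $K$ and that $V$ is a $C$-subspace. For part~(iv), I observe that any $u \in \sum_{j=1}^r U_j$ can be written as $u = u_1+\cdots+u_r$ with $u_j \in U_j$, so $hu = hu_1+\cdots+hu_r$ lies in $V$ as soon as each $hu_j$ does; this gives the inclusion $\bigcap_j (V:U_j) \subseteq (V:\sum_j U_j)$, while the reverse inclusion is clear from $U_j \subseteq \sum_j U_j$.

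The only point requiring any care is the implicit use of the $C$-algebra structure, in particular centrality of $C$, in the displayed computation above and in the linearity argument of~(ii); everywhere else the reasoning mirrors the analogous statements for colon ideals in a polynomial ring, with ``ideal'' replaced by ``$C$-subspace'' and the ambient ring replaced by the $C$-algebra~$K$.
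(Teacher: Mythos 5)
Your proposal is correct and follows essentially the same route as the paper: part (i) from the definition and invertibility of $g$, part (ii) from (i) plus $C$-linearity of $f\mapsto f/g$, part (iii) by writing an arbitrary element of $U$ as a $C$-linear combination of the $g_i$ and using that $V$ is a $C$-subspace, and part (iv) by the evident two inclusions (the paper phrases it as a chain of equivalences). Your explicit remark about the centrality of $C$ in $K$ is a harmless extra precaution that the paper leaves implicit.
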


\begin{proof}
	\begin{enumerate}[label=(\roman*)]
		\item follows from the definition of the colon space.
		\item By item~\ref{it:colon1}, $\frac{f_i}{g}$ belongs to $(V: \{g\})$ for all $1\leq i \leq n$. For any $h\in (V: \{g\})$, again by~\ref{it:colon1}, there exists $f\in V$ such that $h=\frac{f}{g}$. Since $f$ is a $C$-linear combination of $f_1,\ldots,f_n$, it follows that $h=\frac{f}{g}$ is a $C$-linear combination of $\frac{f_1}{g},\ldots, \frac{f_n}{g}$.
		\item Since $\{g_1,\ldots, g_r\}$ is a subset of $U$, by the definition of colon spaces we obtain that $(V: U)$ is a subset of $(V: \{g_1,\ldots, g_r\})$. Conversely, suppose $h\in (V: \{g_1,\ldots,g_r\})$. Then by definition, $hg_i\in V$ for all $1\leq i \leq n$. Every element $g$ in~$U$ can be written as a $C$-linear combination $g=\sum_{i=1}^rb_ig_i$ with $b_1,\ldots, b_r \in C$. Thus $hg = \sum_{i=1}^r b_i (hg_i)\in V$ because $V$ is a $C$-vector space. By the arbitrariness of~$g$, we have $hU\subseteq V$. Therefore $h\in (V: U)$. 
		\item For every $h\in V$, we have $h\in (V: (\sum_{j=1}^rU_j))\Leftrightarrow h(\sum_{j=1}^rU_j)\subseteq V\Leftrightarrow
		 \forall \,1\leq j\leq r, \,hU_j\subseteq V\Leftrightarrow h\in~\bigcap_{j=1}^r(V: U_j)$.\qedhere
	\end{enumerate}
\end{proof}

\begin{corollary}\label{cor: colon}
	Let $V=\Span_C\{f_1,\ldots,f_n\}$ and $U=\Span_C\{g_1,\ldots g_r\}$ be two $C$-vector subspaces of $K$. If $g_1,\ldots,g_r$ are invertible elements of $K$, then
	\[(V: U)= \bigcap_{i=1}^r (V: \{g_i\}) = \bigcap_{i=1}^r \Span_C\left\{\frac{f_1}{g_i}, \ldots, \frac{f_n}{g_i}\right\}.\]
\end{corollary}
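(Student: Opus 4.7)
The plan is to derive the two equalities as a straightforward combination of the items of Proposition~\ref{prop: colon properties}, so essentially no new ideas are needed beyond bookkeeping. There is no real obstacle; the only subtlety is to arrange the given decomposition $U = \Span_C\{g_1,\dots,g_r\}$ as a sum of one-dimensional subspaces so that item~\ref{it:colon4} applies.

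For the first equality, I would first set $U_i := \Span_C\{g_i\}$ for $1\leq i\leq r$, so that $U = \sum_{i=1}^r U_i$. Applying item~\ref{it:colon4} of Proposition~\ref{prop: colon properties} then gives
\[
(V:U) = \Bigl(V:\sum_{i=1}^r U_i\Bigr) = \bigcap_{i=1}^r (V:U_i).
\]
Since the singleton $\{g_i\}$ generates $U_i$ as a $C$-vector space, item~\ref{it:colon3} yields $(V:U_i) = (V:\{g_i\})$ for each~$i$. Combining these identifications produces the first equality of the corollary.

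For the second equality, I would apply item~\ref{it:colon2} of Proposition~\ref{prop: colon properties} to each factor $(V:\{g_i\})$. Since by hypothesis each $g_i$ is invertible in~$K$, and $\{f_1,\ldots,f_n\}$ generates $V$ as a $C$-vector space, item~\ref{it:colon2} gives
\[
(V:\{g_i\}) = \Span_C\Bigl\{\frac{f_1}{g_i},\ldots,\frac{f_n}{g_i}\Bigr\}
\]
for every $1\leq i\leq r$. Substituting this into the intersection obtained in the previous step completes the proof. No further argument is required, and the invertibility of the $g_i$ enters exclusively through the use of item~\ref{it:colon2}.
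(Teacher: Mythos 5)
Your proposal is correct and follows exactly the same route as the paper's proof: decompose $U$ as $\sum_{i=1}^r \Span_C\{g_i\}$, then chain items \ref{it:colon4}, \ref{it:colon3}, and \ref{it:colon2} of Proposition~\ref{prop: colon properties} in that order. Nothing is missing.
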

\begin{proof}
	For each $i=1,\ldots, r$, let $U_i = \Span_C\{g_i\}$. Then $U= U_1 + \cdots + U_r$. By Proposition~\ref{prop: colon properties}, we have
	\[(V: U) = (V: (U_1+\cdots+U_r)) \overset{\ref{it:colon4}}{=}  \bigcap_{i=1}^r (V: U_i) \overset{\ref{it:colon3}}{=}  \bigcap_{i=1}^r (V: \{g_i\}) \overset{\ref{it:colon2}}{=}  \bigcap_{i=1}^r \Span_C\left\{\frac{f_1}{g_i}, \ldots, \frac{f_n}{g_i}\right\}. \qedhere\]
\end{proof}



\begin{example}\label{Ex:colon}
	Let $L=x^2 D^2 -2xD+2, Q=x^3D^3-3x^2D^2+6x-6\in\bC(x)[D]$, and \[M=L\otimes Q = x^4D^4 - 8xD^3 +36x^2D^2 -96xD+120.\] 
	We consider the solution spaces of these operators in $K=\bC((x))$:
	\begin{equation}\label{eq:example-colon-sol}
		V(L)=\Span_\bC\{x,x^2\},\quad V(Q)=\Span_\bC\{x,x^2,x^3\},\quad V(M) = \Span_\bC\{x^2,x^3,x^4,x^5\}.
	\end{equation}
	By Corollary~\ref{cor: colon}, we have
	\begin{align*}
		(V(M): V(L)) &=(V(M): \{x\})\cap(V(M): \{x^2\})\\
		&=\Span_\bC\{x,x^2,x^3,x^4\}\cap\Span_\bC\{1,x,x^2,x^3\}\\
		&= \Span_\bC\{x,x^2,x^3\}.
	\end{align*}
	By Lemma~\ref{lem: V(Q) subseteq V(M):V(L)}, the solution space $V(Q)$ of the quotient $Q$ is contained in the colon space $(V(M): V(L))$. In this example, we have equality $V(Q)=(V(M): V(L))$.
\end{example}
	 %

\section{Symmetric division algorithm}\label{sec:divisionalg}
\subsection{The maximal symmetric quotient}
Unlike polynomial division,  symmetric division may admit infinitely many quotients. Even the order of the quotient may not be unique. If $L,M\in C(x)[D]$ are of positive order such that $M=L\otimes Q$ for some $Q\in C(x)[D]$, it is known that 
\begin{equation}\label{eq:upper-lower-orders-sym}
	\ord(L)+\ord(Q)-1\leq\ord(M)\leq \ord(L)\ord(Q).
\end{equation}
Since $\ord(L)\neq0$, this implies
\begin{equation}\label{eq:upper-lower-orders}
	(\ord(M)/\ord(L))\leq \ord(Q)\leq \ord(M)-\ord(L)+1.
\end{equation}
Therefore, only finitely many orders can appear for the symmetric quotients. 

\begin{example}
	Let $L,Q,M\in \bC(x)[D]$ be the same as in Example~\ref{Ex:colon}. Let 
	\[Q_\alpha=(-\alpha x^2-2x^3)D^2 +(2\alpha x +6x^2)D+(-2\alpha-6x),\] where $\alpha\in\bC$. These operators have enough solutions in $\bC((x))$. The solution spaces $V(L)$, $V(Q)$, $V(M)$ are listed in~\eqref{eq:example-colon-sol}. The solution space of $Q_\alpha$ in $\bC((x))$ is
	\[V(Q_\alpha)=\Span_\bC\{x,x^3+\alpha x^2\}.\]
	Since
	\begin{align*}
		V(M)&=\Span_\bC\{\,gh\mid g\in V(L)\text{ and } h\in V(Q)\}=\Span_\bC\{\,gh\mid g\in V(L)\text{ and } h\in V(Q_\alpha)\},
	\end{align*}
	it follows that
	\[M=L\otimes Q=L\otimes Q_\alpha \quad\text{for all $\alpha\in \bC.$}\] 
	Thus, for all $\alpha\in \bC$, the operator $Q_\alpha$ is a second-order quotient of $M$ by $L$ with respect to symmetric product. The operator $Q$ is also a quotient but of order three. By~\eqref{eq:upper-lower-orders}, $Q$ attains the maximal order among all symmetric quotients of 
	$M$ by $L$. In this example, 
	\[V(Q_\alpha)\subseteq V(Q)\quad\text{and}
	\quad Q= R_\alpha Q_\alpha,\]
	where $R_\alpha=(-\frac{x}{\alpha+2x}D+\frac{3}{\alpha + 2x})\in\bC(x)[D]$. Therefore $Q$ is a left multiple of $Q_\alpha$ for all $\alpha\in \bC$. 
\end{example}




	If $L=0$, then $L\otimes Q = 0$ for any $Q\in C(x)[D]$. If $L\in C(x)\setminus\{0\}$, then  $L\otimes Q= 1$ for any $Q\in C(x)[D]$. To avoid such degenerate cases, we consider only operators of positive order in symmetric division. In this section, we prove that maximal-order symmetric quotients are unique up to left multiplication by nonzero rational functions.
	

\begin{prop}\label{prop: maximal quasi quotient}
	Let $L,M\in C(x)[D]$ be of positive order. Then there exists a unique primitive operator $Q\in C(x)[D]$ of maximal order such that $L\otimes Q$ is a right factor of $M$. Moreover, this operator $Q$ is a least common left multiple (lclm) of all operators $P$ such that $L\otimes P$ is a right factor of $M$.
\end{prop}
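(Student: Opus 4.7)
The plan is to introduce the set $S=\{P\in C(x)[D]\setminus\{0\}\mid L\otimes P\text{ is a right factor of }M\}$ and to show, via Corollary~\ref{cor:lclm-quotient-2}, that a maximum-order element $Q$ of $S$ absorbs every other element on the right. Both assertions of the proposition will then follow at once: the absorption makes $Q$ a common left multiple of everything in $S$, and since $Q\in S$ itself, $Q$ has the minimal order among such common left multiples, hence is an lclm; uniqueness of the primitive form is an immediate byproduct of the same absorption argument.

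First I would check that $S$ is nonempty and that its orders are bounded. Nonemptiness is clear because any nonzero element of $C$ lies in $S$ (its solution space is trivial, so $L\otimes c$ collapses to a constant, which is a right factor of every operator). For the bound, I would apply~\eqref{eq:upper-lower-orders-sym} to $L\otimes P$ regarded as a right factor of $M$: the estimate $\ord(L\otimes P)\le\ord(M)$ together with $\ord(L\otimes P)\ge\ord(L)+\ord(P)-1$ yields $\ord(P)\le\ord(M)-\ord(L)+1$. Consequently the maximum $n=\max\{\ord(P)\mid P\in S\}$ is attained; fix some $Q\in S$ with $\ord(Q)=n$.

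The core step is absorption. For an arbitrary $P\in S$, Corollary~\ref{cor:lclm-quotient-2} gives $\lclm(Q,P)\in S$, so maximality forces $\ord(\lclm(Q,P))\le n$; on the other hand $Q$ is a right factor of $\lclm(Q,P)$, so $\ord(\lclm(Q,P))\ge n$. Equality of orders together with $Q$ being a right factor forces $\lclm(Q,P)=\beta Q$ for some nonzero $\beta\in C(x)$, and since $P$ is itself a right factor of $\lclm(Q,P)$ by definition, $P$ is a right factor of $Q$. This proves that $Q$ is a common left multiple of every element of $S$. Any other common left multiple of $S$ is in particular a left multiple of $Q\in S$, hence has order at least $n$, so $Q$ is of minimal order among common left multiples—i.e., an lclm of $S$. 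For uniqueness of the primitive representative, I would run the same absorption with a second maximal-order $Q'\in S$: each of $Q,Q'$ turns out to be a right factor of the other, so they differ by a nonzero rational function, and the primitivity normalization removes this ambiguity.

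The main—and essentially only—conceptual obstacle is locating the right tools: recognizing that Corollary~\ref{cor:lclm-quotient-2} is precisely the absorption mechanism required, and that~\eqref{eq:upper-lower-orders-sym} guarantees the existence of a maximum order. Once these are in hand, the proof is a short chain of elementary implications and no delicate analysis is required.
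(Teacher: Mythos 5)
Your proposal is correct and follows essentially the same route as the paper: nonemptiness of the candidate set via the trivial quotient, finiteness of the possible orders via \eqref{eq:upper-lower-orders}, and then the absorption argument through Corollary~\ref{cor:lclm-quotient-2} to show that a maximal-order element is a left multiple of every other element, which yields both uniqueness of the primitive representative and the lclm property. Your explicit verification that $Q$ has minimal order among all common left multiples (because $Q$ itself lies in the set) is a small point the paper leaves implicit, but the substance of the argument is identical.
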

\begin{proof}
	Let \[\delta:=\max\{\ord(P)\mid P\in C(x)[D],\, L\otimes P \text{ is a right factor of } M\}.\] This set of orders is non-empty because for any $L,M\in C(x)[D]$ of positive order, $L\otimes 1 = 1$ is a trivial right factor of $M$. By~\eqref{eq:upper-lower-orders}, if $M_0:=L\otimes P$ is a right factor of $M$ for some $P\in C(x)[D]$, then 
	\[\ord(P)\leq \ord(M_0)-\ord(L)+1\leq \ord(M)-\ord(L)+1.\] Hence the set of $\ord(P)$ is finite and therefore it admits a maximum value $\delta$.
	
	Let $P_1,P_2\in C(x)[D]$ be operators of order $\delta$ such that  $L\otimes P_1$ and $L\otimes P_2$ are right factors of~$M$. Suppose that both $P_1$ and $P_2$ are primitive operators but $P_1\neq P_2$. By Corollary~\ref{cor:lclm-quotient-2}, we obtain that  $L\otimes P$ is a right factor of $M$, where $P=\lclm(P_1,P_2)$. However, $\ord(P)>\ord(P_1)=\delta$, which contradicts the maximality of~$\delta$. So there exists a unique primitive operator $Q\in C(x)[D]$ of order $\delta$ such that $L\otimes Q$ is a right factor of $M$.
	
	Suppose that $L\otimes P$ is a right factor of $M$ for some $P\in C(x)[D]$. We need to show that $Q$ is a left multiple of $P$. Using Corollary~\ref{cor:lclm-quotient-2} again, we obtain that  $P_0:=\lclm(P,Q)$ is a right factor of $P$. If $Q$ is not a left multiple of $P$, then $\ord(P_0)>\ord(P) = \delta$, which contradicts the maximality of~$\delta$.
\end{proof}

\begin{definition}
	Let $L,M\in C(x)[D]$ be of positive order. The \emph{(global) quasi-symmetric quotient} of $M$ by $L$, denoted by $\qsquo(M,L)$, is defined as the unique primitive operator $Q\in C(x)[D]$ of maximal order such that $L\otimes Q$ is a right factor of $M$. 
\end{definition}

By Proposition~\ref{prop: maximal quasi quotient}, a quasi-symmetric quotient exists and is unique. In particular, let $L, M\in C(x)[D]$ be of positive order, and suppose that $M=L\otimes P$ for some $P\in C(x)[D]$. Then there exists a quasi-symmetric quotient $Q$ of $M$ by $L$. By Corollary~\ref{cor:lclm-quotient-2}, $L\otimes Q=L\otimes \lclm(Q,P)=\lclm(L\otimes Q, L\otimes P) =M$.
So~$Q$ is a symmetric quotient, i.e., the unique primitive operator of maximal order such that $L\otimes Q = M$.


\subsection{Overview of the algorithm}

Given two operators $L,M\in C(x)[D]$ of positive order, we want to find the quasi-symmetric quotient $Q$ of $M$ by $L$. To do this, we first search for the power series solutions of $Q$. Then we try to recover the coefficients of $Q$ from its solution space by solving a linear system over $C$. In this section, we work with solution spaces in the field of formal Laurent series $C((x))$. After change of variables, we may assume without loss of generality that $0$ is an ordinary point of both $L$ and $M$. Then a new upper bound for the orders of symmetric quotients is given as follows.

\begin{prop}\label{prop:order}
	Let $L, M\in C(x)[D]$ be of positive order such that $L\otimes Q$ is a right factor of $M$ for some $Q\in C(x)[D]$. Let $V(L), V(Q),V(M)$ be the solution spaces of $L,Q,M$ in $C((x))$, respectively. 	Let $(V(M): V(L))$ be the colon space in $C((x))$. If $0$ is an ordinary point of $L$ and $M$, then
	\begin{enumerate}[label=(\roman*)]
		\item\label{it:quo1} $0$ is either an ordinary point or an apparent singularity of $Q$.
		\item\label{it:quo2} $V(Q)\subseteq (V(M): V(L))$.
		\item\label{it:quo3} $\ord(Q)=\dim_CV(Q)\leq \dim_C(V(M): V(L))$,
	\end{enumerate}

\end{prop}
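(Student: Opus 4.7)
Part~(ii) is immediate from Lemma~\ref{lem: V(Q) subseteq V(M):V(L)} applied with $F=C((x))$, so nothing further is needed for that item. My overall plan for part~(i) is to pass to a sufficiently large differential ring $\mathcal{G}$ of generalized series at $0$ (of the kind reviewed in Section~\ref{sec:prelim}, containing $C[[[x^{1/v}]]]$ together with the exponential parts) whose constants still equal $C$, and to leverage the existence of a power series solution of $L$ that is a unit in $C[[x]]$.

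Concretely, Lemma~\ref{lem:ordinary_fund_sys} applied to $M$ yields $\ord(M)$ linearly independent solutions of $M$ in $C[[x]]\subseteq\mathcal{G}$; by the dimension bound of Lemma~\ref{lem:sol-rightfactor}\ref{it:rf1} they must already span $V_{\mathcal{G}}(M)$, so $V_{\mathcal{G}}(M)\subseteq C[[x]]$. Since $L\otimes Q$ is a right factor of $M$, Lemma~\ref{lem:sol-rightfactor}\ref{it:rf3} then forces $V_{\mathcal{G}}(L\otimes Q)\subseteq C[[x]]$ with $\dim_{C}V_{\mathcal{G}}(L\otimes Q)=\ord(L\otimes Q)$. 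Lemma~\ref{lem:ordinary_fund_sys} applied to $L$ provides in particular a solution $f_1=1+O(x^{\ord(L)})$, which is a unit in $C[[x]]$. For any generalized series solution $g$ of $Q$ in $\mathcal{G}$, the defining property of the symmetric product gives $f_1 g\in V_{\mathcal{G}}(L\otimes Q)\subseteq C[[x]]$, and multiplying by $f_1^{-1}\in C[[x]]$ shows $g=f_1^{-1}(f_1 g)\in C[[x]]$. Since $\mathcal{G}$ is large enough that $\dim_{C}V_{\mathcal{G}}(Q)=\ord(Q)$, we conclude that $Q$ has $\ord(Q)$ linearly independent solutions in $C[[x]]$, i.e.\ $0$ is an ordinary point or an apparent singularity of $Q$.

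Part~(iii) then follows at once: by (i), $V(Q)\subseteq C[[x]]\subseteq C((x))$ with $\dim_{C}V(Q)=\ord(Q)$, and by (ii), $V(Q)\subseteq (V(M):V(L))$, whence the claimed equality and inequality. The main obstacle is the opening formal step of committing to a single ring $\mathcal{G}$ of generalized series in which all relevant solution spaces and the product $f_1 g$ can be compared simultaneously; once that setup is in place, the entire argument collapses to the one-line identity $g=f_1^{-1}\cdot(f_1 g)$ together with a dimension count. A secondary point of care is that only the definitional half of Lemma~\ref{lem:sol-lcml-sym}\ref{it:sol-sym} (``products of solutions are solutions'') is invoked for $Q$; the full equality would require $\dim V_{\mathcal{G}}(Q)=\ord(Q)$ as an input, which is exactly what we are setting out to establish.
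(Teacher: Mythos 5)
Your proposal is correct and follows essentially the same route as the paper: both arguments work in a differential extension where $Q$ attains a full solution space, use the unit power series solution $f_1=1+O(x)$ of $L$ guaranteed by Lemma~\ref{lem:ordinary_fund_sys}, observe that $f_1g$ lies in $V(M)\subseteq C[[x]]$ because $M$ already has a full set of power series solutions at the ordinary point $0$, and recover $g=f_1^{-1}(f_1g)\in C[[x]]$. The only difference is cosmetic — you argue directly that every generalized series solution of $Q$ is a power series, while the paper phrases the same computation as a proof by contradiction.
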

\begin{proof}
	\begin{enumerate}[label=(\roman*)]
		\item 	Suppose on the contrary that $0$ is a singularity of $Q$ but not an apparent singularity. Then the solution space of $Q$ in $C[[x]]$ has dimension strictly less than $\ord(Q)$. This implies that $Q$ has a solution $h \in F\setminus C[[x]]$, where $F$ is a differential ring extension of $C[[x]]$ with constant field $C$. 
		
		Since $0$ is an ordinary point of $L$, it follows from Lemma~\ref{lem:ordinary_fund_sys} that $L$ has a formal power series solution $g$ of the form $g=1 + O(x)$. Then $g$ is an invertible element in $C[[x]]$.
		
		By the definition of symmetric product, $gh\in F$ is a solution of $L\otimes Q$. Then $gh$ is also a solution of~$M$, because $L\otimes Q$ is a right factor of $M$. Since $0$ is an ordinary point of $M$, it follows from Lemma~\ref{lem:ordinary_fund_sys} that $M$ has $\ord(M)$ linearly independent solutions in $C[[x]]$. The dimension of the solution space of $M$ in $F$ can not exceed its order. Therefore $f := gh\in C[[x]]$, which implies that $h = fg^{-1} \in C[[x]]$. This leads to a contradiction.
		\item follows from Lemma~\ref{lem: V(Q) subseteq V(M):V(L)} by taking $F=C((x))$.
		\item By~\ref{it:quo2}, we obtain $\dim_CV(Q)\leq \dim_C(V(M): V(L))$. Since $0$ is an ordinary point of $M$ and $L$, it follows from~\ref{it:quo1} that $0$ is either an ordinary point of an apparent singularity of~$Q$. In either case, we have $\ord(Q) = \dim_CV(Q)$. \qedhere
	\end{enumerate}
\end{proof}

Before presenting the symmetric division algorithm, we give an example to illustrate its main idea.
\begin{example}\label{ex:2*3=4}
	Let $L,M$ be two operators in $\bC(x)[D]$:
	\begin{align*}
		&L =  (x^2 - 2x + 2)(x - 1)^2D^2 +2x(x - 1)(x - 2)D+2\\
		&M= L\otimes P = (x^2 - 2x + 2)(x - 1)^4D^4-8(x - 1)^3D^3+36(x - 1)^2D^2-96(x -1)D+120,
	\end{align*}
	for some unknown $P\in \bC(x)[D]$. We want to compute a symmetric quotient of $M$ by $L$.
 
We first compute the power series solutions of $L$ and $M$ at the ordinary point $x=0$. The operator $M$ has four linear independent solutions in $\bC[[x]]$:
	\begin{align*}
		f_1&=1 - x - \frac{1}{2}x^2 +0x^3+ \frac{1}{4}x^4 +\frac{1}{4}x^5 +  O(x^6),\\
		f_2&= x -  x^2 -  \frac{1}{2}x^3 + 0x^4+ \frac{1}{4}x^5 + O(x^6),\\
		f_3&= x^2 - x^3 -  \frac{1}{2}x^4 +0x^5 + O(x^6),\\
		f_4&= x^3 - x^4 -  \frac{1}{2}x^5 + O(x^6).
	\end{align*}
	The operator $L$ has two linearly independent solutions in $\bC[[x]]$:
	\begin{align*}
		g_1&=1 +0x- \frac{1}{2}x^2 - \frac{1}{2}x^3 - \frac{1}{4}x^4 +0x^5+ O(x^6),\\
		g_2&= x +0x^2- \frac{1}{2}x^3 - \frac{1}{2}x^4 - \frac{1}{4}x^5 + O(x^6).
	\end{align*}
	Then $V(M)=\Span_\bC\{f_1,f_2,f_3,f_4\}$ and $V(L) = \Span_\bC\{g_1,g_2\}$. 	
	
	Let $(V(M): V(L))$ be the colon space in~$\bC((x))$. By Corollary~\ref{cor: colon}, we have \[(V(M): V(L))= (V(M): \{g_1\})\cap(V(M): \{g_2\})=\Span_\bC\left\{\frac{f_1}{g_1},\frac{f_2}{g_1},\frac{f_3}{g_1},\frac{f_4}{g_1}\right\}\cap\Span_\bC\left\{\frac{f_1}{g_2},\frac{f_2}{g_2},\frac{f_3}{g_2},\frac{f_4}{g_2}\right\}.\]
	Our algorithm in Section~\ref{sec:order} finds $\dim_\bC(V(M): V(L))=3$, with a basis given by
	\begin{align*}
		h_1&=1 - x + 0x^2+0x^3+0x^4+O(x^5),\\
		h_2&= x-x^2 +0x^2+0x^3+0x^4+O(x^5),\\
		h_3&= x^2- x^3 +0x^2+0x^3+0x^4+O(x^5).
	\end{align*}
	By Proposition~\ref{prop:order}.\ref{it:quo2}, if $M=L\otimes Q$ for some $Q\in \bC(x)[D]$, then the solution space $V(Q)$ in $\bC((x))$ is a subspace of $(V(M): V(L))$. We search for an operator $Q$ of order three such that $V(Q)=(V(M): V(L))$. Our algorithm in Section~\ref{sec:degree} shows that the degrees of coefficients of $Q$ are at most $d_0=27$. In this example, we find 
	\[Q=(x-1)^3D^3 - 3(x-1)^2D^2 + 6(x-1)D - 6\]
	It can be verified that $M=L\otimes Q$. Since $\ord(Q)=\dim_\bC(V(M): V(L))$, Proposition~\ref{prop: maximal quasi quotient} implies that $Q$ has maximal order among all symmetric quotients of $M$ by $L$. In fact, $P=	(x - 1)^2D^2 - 3(x - 1)D + 3$ is another symmetric quotient of $M$ by $L$, but of order two. Moreover, $Q= RP$ is a left multiple of $P$, where $R=\frac{1}{(x-1)^2}((x-1)^3D - 2x^2 + 4x - 2)$.
\end{example}

\begin{definition}
	Let $L,M\in C(x)[D]$ be of positive order and let $\xi\in C$ be an ordinary point of both $L$ and $M$. Let $V_\xi(L)$ and $V_\xi(M)$ denote the respective solution spaces of $L$ and $M$ in $C((x-\xi))$ and let $(V_\xi(M): V_\xi(L))$ be the colon space in $C((x-\xi))$. A \emph{local quasi-symmetric quotient} of $M$ by $L$ at $x=\xi$, denoted by $\qsquo(M,L,x=\xi)$, is defined as a primitive operator $Q\in C(x)[D]$ such that the solution space of $Q$ in $C((x-\xi))$ equals $(V_\xi(M): V_\xi(L))$, and $L\otimes Q$ is a right factor of $M$. 
\end{definition}

Throughout the remainder of this paper, all colon spaces $(V(M): V(L))$ are taken in $C((x))$.

\begin{lemma}\label{lem: local-global}
	Let $L,M\in C(x)[D]$ be of positive order, with $0$ an ordinary point of both $L$ and $M$. Then a local quasi-symmetric quotient of $M$ by $L$ at $0$, if it exists, is the global quasi-symmetric quotient.
\end{lemma}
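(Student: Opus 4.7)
The plan is to show that any local quasi-symmetric quotient $Q$ at $0$ must already achieve the maximal order among all primitive operators $P$ with $L\otimes P$ a right factor of $M$, and then invoke the uniqueness clause of Proposition~\ref{prop: maximal quasi quotient} to conclude $Q=\qsquo(M,L)$.

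First I would set $Q_{\max}:=\qsquo(M,L)$ and apply Proposition~\ref{prop:order} to $Q_{\max}$: since $0$ is an ordinary point of $L$ and $M$ and $L\otimes Q_{\max}$ is a right factor of $M$, part~\ref{it:quo3} yields
\[
\ord(Q_{\max})\;=\;\dim_C V(Q_{\max})\;\le\;\dim_C(V(M):V(L)).
\]

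Next I would apply the same proposition to the local quasi-symmetric quotient $Q$ itself. By definition of a local quasi-symmetric quotient, $L\otimes Q$ is a right factor of $M$, so parts~\ref{it:quo1} and~\ref{it:quo3} give that $0$ is either ordinary or an apparent singularity of $Q$, hence $\ord(Q)=\dim_C V(Q)$. But by definition $V(Q)=(V(M):V(L))$, so
\[
\ord(Q)\;=\;\dim_C(V(M):V(L))\;\ge\;\ord(Q_{\max}).
\]

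Finally, since $L\otimes Q$ is a right factor of $M$ and $Q_{\max}$ has maximal order among all such operators, we also have $\ord(Q)\le\ord(Q_{\max})$, so $\ord(Q)=\ord(Q_{\max})$. Because $Q$ is primitive (by definition of local quasi-symmetric quotient) and Proposition~\ref{prop: maximal quasi quotient} guarantees that a primitive operator of this maximal order with $L\otimes Q$ a right factor of $M$ is unique, we conclude $Q=Q_{\max}=\qsquo(M,L)$. There is no real obstacle here; the lemma is essentially a bookkeeping consequence of Proposition~\ref{prop:order} combined with the uniqueness statement in Proposition~\ref{prop: maximal quasi quotient}, and the existence hypothesis on $Q$ is exactly what forces the colon space to be fully realized as a solution space, saturating the dimension bound.
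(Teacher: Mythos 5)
Your proof is correct and follows essentially the same route as the paper: both arguments use Proposition~\ref{prop:order}.\ref{it:quo3} to bound the order of any competitor by $\dim_C(V(M):V(L))$, observe that the local quasi-symmetric quotient attains this bound, and conclude via the uniqueness in Proposition~\ref{prop: maximal quasi quotient}. If anything, you are slightly more careful than the paper in justifying $\ord(Q)=\dim_C V(Q)$ via parts~\ref{it:quo1} and~\ref{it:quo3} of Proposition~\ref{prop:order} rather than attributing it directly to the definition.
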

\begin{proof}
	Suppose there exists a local quasi-symmetric quotient of $M$ by $L$ at $0$, denoted by $Q$. Then $L\otimes Q$ is a right factor of $M$. It suffices to show that $Q$ has maximal order. By Proposition~\ref{prop:order}.\ref{it:quo3}, for any operator $P\in C(x)[D]$ such that $L\otimes P$ is a right factor of $M$, we have $\ord(P)\leq \dim_C(V(M): V(L))$. By the definition of the local quasi-symmetric quotient, $\ord(Q)=\dim_C(V(M):V(L))$. Therefore $\ord(P)\leq \ord(Q)$. Thus $Q$ is the global quasi-symmetric quotient of $M$ by $L$.
\end{proof}

The next lemma leads to an equivalent description of local quasi-symmetric quotients, given in the subsequent corollary.
\begin{lemma}\label{lem: quotient - equivalent condition}
	Let $L,M\in C(x)[D]$ be of positive order, with $0$ an ordinary point of both $L$ and $M$. For any operator $Q\in C(x)[D]$, if $V(Q)\subseteq (V(M): V(L))$, then
	$L\otimes Q$ is a right factor of $M$ if and only if $\dim_CV(Q)=\ord(Q)$.
\end{lemma}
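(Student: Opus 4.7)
The plan is to handle the two directions separately; both follow by composing existing lemmas, so the work is mostly identifying the right combination.

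For the forward direction, assume $L\otimes Q$ is a right factor of~$M$. Since $0$ is an ordinary point of both $L$ and~$M$, Proposition~\ref{prop:order}.\ref{it:quo1} (applied to~$Q$) says that $0$ is either an ordinary point or an apparent singularity of~$Q$. In the ordinary case, Lemma~\ref{lem:ordinary_fund_sys} gives $\ord(Q)$ linearly independent formal power series solutions, while in the apparent case, the definition of apparent singularity gives the same count of solutions in $C[[x]]\subseteq C((x))$. Combined with the general upper bound $\dim_CV(Q)\le\ord(Q)$ from Lemma~\ref{lem:sol-rightfactor}.\ref{it:rf1}, this yields $\dim_CV(Q)=\ord(Q)$. (Alternatively, Proposition~\ref{prop:order}.\ref{it:quo3} already states the equality.)

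For the backward direction, assume $\dim_CV(Q)=\ord(Q)$ and $V(Q)\subseteq(V(M):V(L))$. Since $0$ is ordinary for~$L$, Lemma~\ref{lem:ordinary_fund_sys} gives $\dim_CV(L)=\ord(L)$. Applying Lemma~\ref{lem:sol-lcml-sym}.\ref{it:sol-sym} over the differential field $F=C((x))$ (which has constant field~$C$) then yields both $\dim_CV(L\otimes Q)=\ord(L\otimes Q)$ and
\[
V(L\otimes Q)=\Span_C\{gh\mid g\in V(L),\,h\in V(Q)\}.
\]
The colon space hypothesis $V(Q)\subseteq (V(M):V(L))$ means $hV(L)\subseteq V(M)$ for every $h\in V(Q)$, so every generator $gh$ lies in~$V(M)$, and hence $V(L\otimes Q)\subseteq V(M)$. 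Finally, since $0$ is ordinary for $M$, Lemma~\ref{lem:ordinary_fund_sys} gives $\dim_CV(M)=\ord(M)$ as well, so the hypotheses of Lemma~\ref{lem:sol-rightfactor}.\ref{it:rf2} are met and we conclude that $L\otimes Q$ is a right factor of~$M$.

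No step is really an obstacle here; the only thing to double-check is that $C((x))$ is an admissible field extension for the cited lemmas, i.e.\ that its constant subfield is~$C$, which holds because the derivation $D$ has no nonconstant kernel on Laurent series. The forward direction leans on the already-proved Proposition~\ref{prop:order}, while the backward direction is a clean pipeline: (ordinary at~$0$) + (symmetric-product solution formula) + (right-factor criterion via solution containment).
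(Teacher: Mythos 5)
Your proposal is correct and follows essentially the same route as the paper: the forward direction is exactly Proposition~\ref{prop:order}.\ref{it:quo3}, and the backward direction uses the colon-space containment to place the products $gh$ in $V(M)$, then Lemma~\ref{lem:sol-lcml-sym}.\ref{it:sol-sym} for full dimensionality of $V(L\otimes Q)$, then Lemma~\ref{lem:sol-rightfactor}.\ref{it:rf2}. (The remark that $\dim_C V(M)=\ord(M)$ is not actually needed for Lemma~\ref{lem:sol-rightfactor}.\ref{it:rf2}, whose hypotheses concern only $L\otimes Q$, but this is harmless.)
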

\begin{proof}
		For $Q\in C(x)[D]$, suppose that $V(Q)\subseteq (V(M):V(L))$ and $L\otimes Q$ is a right factor of $M$, Proposition~\ref{prop:order}.\ref{it:quo3} implies $\dim_CV(Q)=\ord(Q)$.
	
	For the converse, suppose $V(Q)\subseteq (V(M):V(L))$ and $\dim_CV(Q)=\ord(Q)$. By Definition~\ref{def: colon space} of the colon space, if $g\in V(L)$ and $h\in V(Q)$, then 
	\begin{equation}\label{eq: qquo proof 1}
		gh=hg\in h V(L)\subseteq V(M),
	\end{equation}
	i.e., $gh$ is a solution of $M$ in $C((x))$. Since $0$ is an ordinary point of $L$, it follows from Lemma~\ref{lem:ordinary_fund_sys} that $\dim_CV(L) = \ord(L)$. By the assumption, we have $\dim_CV(Q)=\ord(Q)$. Then, by Lemma~\ref{lem:sol-lcml-sym} on properties of the symmetric product, we obtain 
	\begin{equation}\label{eq: qquo proof 2}
		V(L\otimes Q) =\Span_C\{gh\mid g\in V(Q), \, h \in V(L)\}\quad \text{and}\quad \dim_C(V(L\otimes Q)) = \ord(L\otimes Q).
	\end{equation}
	Combining \eqref{eq: qquo proof 1} and~\eqref{eq: qquo proof 2} yields that the solution space of $L\otimes Q$ in $C((x))$ has full dimension and is a subspace of~$V(M)$. Thus Lemma~\ref{lem:sol-rightfactor}.\ref{it:rf2} implies that $L\otimes Q$ is a right factor of $M$.
\end{proof}

\begin{corollary}\label{cor: quotient - equivalent condition}
	Let $L,M\in C(x)[D]$ be of positive order, with $0$ an ordinary point of both $L$ and $M$. 
	Then a primitive operator $Q\in C(x)[D]$ is a local quasi-symmetric quotient of $M$ by $L$ at $0$ if and only if 
	\[V(Q) = (V(M): V(L))\quad\text{and}\quad \dim_CV(Q)=\ord(Q).\]
\end{corollary}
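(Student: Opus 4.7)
The plan is to prove the equivalence by combining the definition of a local quasi-symmetric quotient with Lemma~\ref{lem: quotient - equivalent condition}, which is already specifically designed to swap the factorization condition ``$L\otimes Q$ is a right factor of $M$'' with the dimension condition ``$\dim_C V(Q)=\ord(Q)$,'' provided one has the solution-space containment $V(Q)\subseteq (V(M):V(L))$. Since $V(Q)=(V(M):V(L))$ automatically implies this containment, Lemma~\ref{lem: quotient - equivalent condition} will handle both directions symmetrically.

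For the forward implication, I would start by assuming that $Q$ is a local quasi-symmetric quotient of $M$ by $L$ at $0$. By definition, this already gives the first equality $V(Q)=(V(M):V(L))$ and the fact that $L\otimes Q$ is a right factor of~$M$. The containment $V(Q)\subseteq (V(M):V(L))$ is then trivial, so Lemma~\ref{lem: quotient - equivalent condition} applies and yields $\dim_C V(Q)=\ord(Q)$ as the dual statement to ``$L\otimes Q$ is a right factor of $M$.'' (Alternatively, one could invoke Proposition~\ref{prop:order}.\ref{it:quo3}, since $0$ is ordinary for both $L$ and $M$, to get the same dimension equality directly.)

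For the reverse implication, I would assume $V(Q)=(V(M):V(L))$ and $\dim_C V(Q)=\ord(Q)$. From $V(Q)=(V(M):V(L))$ I get the trivial containment $V(Q)\subseteq(V(M):V(L))$, so the hypothesis of Lemma~\ref{lem: quotient - equivalent condition} is met. Together with the dimension equality, that lemma then concludes that $L\otimes Q$ is a right factor of $M$. Combined with $V(Q)=(V(M):V(L))$, this is precisely the definition of a local quasi-symmetric quotient of $M$ by $L$ at $0$, completing the proof.

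There is no real obstacle here: both directions reduce to reading off the content of Lemma~\ref{lem: quotient - equivalent condition} in the special case where the containment in the colon space is actually an equality. The only subtlety worth flagging is that the hypothesis ``$0$ is an ordinary point of both $L$ and $M$'' is required to invoke Lemma~\ref{lem: quotient - equivalent condition}, but it is already assumed in the corollary's statement.
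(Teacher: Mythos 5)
Your proof is correct and is essentially identical to the paper's argument, which likewise just combines the definition of a local quasi-symmetric quotient with Lemma~\ref{lem: quotient - equivalent condition} (applied in the special case where the containment $V(Q)\subseteq(V(M):V(L))$ is an equality). Nothing is missing; the paper merely states this in one line, while you spell out both directions.
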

\begin{proof}
	It follows from the definition of local quasi-symmetric quotients and Lemma~\ref{lem: quotient - equivalent condition}.
\end{proof}

If we can compute a basis $\{h_1,\ldots,h_\delta\}$ of $(V(M): V(L))$ to sufficient precision $k$ and have an upper bound on the degrees of the coefficients of an order-$\delta$ quasi-symmetric quotient $Q$ of $M$ by $L$, we can make an ansatz for $Q$ and set up a linear system $Q\cdot h_j=O(x^{k-\delta})$ for all $j=1,\ldots,\delta$. When a nonzero solution is found, we check whether $\ord(Q)=\delta$ and $L\otimes Q$ is a right factor of $M$. If so, $Q$ is a local quasi-symmetric quotient. The procedure is summarized in the following algorithm. Our symmetric division algorithm is inspired by Algorithm 1 in~\cite{BostanRivoalSalvy2024}, which finds the minimal annihilator of a D-finite power series. Its correctness and termination arguments are very similar.
\begin{algorithm}\label{alg:local at zero}
	INPUT:	$L,M\in C(x)[D]$ of positive order, with $0$ an ordinary point of both $L$ and $M$.\\
	OUTPUT: a local quasi-symmetric quotient $Q\in C(x)[D]$ of $M$ by $L$ at $x=0$, or \emph{None} if no such $Q$ exists.

	\step 11 {{\bfseries function}} $\textsc{QuasiSymmetricQuotientAtZero}(M,L)$
	\step 22 set $r:=\ord(L)$.
	\step 32 $\delta := \textsc{ColonSpaceDimension}(M,L)$.
	\step 42 if $\delta=0$, {{\bfseries return}} $Q:=1$.
	\step 52 if $\delta>0$, 
	\step 63 set $k_0 := \max \Z_N - \frac{r(r-1)}{2}$ and $k=k_0+1$, where $N = M\otimes L^{\otimes (r-1)}$ and $\Z_N$ is defined in~\eqref{eq:Z_N}.
	\step 73 $d_0 := \textsc{BoundDegreeOfQuasiSymmetricQuotientCoeffs}(M,L,\delta)$.
	\step 83 while true do:
	\step 94 $\{h_1,\ldots,h_\delta\}:=\textsc{ColonSpaceBasis}(M, L, k)$.
	\step {10}4 $d := \min\{d_0,\lfloor (k-\delta)/(\delta+1)\rfloor \}$.
	\step {11}4 $Q:=\textsc{ApproximantAnnihilator}(h_1,\ldots,h_\delta; d, \ldots, d;k)$
	\step {12}4 if $Q = \emptyset$ and $k\geq (\delta+1)(d_0+1)+\delta$, then {{\bfseries return}} \emph{None}.
	\step {13}4 if $Q\neq \emptyset$,
	\step {14}5 if $\ord(Q)=\delta$ and $L\otimes Q$ is a right factor of $M$, then {{\bfseries return}} $Q$.

	\step {15}4 $k:=2k$.
\end{algorithm}

The above algorithm relies on several sub-algorithms that we now summarize.

\begin{description}[font=\normalfont\textsc,leftmargin=0cm]
	\item[\textsc{ColonSpaceDimension}] computes the dimension of the colon space $(V(M): V(L))$ in $C((x))$ over $C$, see Section~\ref{sec:order}.
	\item[\textsc{ColonSpaceBasis}] computes a $C$-vector space basis of the colon space $(V(M): V(L))$ in $C((x))$ at precision $k$, see Section~\ref{sec:order}.
	\item[\textsc{BoundDegreeOfQuasiSymmetricQuotientCoeffs}] returns an upper bound on the degree of each of the coefficients of an order-$\delta$ global quasi-symmetric quotient of $M$ by $L$, see Section~\ref{sec:degree}.
	\item[\textsc{ApproximantAnnihilator}] takes as input $\delta$ power series $(h_1,\ldots,h_\delta)$ that are the truncations at precision $k$; $\delta+1$ nonnegative integers $(s_0,\ldots,s_{\delta})$ and the precision $k$. It returns a primitive operator $Q=q_0 + q_1 D + \cdots + q_{\delta}D^\delta$ with $q_i\in C[x]$ and $\deg(q_i)\leq s_i$ such that
	\begin{equation}\label{eq: approx ann}
		(q_0 + q_1 D + \cdots + q_{\delta}D^\delta)\cdot h_j= O(x^{k-\delta})
	\end{equation}
	for all $j=1,\ldots,\delta$; or returns $\emptyset$ if there is no such an operator $Q$. Since only one annihilator is required, this can be computed by solving a linear system. If one wants to compute all such annihilators, one can first compute a basis $B(x)=(B_{i,j})_{0\leq i,j\leq \delta}\in C[x]^{(\delta+1)\times (\delta+1)}$ of the $C[x]$-module
	\begin{equation}\label{eq: approx basis}
		\{(q_0,q_1,\ldots,q_\delta)\in C[x]^{1\times (\delta+1)}\mid q_0 h_j + q_1h_j' + \cdots q_\delta h_j^{(\delta)} = O(x^{k-\delta})\}
	\end{equation}
	for all $j=1,\ldots,\delta$, in \emph{shifted Popov form}~\cite{Popov72,vanBarelBultheel1992,BeckermannLabahnVillard99,JeannerodNeigeerVillard20}  with shift vector $(-s_0, \ldots, -s_\delta)$. This implies that any solution of~\eqref{eq: approx basis} with degrees bounded $(s_0,\ldots,s_\delta)$ is a linear combination of the rows of $B$ whose index $i$ satisfies $\deg(B_{i,i})\leq s_i$. Efficient algorithms to compute such bases are known~\cite{JeannerodNeigeerVillard20}.

\end{description}

\begin{theorem}
	Algorithm~\ref{alg:local at zero} terminates and is correct.
\end{theorem}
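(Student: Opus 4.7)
The plan splits into three pieces reflecting the structure of the algorithm: the trivial case $\delta=0$; verification that any $Q$ returned in line 14 is a local quasi-symmetric quotient at $0$; and termination together with correctness of the \emph{None} output, established by a dimension count on the space of approximant annihilators.

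If $\delta=0$, then $(V(M):V(L))=\{0\}$ and $Q=1$ satisfies $V(Q)=(V(M):V(L))$; since $L\otimes 1=1$ is trivially a right factor of $M$, Corollary~\ref{cor: quotient - equivalent condition} confirms that $Q=1$ is the local quasi-symmetric quotient. When $\delta>0$ and the algorithm returns $Q$ at line 14, the checks guarantee $\ord(Q)=\delta$ and that $L\otimes Q$ is a right factor of $M$. Lemma~\ref{lem: V(Q) subseteq V(M):V(L)} with $F=C((x))$ gives $V(Q)\subseteq(V(M):V(L))$, and Proposition~\ref{prop:order} gives $\dim_C V(Q)=\ord(Q)=\delta=\dim_C(V(M):V(L))$, so $V(Q)=(V(M):V(L))$. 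Corollary~\ref{cor: quotient - equivalent condition} then certifies that $Q$ is the local quasi-symmetric quotient at $0$.

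For termination and correctness of the \emph{None} return, I would argue by a dimension count on the approximant space. At precision $k$, the conditions $Q\cdot h_j=O(x^{k-\delta})$ for $j=1,\ldots,\delta$ impose $\delta(k-\delta)$ linear constraints on the $(\delta+1)(d_0+1)$ unknown coefficients of a candidate $Q$ of order at most $\delta$ with $\deg(q_i)\le d_0$. As $k$ grows, this solution space shrinks monotonically to the space of exact common annihilators of $h_1,\ldots,h_\delta$ inside the search region. Any nonzero such exact annihilator must have order at least $\delta$, because $\dim V(Q)\le\ord(Q)$ together with the linear independence of the $h_j$ forces $\ord(Q)\ge\delta$; so the limit space is at most one-dimensional. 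If a local quasi-symmetric quotient $Q^{\ast}$ exists, Lemma~\ref{lem: local-global} makes it the global one, and \textsc{BoundDegreeOfQuasiSymmetricQuotientCoeffs} ensures its primitive form has coefficient degrees $\le d_0$; then the limit space is exactly the line $C\cdot Q^{\ast}$, so for $k$ large enough \textsc{ApproximantAnnihilator} returns a primitive scalar multiple of $Q^{\ast}$, which passes both checks and is returned. Conversely, if no such $Q^{\ast}$ exists, the limit space is zero and \textsc{ApproximantAnnihilator} eventually returns $\emptyset$; once $k\ge(\delta+1)(d_0+1)+\delta$, line 12 correctly returns \emph{None}. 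Since $k$ doubles at every iteration, the requisite precision is attained in finitely many steps.

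The main obstacle will be quantifying how large $k$ must actually be for the approximant space to collapse onto the exact annihilator space: one must verify that the $\delta(k-\delta)$ linear conditions eventually become sufficiently independent, which is precisely why the initial precision $k_0=\max\Z_N-r(r-1)/2$ with $N=M\otimes L^{\otimes(r-1)}$ chosen in line 6 is forced to exceed all relevant indicial roots, so that Lemma~\ref{lem:series-sol-rec} uniquely determines the power series coefficients of solutions beyond $k_0$, and why the doubling strategy is used to sidestep a sharp a priori threshold. This stabilisation argument is modelled on the one used for Algorithm~1 of~\cite{BostanRivoalSalvy2024}.
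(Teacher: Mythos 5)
Your argument follows the paper's proof essentially step for step: the same certification of any operator returned in line~14 via Proposition~\ref{prop:order} and Lemma~\ref{lem: quotient - equivalent condition}, and the same stabilization of the nested spaces $U_k$ of approximant annihilators to establish termination and the correctness of the \emph{None} output. The only slip is the claim that the limiting space of exact annihilators within the degree bounds is at most one-dimensional (it consists of all polynomial multiples $p(x)Q^{\ast}$ fitting the bounds and may have larger dimension), but nothing downstream depends on this, since any exact annihilator of order at most $\delta$ of the linearly independent $h_1,\dots,h_\delta$ already passes the checks of line~14.
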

\begin{proof}
	\begin{enumerate}
		\item (Correctness assuming termination) In line 3, set $\delta := \dim_C(V(M): V(L))$. If $\delta=0$, then  in line~4, $Q:=1$ satisfies that $V(Q)=\emptyset = (V(M): V(L))$ and $L\otimes Q =1$ is a right factor of $M$. If $\delta>0$, Theorem~\ref{thm: order computation} in Section~\ref{sec:order} guarantees that, when $k> k_0$ in line~6, the truncation of $(V(M): V(L))$ at precision $k$ has the same dimension as $(V(M): V(L))$. In line~9, we compute a basis $\{h_1,\ldots,h_\delta\}$ of $(V(M):V(L))$ at precision~$k$. All series $h_j, h_j', \ldots, h_j^{(\delta)}$ are known at precision $k-\delta$ for each $j=1,\ldots,\delta$. By Lemma~\ref{lem: local-global}, the degree bound $d_0$ in line~7 for the coefficients of global quasi-symmetric quotients also applies in the local case. In line~12, the condition $k\geq (\delta+1)(d_0+1)+\delta$ ensures that in line 11, we have $\lfloor (k-\delta)/(\delta+1)\rfloor \geq d_0+1$ and hence~$d=d_0$. If \textsc{ApproximantAnnihilator} returns empty with the given degree bounds on the degrees of the coefficients in line 12, then there is no operator $Q$ of order $\delta$ such that $L\otimes Q$ is a right factor of $M$. Otherwise, in line 14, if there exists an operator $Q\in C(x)[D]$ of order~$\delta$ such that $L\otimes Q$ is a right factor of $M$. Then by Proposition~\ref{prop:order}.\ref{it:quo3} and line~3, we get $V(Q)= (V(M): V(L))$. Thus $Q$ is a local quasi-symmetric quotient of $M$ by $L$ at $0$.
		\item (Termination) The only possible source on non-termination in Algorithm~\ref{alg:local at zero} is the loop where $k$ is doubled every time no $Q$ in line~14 is found. Let $U_k$ be the set of all solutions $(q_0,q_1,\ldots,q_\delta)\in C[x]^{(\delta+1)}$ of~\eqref{eq: approx ann} with degrees bounded by $(d_0,\ldots, d_0)$. Then for all $k>k_0$, $U_k$ is a $C$-vector space of finite dimension and $U_{k+1}\subseteq U_k$. Thus there exists $k'$ such that $U_{k'}$ is the intersection of all $U_k$ for $k>k'$. Any operator $Q=q_0+q_1D+\cdots+q_\delta D^\delta$ returned by \textsc{ApproximantAnnihilator} in line 11 for $k>k'$ has the property that $Q\cdot h_j = O(x^p)$ for all $p\geq k-\delta$ and thus annihilates~$(V(M): V(L))$. Since $(V(M): V(L))$ has dimension $\delta$ and $Q=q_0+q_1D+\cdots +q_\delta D^\delta$ has order at most $\delta$, it follows that \[V(Q)=(V(M): V(L)) \quad \text{and}\quad\ord(Q)=\delta=\dim_C(V(Q)).\]
		
		By Lemma~\ref{lem: quotient - equivalent condition}, $L\otimes Q$ is a right factor of $M$. This guarantees termination of Algorithm~\ref{alg:local at zero}. \qedhere 
	\end{enumerate}
\end{proof}

In practice, line~6 of Algorithm~\ref{alg:local at zero} is not optimal especially when there exists a local quasi-symmetric quotient. Let $\{g_1,\ldots,g_r\}$ be a basis of $V(L)$ and let $n=\ord(M)$. One may first try smaller values of~$k$, without computing $\Z_N$. For any $k$, one can compute a basis $\{h_1,\ldots,h_{\delta_k}\}$ of the truncated space \[\bigcap_{i=1}^r T_k(V(M):\{g_i\}),\] which contains $T_k\left(\bigcap_{i=1}^r (V(M):\{g_i\})\right)=T_k( V(M):V(L))$, and therefore provides a good approximation of $(V(M):V(L))$ at precision~$k$. If $k>\Z_N-\frac{r(r-1)}{2}$, we will prove in Proposition~\ref{thm: order computation} that these two truncated spaces are equal and their dimension is the dimension of the colon space $(V(M):V(L))$. If~$k\geq n-1$, we will show in Lemma~\ref{lem: w_i} that $T_k$ is an injective map from $\bigcap_{i=1}^rT_{k+1}(V(M):\{g_i\})$ to $\bigcap_{i=1}^rT_{k}(V(M):\{g_i\})$ and hence $\delta_{k} \geq \delta_{k+1}$. Since the chain of dimensions $\delta_{n-1}\geq \delta_n\geq \dots$ is decreasing and eventually stabilizes at $\delta$. Thus $\delta_k\geq \delta$ for all $k\geq n-1$.

Using an upper bound on the degrees of the coefficients of an order-$\delta_k$ quasi-symmetric quotient, one can try to search for a possible local quasi-symmetric quotient. If, for a sufficiently large $k$ $(k\geq n-1)$, we find an operator $Q$ of order~$\delta_k$ such that $L\otimes Q$ is a right factor of $M$, then Proposition~\ref{prop:order}.\ref{it:quo3} implies that $\delta_k\leq\delta$. Thus $\delta_k=\delta$ and $Q$ is a local quasi-symmetric quotient of $M$ by $L$. 

If $\delta_k>0$ and one wants to prove that no local quasi-symmetric quotient exists, then in our approach it is necessary to compute $\Z_N$, or at least an upper bound  $Z_N$, to determine the exact dimension $\delta$ of the colon space $(V(M):V(L))$.
\section{Three special cases}\label{sec:special}
In this section, we show that in certain special cases, the order bound for symmetric quotients in Proposition~\ref{prop:order} is sharp, and a local quasi-symmetric quotients always exists. Moreover, in these cases, the following algorithm returns the global quasi-symmetric quotient. The correctness of this algorithm follows from Lemma~\ref{lem: local-global}. In our experiments, for random operators $M,L\in C(x)[D]$ such that $M=L\otimes P$ for some unknown $P\in C(x)[D]$, the algorithm always finds the global quasi-symmetric quotient of $M$ by~$L$. However, in the general case, a theoretical proof or counterexample remains open.


\begin{algorithm}\label{alg:global}
	INPUT: $L,M\in C(x)[D]$ of positive order.\\
	OUTPUT: the global quasi-symmetric quotient $Q\in C(x)[D]$ of $M$ by $L$, or \emph{Fail} (which does not imply nonexistence; existence is guaranteed by Proposition~\ref{prop: maximal quasi quotient}).
	
	\step 11 {{\bfseries function}} $\textsc{QuasiSymmetricQuotient}(M,L)$
	\step 22 choose an arbitrary ordinary point $\xi$ of $L$ and $M$.
	\step 32 transform $L$ and $M$ by substituting $x\to x+\xi$ to obtain $L_\xi$ and $M_\xi$.
	\step 42 compute $Q_\xi = \textsc{QuasiSymmetricQuotientAtZero}(M_\xi,L_\xi)$.
	\step 52 if $Q_\xi = \rm{None}$, then {{\bfseries return}} \emph{Fail}.
	\step 62 otherwise, transform $Q_\xi$ back via $x\to x-\xi$ to obtain $Q$.
	\step 72 {{\bfseries return}} $Q$.
\end{algorithm}


Based on Corollary~\ref{cor: quotient - equivalent condition}, we give another equivalent description of local quasi-symmetric quotients.
\begin{lemma}\label{lem: quotient - equivalent condition 2}
	Let $L,M\in C(x)[D]$ be of positive order, with $0$ an ordinary point of both $L$ and $M$. Then there exists a local quasi-symmetric quotient $Q\in C(x)[D]$ of $M$ by $L$ at $0$ if and only if the colon space $(V(M):V(L))$ satisfies the following two conditions:
	\begin{enumerate}[label=(\alph*)]
		\item\label{it:assump1} every series $h\in (V(M): V(L))$ is D-finite;
		\item\label{it:assump2} for each $h\in (V(M): V(L))$, if $Q_h$ is a minimal annihilator of $h$ in $C(x)[D]$, then \begin{equation*}
			V(Q_h)\subseteq (V(M):V(L))\quad\text{and}\quad \dim_CV(Q_h) =\ord(Q_h).
		\end{equation*}
 	\end{enumerate}
\end{lemma}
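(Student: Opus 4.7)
The plan is to use Corollary~\ref{cor: quotient - equivalent condition} as a bridge: a primitive $Q\in C(x)[D]$ is a local quasi-symmetric quotient of $M$ by $L$ at $0$ if and only if $V(Q)=(V(M):V(L))$ and $\dim_C V(Q)=\ord(Q)$. Both directions of the lemma will be obtained by translating these two conditions back and forth between $Q$ itself and the minimal annihilators of individual elements of the colon space.

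For the $(\Rightarrow)$ direction, I take such a $Q$. Any $h\in (V(M):V(L))=V(Q)$ is annihilated by $Q$ and is therefore D-finite, which gives~(a). For~(b), the minimal annihilator $Q_h$ of $h$ is a right factor of $Q$, since every annihilator of $h$ is a left multiple of $Q_h$. Invoking Lemma~\ref{lem:sol-rightfactor}\ref{it:rf3} with $L_1=Q$ and $L_2=Q_h$—the hypothesis $\dim_C V(Q)=\ord(Q)$ being supplied by Corollary~\ref{cor: quotient - equivalent condition}—immediately yields $\dim_C V(Q_h)=\ord(Q_h)$ and $V(Q_h)\subseteq V(Q)=(V(M):V(L))$.

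For the $(\Leftarrow)$ direction, I build a candidate $Q$ iteratively via least common left multiples. Let $\delta=\dim_C (V(M):V(L))$, which is finite (bounded by $\ord(M)$ via Corollary~\ref{cor: colon}). If $\delta=0$, take $Q=1$. Otherwise, pick any nonzero $h_1\in (V(M):V(L))$, which by~(a) admits a minimal annihilator $Q_1$; condition~(b) gives $V(Q_1)\subseteq (V(M):V(L))$ and $\dim_C V(Q_1)=\ord(Q_1)$. If $V(Q_1)\neq (V(M):V(L))$, choose $h_2\in (V(M):V(L))\setminus V(Q_1)$, let $Q_2$ be its minimal annihilator (again satisfying~(b)), and replace $Q_1$ by $\lclm(Q_1,Q_2)$. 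By Lemma~\ref{lem:sol-lcml-sym}\ref{it:sol-lclm}, this lclm has solution space $V(Q_1)+V(Q_2)\subseteq (V(M):V(L))$ and satisfies $\dim_C V(\lclm(Q_1,Q_2))=\ord(\lclm(Q_1,Q_2))$.

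Iterating the step, the solution space of the running operator grows strictly at each round but is bounded by $\delta$, so after at most $\delta$ iterations I obtain a primitive operator $Q$ with $V(Q)=(V(M):V(L))$ and $\dim_C V(Q)=\ord(Q)$; Corollary~\ref{cor: quotient - equivalent condition} then certifies that $Q$ is a local quasi-symmetric quotient. The main subtlety is maintaining the invariant $\dim_C V=\ord$ across the iteration, since this is exactly what licenses the next application of Lemma~\ref{lem:sol-lcml-sym}\ref{it:sol-lclm} with the enlarged lclm in the role of $L_1$; the invariant is furnished simultaneously by Lemma~\ref{lem:sol-lcml-sym}\ref{it:sol-lclm} itself for the running lclm and by condition~(b) for each newly added minimal annihilator.
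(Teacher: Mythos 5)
Your proof is correct and follows essentially the same route as the paper: the forward direction is identical (annihilation by $Q$ gives D-finiteness, and Lemma~\ref{lem:sol-rightfactor}\ref{it:rf3} applied to the right factor $Q_h$ of $Q$ gives~(b)), and the converse likewise builds $Q$ as an lclm of minimal annihilators of colon-space elements, certified via Lemma~\ref{lem:sol-lcml-sym}\ref{it:sol-lclm} and Corollary~\ref{cor: quotient - equivalent condition}. The only (immaterial) difference is that you assemble the lclm iteratively with a strictly-growing solution space, whereas the paper takes the lclm of the minimal annihilators of a fixed basis $\{h_1,\dots,h_\delta\}$ in one step.
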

\begin{proof}
	Suppose that $Q\in C(x)[D]$ is a local quasi-symmetric quotient of $M$ by $L$ at $0$. Then by Corollary~\ref{cor: quotient - equivalent condition}, there exists $Q\in C(x)[D]$ such that $V(Q)=(V(M): V(L))$ and $\ord(Q)=\dim_CV(Q)$. So~every $h\in (V(M): V(L))$ is D-finite because $Q$ is an annihilator of $h$. For each $h\in (V(M): V(L))$, let $Q_h$ be a minimal annihilator of $h$. Then $Q_h$ is a right factor of $Q$. Thus, by Lemma~\ref{lem:sol-rightfactor}.\ref{it:rf3}, $P$ satisfies the required condition in~\ref{it:assump2}.

	For the converse, since $V(M)$ and $V(L)$ are finite-dimensional $C$-vector subspaces of $C((x))$, it follows from Corollary~\ref{cor: colon} that $(V(M): V(L))$ is also finite-dimensional. Let $\delta:=\dim_C(V(M): V(L))$. If $\delta=0$, then take $Q=1$. If $\delta>0$, suppose that $\{h_1,\ldots,h_\delta\}$ is a basis of $(V(M): V(L))$. For each $1\leq j\leq \delta$, by the condition~\ref{it:assump1}, the series $h_j$ is D-finite. Let $Q_j\in C(x)[D]$ be a minimal annihilator of~$h_j$. We will show that $Q := \lclm_{j=1}^\delta Q_j$ is a desired operator.
		
	By the condition~\ref{it:assump2}, we know
	\[V(Q_j)\subseteq (V(M):V(L))\quad\text{and}\quad \dim_CV(Q_j) = \ord(Q_j).\]
	By Lemma~\ref{lem:sol-lcml-sym}.\ref{it:sol-lclm} on properties of the lclm, we have
	\[V(Q) = V(Q_1)+\cdots + V(Q_\delta)\subseteq (V(M): V(L)) \quad\text{and}\quad \dim_CV(Q)=\ord(Q).\]
	Since $\{h_1,\ldots,h_\delta\}$ is a basis of $(V(M): V(L))$, we obtain that $V(Q)=(V(M): V(L))$. By Corollary~\ref{cor: quotient - equivalent condition}, $Q$ is a local quasi-symmetric quotient of $M$ by $L$ at $0$.
\end{proof}

\subsection{The hyperexponential case}


Let $F$ be a differential field extension of $C(x)$. Recall that a nonzero element $f\in F$ is said to be \emph{hyperexponential} over $C(x)$ if its logarithmic derivative $(D\cdot f)/f$ is a rational function in $C(x)$. Equivalently, $f$ is \emph{hyperexponential} if it is a nonzero solution of some first-order operator $u D - v$ with $u,v\in C(x)$, $u\neq 0$. If $f\in F$ is hyperexponential, then its inverse $f^{-1}$ is also hyperexponential. 

%

\begin{lemma}\label{lem: min ann of gh}
	Let $L\in C(x)[D]$ be a first-order operator and let $g$ be a nonzero solution of $L$ in $C((x))$. Let $h\in C((x))$, $h\neq 0$ be D-finite and let $Q\in C(x)[D]$ be a minimal annihilator of $h$. Then $L\otimes Q$ is a minimal annihilator of $gh$.
\end{lemma}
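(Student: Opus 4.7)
The plan is to verify that $L \otimes Q$ annihilates $gh$ and that no lower-order operator in $C(x)[D]$ does. The first assertion is immediate from the definition of the symmetric product (cf.\ Lemma~\ref{lem:sol-lcml-sym}~\ref{it:sol-sym} applied in a differential field extension containing $g$ and $h$): since $g \in V(L)$ and $h \in V(Q)$, the product $gh$ lies in $V(L \otimes Q)$. To pin down the order, observe that $\ord(L) = 1$ and $\ord(Q) \ge 1$ (the latter because $h \ne 0$), so the two-sided bound~\eqref{eq:upper-lower-orders-sym} collapses to $\ord(L \otimes Q) = \ord(Q)$.

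For minimality, I would exploit the invertibility of $g$ in $C((x))$. Write $L = uD - v$ with $u, v \in C(x)$ and $u \ne 0$; then $g' = (v/u) g$. Since $g$ is a nonzero element of the Laurent-series field $C((x))$, its inverse $g^{-1} \in C((x))$ exists, and differentiating $g \cdot g^{-1} = 1$ yields $(g^{-1})' = -(v/u) g^{-1}$. Hence $g^{-1}$ is annihilated by the first-order operator $L^{*} := uD + v \in C(x)[D]$.

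Now let $P \in C(x)[D]$ be any annihilator of $gh$; since $gh \ne 0$ we must have $\ord(P) \ge 1$. By definition of the symmetric product, $L^{*} \otimes P$ annihilates the product $g^{-1} \cdot (gh) = h$, so by the minimality of $Q$ as an annihilator of $h$,
\[
  \ord(Q) \;\le\; \ord(L^{*} \otimes P) \;=\; \ord(P),
\]
where the last equality comes again from~\eqref{eq:upper-lower-orders-sym} applied to the first-order operator $L^{*}$. Combined with $\ord(L \otimes Q) = \ord(Q)$, this shows that $L \otimes Q$ is of minimal order among all annihilators of $gh$ in $C(x)[D]$.

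The only subtlety, and the point I would write out carefully, is the passage through $g^{-1}$: one has to check that the inverse stays inside $C((x))$ (it does, because $g$ is a unit of that field) and that inversion preserves hyperexponentiality so that $g^{-1}$ is killed by a first-order operator with rational-function coefficients. Everything else is order bookkeeping via~\eqref{eq:upper-lower-orders-sym}.
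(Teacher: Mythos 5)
Your proof is correct and follows essentially the same route as the paper: both arguments hinge on the fact that $g^{-1}\in C((x))$ is annihilated by the first-order operator $uD+v$, so that any annihilator $P$ of $gh$ yields the annihilator $(uD+v)\otimes P$ of $h=g^{-1}(gh)$ of the same order, which minimality of $Q$ then bounds from below. The only difference is cosmetic — you argue directly while the paper argues by contradiction (and your direct phrasing incidentally avoids the inequality-direction slip in the paper's contradiction hypothesis).
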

\begin{proof}
	By definition of symmetric products, $L\otimes Q$ is an annihilator of $gh$. To show it is of minimal order, suppose for contradiction that there exists an annihilator $M\in C(x)[D]$ of $gh$ with $\ord(M)>\ord(L\otimes Q)$. Since $L$ is of first order, we write $L= u D - v$, where $u,v\in C(x)$, $u\neq 0$. Then $R:= (uD+v)$ is an annihilator of $g^{-1}$. So the symmetric product $R \otimes M$ is also an annihilator of $h = g^{-1}gh$. Since~$\ord(R)=1$, by~\eqref{eq:upper-lower-orders-sym} we have $\ord(R\otimes M) = \ord(M)>\ord(L\otimes Q)=\ord(Q)$.
	This contradicts the assumption that $Q$~is a minimal annihilator of $h$.
\end{proof}

\begin{remark}
	If $L$ is not of first order, Lemma~\ref{lem: min ann of gh} may not hold. For example, the second-order operator $L=-2x^2D^2 + xD -1 \in \bC(x)[D]$ is a minimal annihilator of $g=x+\sqrt{x}$, and $Q=-2x^2D^2+xD-1$ is a minimal annihilator of $h=x-\sqrt{x}$. The product $gh=x^2-x$ is hyperexponential and hence it has a minimal annihilator $(-x+x^2)D+(1-2x)$ of order $1$. The symmetric product $L\otimes Q=2xD^3-3x^2D^2+6xD-6$ is an annihilator of $gh$, but not a minimal annihilator.
\end{remark}

\begin{theorem}\label{thm: hyper case}
		Let $L,M\in C(x)[D]$ be of positive order. If $L=\lclm_{i=1}^I L_i$ where the $L_i$ are first-order operators in $C(x)[D]$, then Algorithm~\ref{alg:global} returns the global quasi-symmetric quotient of $M$ by~$L$.
\end{theorem}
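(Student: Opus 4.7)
The plan is to verify the criterion of Lemma~\ref{lem: quotient - equivalent condition 2} at the ordinary point $\xi$ chosen by Algorithm~\ref{alg:global}: existence of a local quasi-symmetric quotient there, combined with Lemma~\ref{lem: local-global}, identifies it (after back-shifting) with the global quasi-symmetric quotient that the algorithm returns. After translation I may assume $\xi=0$. Since each $L_i$ is a right factor of $L=\lclm_i L_i$ and $\dim_C V(L)=\ord(L)$, Lemma~\ref{lem:sol-rightfactor}(iii) produces a nonzero solution $g_i\in V(L)\subseteq C((x))$ for each $L_i$, and by Lemma~\ref{lem:sol-lcml-sym}(i), $V(L)=\Span_C\{g_1,\ldots,g_I\}$. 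Writing $L_i=u_iD-v_i$, the inverse $g_i^{-1}\in C((x))$ is annihilated by the first-order operator $L_i^*:=u_iD+v_i$.

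Condition~(a) of Lemma~\ref{lem: quotient - equivalent condition 2} is immediate: any $h\in(V(M):V(L))$ factors as $g_1^{-1}(g_1h)$ with $g_1h\in V(M)$, so $h$ is annihilated by $L_1^*\otimes M$ and hence is D-finite. For condition~(b), fix nonzero $h\in(V(M):V(L))$ with minimal annihilator $Q_h$ and apply Lemma~\ref{lem: min ann of gh} twice. First, with the pair $(L_i,g_i)$: the minimal annihilator of $g_ih$ is $L_i\otimes Q_h$, and since $g_ih\in V(M)$ this must be a right factor of $M$; using $\dim_C V(M)=\ord(M)$, Lemma~\ref{lem:sol-rightfactor}(iii) gives $\dim_C V(L_i\otimes Q_h)=\ord(L_i\otimes Q_h)$ and $V(L_i\otimes Q_h)\subseteq V(M)$. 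Second, with the pair $(L_i^*,g_i^{-1})$: the minimal annihilator of $g_i^{-1}(g_ih)=h$ is $L_i^*\otimes(L_i\otimes Q_h)$, which by uniqueness equals $Q_h$ up to a rational scalar. Chasing orders through these two identities forces $\ord(L_i\otimes Q_h)=\ord(Q_h)$. Now Lemma~\ref{lem:sol-lcml-sym}(ii) applied to the identity $Q_h=L_i^*\otimes(L_i\otimes Q_h)$ (both factors have full-dimensional solution spaces) yields $V(Q_h)=g_i^{-1}V(L_i\otimes Q_h)$, so $\dim_C V(Q_h)=\ord(Q_h)$. Finally, for $y\in V(Q_h)$, the reverse direction $g_iy\in V(L_i\otimes Q_h)\subseteq V(M)$ holds for every $i$, whence $y\cdot V(L)\subseteq V(M)$ and $V(Q_h)\subseteq(V(M):V(L))$.

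Both conditions of Lemma~\ref{lem: quotient - equivalent condition 2} are satisfied, so a local quasi-symmetric quotient of $M$ by $L$ at $0$ exists; Algorithm~\ref{alg:local at zero} therefore returns one, and Algorithm~\ref{alg:global} returns its back-shift, which by Lemma~\ref{lem: local-global} is the global quasi-symmetric quotient. The main obstacle is the dimension equality $\dim_C V(Q_h)=\ord(Q_h)$: this is precisely where the first-order hypothesis on the $L_i$ is essential, since it is the two-sided application of Lemma~\ref{lem: min ann of gh} (multiplying by $g_i$ and then dividing by $g_i$) that yields the identity $Q_h=L_i^*\otimes L_i\otimes Q_h$. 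The remark following Lemma~\ref{lem: min ann of gh} shows that this identity can fail when $L_i$ has order greater than one, so a fundamentally different argument would be needed to extend the theorem beyond the hyperexponential case.
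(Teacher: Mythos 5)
Your proof is correct, and its skeleton coincides with the paper's: translate to an ordinary point, reduce to the two conditions of Lemma~\ref{lem: quotient - equivalent condition 2}, and establish condition (b) by invoking Lemma~\ref{lem: min ann of gh} together with minimality to conclude that each $L_i\otimes Q_h$ is a right factor of~$M$. Where you diverge is in how you finish condition (b). The paper aggregates the statements ``$L_i\otimes Q_h$ is a right factor of $M$'' over all $i$ via Corollary~\ref{cor:lclm-quotient-2} to get that $L\otimes Q_h$ is a right factor of $M$, and then simply cites Proposition~\ref{prop:order} to obtain both $V(Q_h)\subseteq(V(M):V(L))$ and $\dim_C V(Q_h)=\ord(Q_h)$ in one stroke. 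You instead apply Lemma~\ref{lem: min ann of gh} a second time, to the pair $(L_i^*,g_i^{-1})$, to recover $Q_h$ as $L_i^*\otimes(L_i\otimes Q_h)$ up to a rational factor, and then extract the dimension count and the inclusion $V(Q_h)\subseteq(V(M):V(L))$ directly from Lemma~\ref{lem:sol-lcml-sym}.\ref{it:sol-sym} and the identity $V(Q_h)=g_i^{-1}V(L_i\otimes Q_h)$. Both routes are valid; the paper's is shorter because it reuses the already-proved Proposition~\ref{prop:order}, while yours is more self-contained and makes explicit that the invertibility of the hyperexponential $g_i$ (i.e.\ the first-order hypothesis) is what drives the dimension equality --- a point the paper leaves implicit inside Proposition~\ref{prop:order}. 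One small bookkeeping remark: your conclusion that $y\,V(L)\subseteq V(M)$ for $y\in V(Q_h)$ uses that $(V(M):V(L))=(V(M):\{g_1,\dots,g_I\})$, which is Proposition~\ref{prop: colon properties}.\ref{it:colon3}; it is worth citing explicitly, but this is cosmetic.
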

\begin{proof}
	In line~3 of Algorithm~\ref{alg:global}, after the shift $x\mapsto x+\xi$ , $L_\xi$ is still the lclm of some first-order operators. So we may further assume that $0$ is an ordinary point of $L$ and $M$. It suffices to show that in line~4 there exists a local quasi-symmetric quotient of $M$ by $L$ at $0$. Thus we only need to verify the conditions \ref{it:assump1} and~\ref{it:assump2} in Lemma~\ref{lem: quotient - equivalent condition 2}.
	
	Since $L$ is the lclm of several first-order operators $L_i$, it follows from~Lemma~\ref{lem:sol-lcml-sym}.\ref{it:sol-lclm} that \[V(L) = \Span_{C}\{g_1,\ldots,g_I\},\] 
	where $g_i$ is a nonzero solution of $L_i$ for $1\leq i\leq I$. By Corollary~\ref{cor: colon} we get
	\begin{equation}\label{eq: hyper-quotient}
		(V(M): V(L)) =\bigcap_{i=1}^I (V(M): \{g_i\})\subseteq (V(M):\{g_1\}) = \{ f g_1^{-1} \mid f\in V(M)\}.
	\end{equation}
	
	\ref{it:assump1}: Since $g_1$ is hyperexponential, its inverse $g_1^{-1}$ is D-finite. Since the product of any two D-finite functions is also D-finite, it follows from~\eqref{eq: hyper-quotient} that every element of $(V(M):V(L))$ is D-finite. 
	
	\ref{it:assump2}: For each $h\in (V(M): V(L))$, let $Q_h$ be a minimal annihilator of $h$ in $C(x)[D]$. For each $1\leq i\leq I$, Lemma~\ref{lem: min ann of gh} implies that $L_i\otimes Q_h$ is a minimal annihilator of $g_ih$, . By the definition of the colon space, $g_ih\in V(M)$, i.e., $M$ is an annihilator of $g_ih$. Since $L_i\otimes Q_h$ has minimal order among all annihilators of $g_ih$, the operator $L_i\otimes Q_h$ is a right factor of $M$ for all $1\leq i\leq I$. Since~$L=\lclm_{i=1}^I L_i$, it follows from Corollary~\ref{cor:lclm-quotient-2} that $L\otimes Q_h$ is a right factor of $M$. Thus, by Proposition~\ref{prop:order}, we get $V(Q_h)\subseteq (V(M): V(L))$ and $\ord(Q_h)=\dim_CV(Q_h)$. 
\end{proof}

\subsection{The C-finite case}
In the shift case, the Hadamard quotient of two linear recurrence sequences with constant coefficients was studied in~\cite{CorvajaZannier1998,CorvajaZannier2002,Zannier2005}. Kauers and Zeilberger~\cite{kauersZeilberger17} presented a complete factorization algorithm for linear recurrence equations with constant coefficients with respect to symmetric product, based on polynomial factorization. In the differential case, factorization theory for exponential polynomials was initiated by Ritt~\cite{Ritt1927factorization} and extended in a general setting by MacCol~\cite{MacColl1935factorization} and later by Everest and van der Poorten~\cite{EverestVanderPoorten1997factorisation}.  

In this section, we consider the symmetric quotient of two linear differential operators with constant coefficients. A nonzero power series $f\in C[[x]]$ is called \emph{C-finite} if there exists $L\in C[D]\setminus\{0\}$ such that $L\cdot f=0$. We will show that for any input with constant coefficients, Algorithm~\ref{alg:global} always returns the global quasi-symmetric quotient with constant coefficients, instead of rational coefficients. So $d_0=0$ is a natural degree bound for the quotient. In Section~\ref{sec:order}, the computation of degree bounds for symmetric quotients in the general case is based on the C-finite case. In practice, our symmetric division algorithm with constant coefficients relies only on linear algebra, although polynomial factorization is used in theory.
\begin{lemma}[{\!\!\cite[Section 6.1]{Ince1926}}]\label{lem: C-finite - structure thm}
	Let $L= \ell_rD^r + \ell_{r-1}D^{r-1} + \cdots + \ell_1D+\ell_0\in C[D]$ be a differential operator with constant coefficients of order $r$. Suppose that $P$ factorizes as 
	\begin{equation}\label{eq: factorization}
		L= \ell_r(D-\alpha_1)^{\mu_1}(D-\alpha_2)^{\mu_2}\cdots (D-\alpha_\rho)^{\mu_I}
	\end{equation}
	where $\mu_1,\ldots, \mu_I\in\bN\setminus\{0\}$ and the roots $\alpha_1, \ldots ,\alpha_I\in C$ are distinct. Then the elements
	\[x^j \exp(\alpha_ix) \quad (i= 1, \ldots,I, j = 0, \ldots, \mu_i - 1)\]
	are $r$ linearly independent solutions of $L$ in $C[[x]]$.
\end{lemma}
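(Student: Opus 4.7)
The plan is to verify three things in sequence: (1) the listed functions are indeed solutions of $L$, (2) there are exactly $r$ of them, and (3) they are linearly independent over $C$.

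For step (1), the key observation is the conjugation identity
\[
  (D-\alpha)\bigl(f(x)\exp(\alpha x)\bigr) = f'(x)\exp(\alpha x),
\]
which follows from the Leibniz rule. Iterating gives $(D-\alpha)^{k}\bigl(f(x)\exp(\alpha x)\bigr) = f^{(k)}(x)\exp(\alpha x)$. Applying this with $f(x)=x^{j}$ and $k=\mu_i$ shows that $(D-\alpha_i)^{\mu_i}$ annihilates $x^{j}\exp(\alpha_i x)$ for every $0\le j\le \mu_i-1$. Since all factors in $C[D]$ commute (the coefficient ring is constant, so $C[D]$ is an ordinary commutative polynomial ring), the remaining factors $\ell_r\prod_{k\ne i}(D-\alpha_k)^{\mu_k}$ can be pulled to the left without affecting the result, so $L\cdot \bigl(x^{j}\exp(\alpha_i x)\bigr)=0$.

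For step (2), the count $\sum_{i=1}^I \mu_i = r$ comes directly from comparing the degree of the factorization~\eqref{eq: factorization} with $\ord(L)=r$, so the total number of listed functions is exactly $r$. For step (3), suppose $\sum_{i=1}^{I}\sum_{j=0}^{\mu_i-1} c_{i,j}\,x^{j}\exp(\alpha_i x)=0$ with $c_{i,j}\in C$, and set $p_i(x)=\sum_{j=0}^{\mu_i-1} c_{i,j} x^{j}$. I would then invoke the classical fact that distinct exponentials $\exp(\alpha_1 x),\ldots,\exp(\alpha_I x)$ are linearly independent over the ring $C[x]$: any relation $\sum p_i(x)\exp(\alpha_i x)=0$ forces each $p_i=0$. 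This can be proved by induction on $I$, applying the operator $(D-\alpha_I)^{\deg p_I+1}$ to kill the $p_I$-term (using the identity above) while reducing the number of distinct exponentials; the induction hypothesis then yields $p_1=\cdots=p_{I-1}=0$, and substituting back gives $p_I=0$. Hence every $c_{i,j}=0$.

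The only mild subtlety is step (3): one must be careful not to use any fact about the Wronskian that would itself rely on what is being proved. The inductive argument using $(D-\alpha_I)^{\deg p_I+1}$ is self-contained and avoids this. Since $C[[x]]$ is a differential $C$-algebra containing $\exp(\alpha x)$ for every $\alpha\in C$ (as $C$ is a field of characteristic zero), all the manipulations stay within $C[[x]]$, so the final statement that these are solutions \emph{in $C[[x]]$} follows directly.
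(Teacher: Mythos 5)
Your proof is correct. Note that the paper itself offers no proof of this lemma; it is quoted from Ince with a citation, so there is no in-paper argument to compare against. Your three steps are the standard ones: the conjugation identity $(D-\alpha)\bigl(f\exp(\alpha x)\bigr)=f'\exp(\alpha x)$ together with commutativity of $C[D]$ handles membership in the solution set, the degree count gives exactly $r$ functions, and the induction on the number of distinct exponentials gives independence. The only point worth making explicit in step (3) is that applying $(D-\alpha_I)^{\deg p_I+1}$ sends each $p_i\exp(\alpha_i x)$ with $i<I$ to $q_i\exp(\alpha_i x)$ where $p\mapsto p'+(\alpha_i-\alpha_I)p$ preserves the degree (since $\alpha_i\neq\alpha_I$), hence is injective on $C[x]$; this is what lets you pass from $q_i=0$ back to $p_i=0$. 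With that observation spelled out, the argument is complete and self-contained over any field of characteristic zero, exactly as the statement requires.
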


The following lemma is an immediate consequence of Lemmas \ref{lem: C-finite - structure thm} and~\ref{lem:sol-lcml-sym}.
\begin{lemma}\label{lem: C-finite sym prod}
	Let $L=\prod_{i=1}^s(D-\alpha_i)^{\mu_i}$ and $Q = \prod_{j=1}^t (D-\beta_j)^{\lambda_j}$ be factorizations in $C[D]$, where $\mu_i,\lambda_j\in \bN\setminus\{0\}$, $\alpha_i,\beta_j\in C$,  the $\alpha_i$ (resp. the $\beta_j$) are pairwise distinct. Then 
	\[L\otimes Q = \lcm_{i=1}^s\lcm_{j=1}^t (D-\alpha_i-\beta_j)^{\mu_i+\lambda_j-1}.\]
\end{lemma}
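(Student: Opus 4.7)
The plan is to compute the solution spaces of the two sides explicitly and conclude by the uniqueness of primitive operators with a prescribed full-dimensional solution space. First I would invoke Lemma~\ref{lem: C-finite - structure thm} to write down bases $\{x^a\exp(\alpha_ix)\}$ of $V(L)$ and $\{x^b\exp(\beta_jx)\}$ of $V(Q)$, which in particular confirms $\dim_C V(L)=\ord(L)$ and $\dim_C V(Q)=\ord(Q)$. Lemma~\ref{lem:sol-lcml-sym}.\ref{it:sol-sym} then applies and gives
\[V(L\otimes Q)=\Span_C\{\,x^{a+b}\exp((\alpha_i+\beta_j)x)\mid 0\le a<\mu_i,\ 0\le b<\lambda_j,\ 1\le i\le s,\ 1\le j\le t\,\},\]
together with $\dim_C V(L\otimes Q)=\ord(L\otimes Q)$.

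The core of the argument is to repackage this spanning set according to the distinct values $\gamma$ of the sums $\alpha_i+\beta_j$. For a single pair $(i,j)$ the exponents $a+b$ with $0\le a<\mu_i$, $0\le b<\lambda_j$ cover exactly $\{0,1,\ldots,\mu_i+\lambda_j-2\}$, so setting $\nu_\gamma:=\max\{\mu_i+\lambda_j-1\mid \alpha_i+\beta_j=\gamma\}$ rewrites the solution space as
\[V(L\otimes Q)=\bigoplus_\gamma\Span_C\{\,x^k\exp(\gamma x)\mid 0\le k<\nu_\gamma\,\},\]
where $\gamma$ runs over the finitely many distinct values $\alpha_i+\beta_j$. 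A second appeal to Lemma~\ref{lem: C-finite - structure thm} recognises this space as $V(M)$ for $M:=\prod_\gamma(D-\gamma)^{\nu_\gamma}$, with $\dim_C V(M)=\ord(M)=\sum_\gamma\nu_\gamma$. Since $C[D]$ is a commutative polynomial ring, the operator $M$ coincides with $\lcm_{i,j}(D-\alpha_i-\beta_j)^{\mu_i+\lambda_j-1}$, where the lcm is the usual (monic) polynomial lcm. To finish, both $L\otimes Q$ and $M$ have the same solution space of dimension equal to their orders, so two applications of Lemma~\ref{lem:sol-rightfactor}.\ref{it:rf2} (once in each direction) show they are right factors of one another, hence equal up to a nonzero constant in $C$; primitivity pins down that scalar and gives $L\otimes Q=M$.

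The main obstacle to get right is the grouping in the second paragraph: distinct pairs $(i,j)$ can produce the same exponential $\exp(\gamma x)$, and in that case the multiplicity of $(D-\gamma)$ is governed by the \emph{maximum} of $\mu_i+\lambda_j-1$ over the colliding pairs, not by any single pair and certainly not by a product. This collision phenomenon is exactly why the statement involves an $\lcm$ rather than a product of the factors $(D-\alpha_i-\beta_j)^{\mu_i+\lambda_j-1}$. Once this bookkeeping is carried out carefully, the remaining steps are direct invocations of results already recorded in the excerpt.
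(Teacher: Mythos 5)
Your proposal is correct and matches the paper's intent: the paper gives no written proof, stating only that the lemma "is an immediate consequence of Lemmas~\ref{lem: C-finite - structure thm} and~\ref{lem:sol-lcml-sym}," and your argument is precisely the fleshed-out version of that — compute both solution spaces via the structure lemma and Lemma~\ref{lem:sol-lcml-sym}.\ref{it:sol-sym}, handle collisions among the sums $\alpha_i+\beta_j$ by taking the maximum multiplicity (which is exactly what the $\lcm$ encodes), and identify the operators via Lemma~\ref{lem:sol-rightfactor}.\ref{it:rf2}. The bookkeeping you flag as the main obstacle is handled correctly.
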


A variant of Lemma~\ref{lem: min ann of gh} for annihilators of products is as follows.

\begin{lemma}\label{lem: min ann of gh - C-finite}
	Let $g=x^j\exp(\alpha x)$ with $j\in \bN$, $\alpha\in C$. Let 
	\begin{equation}\label{eq: C-finite h}
		h = \sum_{i=1}^\rho u_i(x) \exp(\theta_ix),
	\end{equation} 
	where $\theta_1,\ldots, \theta_\rho\in C$ are distinct, and $u_1,\ldots, u_\rho \in C[x] \setminus \{0\}$ with $\deg_x(u_i) = d_i$. If $L$ and $Q$ are minimal annihilators of $g$ and $h$ in $C[D]$, respectively, then $L\otimes Q$ is a minimal annihilator of $gh$ in $C[D]$.
\end{lemma}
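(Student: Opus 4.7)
The plan is to exploit the explicit description of minimal annihilators in $C[D]$ via Lemma~\ref{lem: C-finite - structure thm}, compute $L\otimes Q$ using Lemma~\ref{lem: C-finite sym prod}, and then compare with a direct analysis of the minimal annihilator of $gh$.

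First, I would identify the two minimal annihilators. Since $g = x^j\exp(\alpha x)$, the minimal annihilator is $L = (D-\alpha)^{j+1}$, because by Lemma~\ref{lem: C-finite - structure thm} any operator with this single root of multiplicity $j+1$ has solution space spanned by $\exp(\alpha x), x\exp(\alpha x),\ldots, x^j\exp(\alpha x)$, and no smaller multiplicity works. Similarly, $Q = \prod_{i=1}^\rho (D-\theta_i)^{d_i+1}$ annihilates each summand $u_i(x)\exp(\theta_i x)$ of $h$; its minimality follows from the linear independence over $C$ of the family $\{x^k\exp(\beta x): k\in\bN,\ \beta\in C\}$, which shows that any annihilator must kill each summand separately and hence contain $(D-\theta_i)^{d_i+1}$ as a factor for every $i$.

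Next, I would apply Lemma~\ref{lem: C-finite sym prod} to obtain
\[
  L\otimes Q \;=\; (D-\alpha)^{j+1}\otimes\prod_{i=1}^\rho(D-\theta_i)^{d_i+1}
  \;=\;\lcm_{i=1}^\rho (D-\alpha-\theta_i)^{j+d_i+1}.
\]
Because the $\theta_i$ are pairwise distinct, so are the shifted values $\alpha+\theta_i$, and the $\lcm$ collapses to the plain product $\prod_{i=1}^\rho(D-\alpha-\theta_i)^{j+d_i+1}$.

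On the other hand, expanding directly gives
\[
  gh \;=\; \sum_{i=1}^\rho \bigl(x^j u_i(x)\bigr)\exp\!\bigl((\alpha+\theta_i)x\bigr),
\]
where $x^j u_i(x)\in C[x]\setminus\{0\}$ has degree exactly $j+d_i$. Since the exponents $\alpha+\theta_i$ are pairwise distinct and each polynomial coefficient is nonzero, the same linear-independence argument used for $Q$ shows that any $N\in C[D]$ annihilating $gh$ must factor as $N = R\cdot\prod_{i=1}^\rho(D-\alpha-\theta_i)^{j+d_i+1}$ for some $R\in C[D]$: indeed, writing $N=\prod_k (D-\beta_k)^{n_k}$ with distinct $\beta_k$, the solution space of $N$ is spanned by $\{x^m\exp(\beta_k x):0\le m<n_k\}$ by Lemma~\ref{lem: C-finite - structure thm}, and for $gh$ to lie in this span each $\alpha+\theta_i$ must appear among the $\beta_k$ with multiplicity at least $j+d_i+1$. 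Therefore the minimal annihilator of $gh$ is exactly $\prod_{i=1}^\rho(D-\alpha-\theta_i)^{j+d_i+1}$, which coincides with $L\otimes Q$.

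The only delicate point is verifying the divisibility claim in the final step, i.e., that an annihilator of a sum $\sum p_i(x)\exp(\gamma_i x)$ with distinct $\gamma_i$ and nonzero polynomial coefficients must contain $(D-\gamma_i)^{\deg p_i+1}$ as a factor. I expect this to be the main (but routine) obstacle, and it reduces cleanly to the $C$-linear independence of the monomial-exponential family, together with the classification of solution spaces from Lemma~\ref{lem: C-finite - structure thm}.
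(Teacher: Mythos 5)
Your proposal is correct and follows essentially the same route as the paper: identify $L=(D-\alpha)^{j+1}$ and $Q=\prod_i(D-\theta_i)^{d_i+1}$, compute $L\otimes Q=\prod_i(D-\alpha-\theta_i)^{j+d_i+1}$ via Lemma~\ref{lem: C-finite sym prod}, and match this against the expansion $gh=\sum_i x^ju_i(x)\exp((\alpha+\theta_i)x)$. The only difference is that you spell out the linear-independence/divisibility argument that the paper dismisses as ``it is known,'' which is a reasonable and correct elaboration.
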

\begin{proof}
	Since the $\theta_i$ are distinct, it is known that $L:=(D-\alpha)^{j+1}$ and $Q:=\prod_{i=1}^\rho(D-\theta_i)^{d_i+1}$ are minimal annihilators of $g$ and $h$ in $C[D]$, respectively. By Lemma~\ref{lem: C-finite sym prod}, we get $L\otimes Q =  \prod_{i=1}^\rho (D-\theta_i-\alpha)^{d_i+j+1}$ and hence it is a minimal annihilator of $gh=\sum_{i=1}^\rho x^ju_i(x)\exp((\theta_i+\alpha)x)$.
\end{proof}

\begin{theorem}\label{thm: C-finite case}
	Let $L,M\in C[D]$ be of positive order. Then Algorithm~\ref{alg:global} returns the global quasi-symmetric quotient $Q$ in $C[D]$ of $M$ by~$L$, rather than in $C(x)[D]$.
\end{theorem}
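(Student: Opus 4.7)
The starting point is that every element of $C$---in particular $0$---is an ordinary point of both $L$ and $M$: since all coefficients of $L$ and $M$ are constants, no finite singularity can occur. Moreover, the substitution $x\mapsto x+\xi$ acts trivially on any operator in $C[D]$, so Algorithm~\ref{alg:global} reduces to a single call to \textsc{QuasiSymmetricQuotientAtZero}$(M,L)$. The plan is therefore to exhibit, via Lemma~\ref{lem: local-global}, a local quasi-symmetric quotient of $M$ by $L$ at $0$ that lies in $C[D]$. By uniqueness of the primitive local quasi-symmetric quotient (up to a nonzero constant in $C$), this operator must then coincide with what the algorithm returns, so both termination of the algorithm and the claim $Q\in C[D]$ will follow.

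To construct such a local quasi-symmetric quotient I will give an explicit description of the colon space $(V(M):V(L))$. Writing $L=\ell_r\prod_{i=1}^I(D-\alpha_i)^{\mu_i}$ and $M=m_n\prod_{k=1}^K(D-\beta_k)^{\nu_k}$ with distinct $\alpha_i\in C$ and distinct $\beta_k\in C$, Lemma~\ref{lem: C-finite - structure thm} presents $V(L)$ and $V(M)$ as direct sums of the subspaces $\{u(x)\exp(\alpha_i x):u\in C[x],\ \deg u\leq\mu_i-1\}$ and $\{q(x)\exp(\beta_k x):q\in C[x],\ \deg q\leq\nu_k-1\}$, respectively. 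The key structural ingredient is: if a finite sum $f=\sum_i r_i(x)\exp(\gamma_i x)$ with distinct $\gamma_i\in C$ and $r_i\in C[x]$ lies in $V(M)$, then each individual summand $r_i(x)\exp(\gamma_i x)$ lies in $V(M)$; this follows from the classical linear independence of $\{\exp(\gamma x)\}_{\gamma\in C}$ over $C(x)$. I will then write an arbitrary $h\in(V(M):V(L))$ uniquely as $h=\sum_\theta u_\theta(x)\exp(\theta x)$ with distinct $\theta\in C$ and $u_\theta\in C[x]\setminus\{0\}$, and apply the above observation to each product $f=h\cdot g$ with $g=x^j\exp(\alpha_i x)$ ranging over the basis of $V(L)$. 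From the resulting constraints I expect to conclude (a) each single-exponential summand $u_\theta(x)\exp(\theta x)$ lies in $(V(M):V(L))$ by itself, and (b) once $u_\theta$ is admissible, every monomial $x^\ell\exp(\theta x)$ with $0\leq\ell\leq\deg u_\theta$ also lies in $(V(M):V(L))$. The conclusion is that there are a finite set $\Theta\subseteq C$ and integers $d(\theta)\geq 0$ so that
\[
  (V(M):V(L))=\bigoplus_{\theta\in\Theta}\bigl\{u(x)\exp(\theta x):u\in C[x],\ \deg u\leq d(\theta)\bigr\}.
\]

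With this description in hand I set $Q:=\prod_{\theta\in\Theta}(D-\theta)^{d(\theta)+1}\in C[D]$. Lemma~\ref{lem: C-finite - structure thm} applied to $Q$ gives $V(Q)=(V(M):V(L))$ and $\dim_C V(Q)=\ord(Q)$, so Lemma~\ref{lem: quotient - equivalent condition} implies that $L\otimes Q$ is a right factor of $M$, and Corollary~\ref{cor: quotient - equivalent condition} identifies $Q$---which is already primitive, since its leading coefficient is $1$ and all its coefficients lie in $C$---as a local quasi-symmetric quotient of $M$ by $L$ at $0$. Lemma~\ref{lem: local-global} then promotes it to the global quasi-symmetric quotient, and since $Q\in C[D]$ the theorem follows. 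I expect the bulk of the work to lie in the second paragraph: the careful bookkeeping that yields the two-step decomposition of $(V(M):V(L))$---first by exponentials, then by monomial degree---from the single linear-independence fact. Once that decomposition is established, the construction of $Q$ and the appeals to the earlier general results are routine.
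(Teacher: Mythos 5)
Your proof is correct, and it takes a genuinely different route from the paper's. The paper follows the template of Theorem~\ref{thm: hyper case}: it verifies the two conditions of (a C-finite adaptation of) Lemma~\ref{lem: quotient - equivalent condition 2}, and the technical heart is Lemma~\ref{lem: min ann of gh - C-finite}, which says that for $g=x^j\exp(\alpha x)$ the minimal annihilator of $gh$ is the symmetric product of the minimal annihilators of $g$ and $h$; this is used to show $L\otimes Q_h$ is a right factor of $M$ for the minimal annihilator $Q_h$ of each $h$ in the colon space, whence $V(Q_h)\subseteq(V(M):V(L))$ by Proposition~\ref{prop:order}, and the quotient is obtained as an lclm of the $Q_h$. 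You instead compute the colon space explicitly: using the uniqueness of the decomposition of an exponential polynomial into single-exponential summands, you show the colon space splits as a direct sum over exponentials, and then that each block is downward closed in the polynomial degree (because the blocks of $V(M)$ are full degree intervals $\{q\exp(\beta_k x):\deg q\le\nu_k-1\}$); this lets you write down $Q=\prod_\theta(D-\theta)^{d(\theta)+1}$ in closed form and conclude via Lemma~\ref{lem: quotient - equivalent condition} and Corollary~\ref{cor: quotient - equivalent condition} rather than via Lemma~\ref{lem: quotient - equivalent condition 2}. Both arguments rest on Lemma~\ref{lem: C-finite - structure thm}, but yours bypasses the minimal-annihilator machinery entirely and is more constructive (it exhibits $\Theta$ and the multiplicities $d(\theta)+1$ directly), at the cost of the direct-sum bookkeeping in your second paragraph; the paper's version is less explicit but reuses the same skeleton as the hyperexponential and algebraic cases. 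One small point to make airtight in a write-up: the algorithm returns the unique \emph{primitive} operator of order $\delta$ with solution space $(V(M):V(L))$, and since your monic $Q\in C[D]$ is already primitive, uniqueness up to a constant factor is what forces the returned operator to lie in $C[D]$ --- worth stating explicitly, as this is precisely the content of the "rather than in $C(x)[D]$" clause.
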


\begin{proof}
	The proof is similar to that of Theorem~\ref{thm: hyper case}.
	We write $L = \prod_{i=1}^s (D-\alpha_i)^{\mu_i}$ with $\mu_i\in\bN\setminus\{0\}$ and distinct $\alpha_i \in C$. By Lemma~\ref{lem: C-finite - structure thm}, we have \[V(L) = \Span_{C}\{x^j\exp(\alpha_i x) \mid i=1,\ldots,s, j=0,\ldots,\mu_i-1\}.\] By Corollary~\ref{cor: colon}, we get
	\begin{align}
		(V(M): V(L)) &= \bigcap_{i=1}^s\bigcap_{j=0}^{\mu_i-1}(V(M): \{x^j\exp(\alpha_i x)\})\label{eq: C-finite-quotient-1}\\
		&\subseteq (V(M): {\exp(\alpha_1x)})=\{f \exp(-\alpha_1x)\mid f\in V(M)\}.\label{eq: C-finite-quotient-2}
	\end{align}
	
	
	For a fixed $h\in (V(M): V(L))$, since $M\in C[D]$, combining \eqref{eq: C-finite-quotient-2} and Lemma~\ref{lem: C-finite - structure thm} yields that  $h$ can be written in the form $\sum_{i=1}^\rho u_i(x)\exp(\theta_ix)$ as in~\eqref{eq: C-finite h}. Thus, $h$ is C-finite. 
	
	Let $Q_h$ be a minimal annihilator of $h$ in $C[D]$. For each $g_{i,j}:=x^j\exp(\alpha_ix)$, $L_{i,j} := (D-\alpha_i)^{j+1}$ is its minimal annihilator in $C[D]$. By Lemma~\ref{lem: min ann of gh - C-finite}, $L_{i,j}\otimes Q_h$ is a minimal annihilator of $g_{i,j}h$ in $C[D]$. By~the definition of the colon space, $M$ is also an annihilator of $g_{i,j}h$. Thus $L_{i,j}\otimes Q_h$ is a right factor of $M$. Since $L=\lcm_{i=1}^s\lcm_{j=0}^{\mu_i-1}L_{i,j}$, it follows from Corollary~\ref{cor:lclm-quotient-2} that $L\otimes Q_h$ is a right factor of $M$. Thus, by Proposition~\ref{prop:order}, we get $V(Q_h)\subseteq (V(M): V(L))$ and $\ord(Q)=\dim_CV(Q)$. 
	
	By literally adapting Lemma~\ref{lem: quotient - equivalent condition 2} to the C-finite case, we obtain the existence of a local quasi-symmetric quotient $Q\in C[D]$ of $M$ by $L$ at $0$. Hence, by Lemma~\ref{lem: local-global}, Algorithm~\ref{alg:global} returns a global quasi-symmetric quotient $Q\in C[D]$ of $M$ by~$L$.
\end{proof}


\subsection{The algebraic case}

A series $f\in C((x))$ is called \emph{algebraic} if there exists a nonzero polynomial $m(x,y)\in C(x)[y]$ such that~$m(x, f)=0$. Let $\overline{C(x)}$ be the algebraic closure of $C(x)$. For an algebraic series $f\in C((x))$ with minimal polynomial $m\in C(x)[y]$, the roots of $m$ in $\overline{C(x)}$ are \emph{conjugates} of $f$. An \emph{algebraic function field} $E = C(x)[y]/\<m>$ is a field extension of the rational function field $C(x)$ of finite degree, where $m$ is an irreducible polynomial in $C(x)[y]$. The usual derivation $'$ on $C(x)$ extends uniquely to the field $E=C(x)[y]/\<m>$. For any $f\in E$, all its derivatives $f,f',f'', \ldots$ belong to $E$. Hence every algebraic function is D-finite. An operator $L\in C(x)[D]$ has \emph{only algebraic solutions}~\cite{Singer1979algebraic,Singer02,kauersKoutschanThibaut23,BostanSalvySinger2025} if there is a differential field $E = C(x)[y]/\<m>$ such that the solution space $V_E(L)$ of $L$ in $E$ has dimension~$\ord(L)$. For an algebraic function, its minimal annihilator has only algebraic solutions~\cite{Singer1979algebraic,Singer02}. Moreover, the solution space of its minimal annihilator is spanned by all the conjugates of $f$, see the following lemma.
\begin{lemma}\label{lem: min ann - algebraic}
	Let $L\in C(x)[D]$ with $r=\ord(L)$. Assume that $L$ has a nonzero solution $f$ which is algebraic over $C(x)$. Let $E$ be the algebraic extension of $C(x)$ generated by all conjugates of $f$.
	\begin{enumerate}[label=(\roman*)]
		\item\label{it:alg1} All conjugates of $f$ are solutions of $L$.
		\item\label{it:alg2} If $L$ is a minimal annihilator of $f$, then the solution space of $L$ in $E$ has dimension $r$ and is spanned by all conjugates of $f$.
	\end{enumerate}
\end{lemma}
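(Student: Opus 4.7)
The plan is to derive (i) from the uniqueness of the derivation extension to $E$, and then deduce (ii) by combining (i) with a Wronskian construction that turns out to be Galois-invariant.

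For (i), note that $E$ is the splitting field of the minimal polynomial of $f$ over $C(x)$, hence $E/C(x)$ is Galois. The derivation on $C(x)$ extends uniquely to $E$: for any $\sigma \in \Gal(E/C(x))$, the map $h \mapsto \sigma^{-1}((\sigma h)')$ is another extension of the derivation of $C(x)$ to $E$, hence coincides with $'$, so $\sigma$ commutes with differentiation. Since the coefficients of $L$ lie in $C(x)$ and are fixed by $\sigma$, we get $L \cdot \sigma(h) = \sigma(L \cdot h)$ for every $h \in E$. Given any conjugate $\tilde f$ of $f$, choose $\sigma \in \Gal(E/C(x))$ with $\sigma(f) = \tilde f$; then $L \cdot \tilde f = \sigma(L \cdot f) = 0$.

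For (ii), let $V$ be the $C$-span in $E$ of all conjugates of $f$. By (i), $V \subseteq V_E(L)$. Since $C$ is algebraically closed, the field of constants of the algebraic extension $E/C(x)$ is still $C$, so Lemma~\ref{lem:sol-rightfactor}.\ref{it:rf1} gives $\dim_C V_E(L) \leq r$; it thus suffices to show $\dim_C V \geq r$. Suppose for contradiction that $s := \dim_C V < r$, and pick a $C$-basis $h_1, \ldots, h_s$ of $V$. These are $C$-linearly independent and $C$ is the field of constants of $E$, so the Wronskian $W(h_1, \ldots, h_s)$ is a nonzero element of $E$. Define the monic order-$s$ operator $M \in E[D]$ by the standard Wronskian-quotient construction, whose solution space in $E$ is exactly $V$; in particular, $M \cdot f = 0$.

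The key step is to show that $M$ in fact lies in $C(x)[D]$. For every $\sigma \in \Gal(E/C(x))$, $\sigma$ permutes the conjugates of $f$ and therefore $\sigma(V) = V$, so $\sigma(h_1), \ldots, \sigma(h_s)$ is another $C$-basis of $V$. The monic operator of order $s$ whose solution space is $V$ is uniquely determined by $V$, so $\sigma(M) = M$; hence all coefficients of $M$ are $\Gal(E/C(x))$-invariant and lie in $C(x)$. But then $M \in C(x)[D]$ annihilates $f$ with $\ord(M) = s < r$, contradicting the minimality of $L$. We conclude $\dim_C V = r$, so $V = V_E(L)$, which yields both assertions of (ii). The main obstacle I anticipate is verifying that the Wronskian-quotient operator descends from $E[D]$ to $C(x)[D]$; this step hinges on the uniqueness of the monic operator annihilating a prescribed finite-dimensional solution space together with the Galois-invariance of $V$ as a set.
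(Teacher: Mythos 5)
Your proof is correct, and part (i) coincides with the paper's argument (you additionally justify why each $\sigma\in\Gal(E/C(x))$ commutes with the derivation, a point the paper takes for granted). For part (ii) the core strategy is the same as the paper's --- realize the $C$-span $V$ of the conjugates as the solution space of an operator that is Galois-invariant, descend it to $C(x)[D]$, and invoke minimality of $L$ --- but the implementation differs. The paper builds the descended operator as $\lclm(f_1D-f_1',\dots,f_sD-f_s')$ over the conjugates $f_i$, obtains its Galois-invariance from the symmetry of the lclm in its arguments, computes its solution space via Lemma~\ref{lem:sol-lcml-sym}.\ref{it:sol-lclm}, and concludes that it is a minimal annihilator of $f$, hence equal to $L$ up to a unit of $C(x)$. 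You instead take the monic Wronskian-quotient operator of a basis of $V$, derive its Galois-invariance from the uniqueness of the monic order-$s$ operator with a prescribed $s$-dimensional solution space, and finish by contradiction with $\ord(L)$. Both routes rest on the two facts you state explicitly: that the constants of $E$ are exactly $C$ (needed both for the nonvanishing of the Wronskian and to apply Lemma~\ref{lem:sol-rightfactor}.\ref{it:rf1} in $E$) and that $E/C(x)$ is Galois. Your variant is slightly more self-contained; the paper's avoids Wronskians by leaning on the lclm machinery already set up in its preliminaries.
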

\begin{proof}
	The field $E$ is a Galois extension of $C(x)$. Let $\Gal(E/C(x))$ be the Galois group of $E$ over $C(x)$. 
	\begin{enumerate}[label=(\roman*)]
		\item The set $\{\tau(f)\mid \tau \in \Gal(E/C(x))\}$ consists of all conjugates of $f$. Since $L\in C(x)[D]$ has coefficients in $C(x)$, for any $\tau\in \Gal(E/C(x))$, we have $L\cdot \tau(f) = \tau(L\cdot f) = 0$.
		\item Let $f_1,\ldots, f_s$ be all conjugates of $f$. For each $f_i$, the operator $L_i = f_i D - f_i'$ is a minimal annihilator of $f_i$ in $E[D]$. We take $\bar L :=\lclm(L_1,\ldots, L_s)$. For any automorphism $\tau \in \Gal(E/C(x))$, $\tau(\bar L)$ is obtained by applying $\tau$ to the coefficients of $\bar L$. Since taking the least common left multiple (lclm) is commutative, it follows that
		\[\tau(\bar L) = \lclm(\tau(L_1), \ldots, \tau(L_s)) = \lclm (L_1,\ldots, L_s)=\bar L.\]
		This implies that $\bar L$ has coefficients in $C(x)$. Since $\dim_CV_E(L_i) = \ord(L_i)=1$ for all $i=1,\ldots, s$, by Lemma~\ref{lem:sol-lcml-sym}.\ref{it:sol-lclm} we have
		\[V_E(\bar L) = V_E(L_1) + \cdots + V_E(L_s)=\Span_C\{f_1,\ldots,f_s\}\quad\text{and}\quad \dim_CV_E(\bar L) = \ord(\bar L).\] 
		By the item $(i)$, all conjugates of $f$ must be solutions of its annihilator. Thus $\bar L$ is a minimal annihilator of $f$ in $C(x)[D]$. Since $L$ is also a minimal annihilator of $f$ in $C(x)[D]$, we get $L = u \bar L$ for some $ u \in C(x)\setminus\{0\}$. So $L$ and $\bar L$ have the same solution space.\qedhere
	\end{enumerate}
\end{proof}

\begin{theorem}
	Let $L,M\in C(x)[D]$ be of positive order. If $L$ and $M$ have only algebraic solutions, then Algorithm~\ref{alg:global} returns the global quasi-symmetric quotient of $M$ by~$L$.
\end{theorem}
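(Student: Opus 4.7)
The plan is to follow the template of the previous two cases: after the shift $x\mapsto x+\xi$ in Algorithm~\ref{alg:global}, one may assume that $0$ is an ordinary point of both $L$ and $M$ (shifting preserves the property of having only algebraic solutions). It then suffices to verify the two conditions~(a) and~(b) of Lemma~\ref{lem: quotient - equivalent condition 2} for the colon space $(V(M):V(L))\subseteq C((x))$. For~(a), I would pick any $g\in V(L)$ with $g(0)\neq 0$ (which exists by Lemma~\ref{lem:ordinary_fund_sys}); for every $h\in (V(M):V(L))$ both $hg\in V(M)$ and $g$ are algebraic, so $h=(hg)/g$ is algebraic and in particular D-finite.

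The substance lies in~(b). Given $h\in (V(M):V(L))$ with minimal annihilator $Q_h\in C(x)[D]$, Lemma~\ref{lem: min ann - algebraic}(ii) describes the solution space of $Q_h$ in a Galois closure $E$ of $C(x)(h)$ as the $C$-span of all conjugates of $h$, of dimension $\ord(Q_h)$. The goal is to show that every such conjugate $\sigma(h)$ lies again in $(V(M):V(L))$; both required properties in~(b) then follow at once. The key preliminary observation is that $V(L)$ and $V(M)$ are themselves closed under the absolute Galois group $G=\Gal(\overline{C(x)}/C(x))$: Lemma~\ref{lem:ordinary_fund_sys} together with the hypothesis that $L$ has only algebraic solutions forces $V(L)\subseteq\overline{C(x)}$ with $\dim_C V(L)=\ord(L)$, and every $\sigma\in G$ commutes with $D$ on $\overline{C(x)}$ (by uniqueness of the derivation extension) and fixes the coefficients of $L$, so it sends solutions of $L$ to solutions of $L$; a comparison of dimensions via Lemma~\ref{lem:sol-rightfactor}(i) then gives $\sigma(V(L))=V(L)$. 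The analogous statement holds for $V(M)$. A short calculation $\sigma(h)\sigma(g)=\sigma(hg)\in V(M)$, combined with the bijectivity of $\sigma$ on $V(L)$, then yields $\sigma(h)\in (V(M):V(L))$ for every $\sigma\in G$.

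With Galois-invariance in hand, the $C$-span of the conjugates of $h$---which by Lemma~\ref{lem: min ann - algebraic}(ii) has dimension $\ord(Q_h)$---is contained in $(V(M):V(L))\subseteq C((x))$, and in particular in $V(Q_h)$. Together with the universal bound $\dim_C V(Q_h)\le \ord(Q_h)$ of Lemma~\ref{lem:sol-rightfactor}(i), this forces the two conclusions $V(Q_h)\subseteq (V(M):V(L))$ and $\dim_C V(Q_h)=\ord(Q_h)$ required by~(b). Hence Lemma~\ref{lem: quotient - equivalent condition 2} produces a local quasi-symmetric quotient at $0$, and Lemma~\ref{lem: local-global} upgrades it to the global one returned by Algorithm~\ref{alg:global}.

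The main obstacle I anticipate is precisely this Galois-invariance step. A priori, a conjugate of an algebraic Laurent series need not be a Laurent series---it can be a genuine Puiseux series with fractional exponents---and elements of the colon space are only known to live in $C((x))$. It is exactly the hypothesis that $L$ and $M$ have only algebraic solutions at an ordinary point that rules this out, by forcing $V(L)$ and $V(M)$ to contain every algebraic solution; executing this comparison between $\overline{C(x)}$-solution spaces and $C((x))$-solution spaces cleanly through the dimension bound Lemma~\ref{lem:sol-rightfactor}(i) will be the crux of the argument.
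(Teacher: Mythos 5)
Your proposal is correct and matches the paper's proof in all essentials: both reduce to conditions (a) and (b) of Lemma~\ref{lem: quotient - equivalent condition 2}, establish that the colon space consists of algebraic series, and use Galois invariance of $V(L)$ and $V(M)$ together with Lemma~\ref{lem: min ann - algebraic} to show that all conjugates of $h$ lie in $(V(M):V(L))$, whence $V(Q_h)$ is their span. The only (cosmetic) difference is that the paper works inside a fixed finite Galois extension $E$ and packages the invariance via the conjugate-closed generating set in Corollary~\ref{cor: colon}, whereas you argue with the absolute Galois group and a dimension comparison via Lemma~\ref{lem:sol-rightfactor}\ref{it:rf1} -- the same mechanism the paper uses implicitly.
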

\begin{proof}
	In line~3 of Algorithm~\ref{alg:global}, after the shift $x\mapsto x+\xi$ , $M_\xi$ and $L_\xi$ still have only algebraic solutions. So we may further assume that $0$ is an ordinary point of $L$ and $M$. Similar to the proof of Theorem~\ref{thm: hyper case}, we only need to verify the conditions \ref{it:assump1} and~\ref{it:assump2} in Lemma~\ref{lem: quotient - equivalent condition 2}.
	
	Since $0$ is an ordinary point of $L$ and $M$, it follows from Lemma~\ref{lem:ordinary_fund_sys} that $L$ has $r=\ord(L)$ linearly independent solutions $\{g_1,\ldots,g_r\}$ in $C[[x]]$ and $M$ has $n=\ord(M)$ linearly independent solutions $\{f_1,\ldots,f_n\}$ in $C[[x]]$. By the assumption, $g_1,\ldots, g_r, f_1,\ldots,f_n$ are algebraic over $C(x)$. Let $E$ be the algebraic extension of $C(x)$ generated by $g_1,\ldots, g_r,f_1,\ldots, f_n$ and all their conjugates. Then $E$ is a Galois extension of $C(x)$. Let $\Gal(E/C(x)) = \{\tau_1,\ldots,\tau_t\}$ be the Galois group of $E$ over $C(x)$. 
	
	By Lemma~\ref{lem: min ann - algebraic}, for each $1\leq i\leq r$ and each $1\leq j \leq t$, the element $\tau_j(g_i)$ is still a solution of $L$. So the set $\{\tau_j(g_i)\mid 1\leq i\leq r, 1\leq j\leq t\}$ is also a generating set of the solution space~$V(L)$.
	Note that the computation of the colon space $(V(M): V(L))$ does not depend on the choice of generating sets of $V(L)$. Using Corollary~\ref{cor: colon}, we obtain that
	\begin{equation}\label{eq:colon-alg}
		(V(M): V(L)) = \bigcap_{i=1}^r \bigcap_{j=1}^t (V(M): \{\tau_j(g_i)\}) = \bigcap_{i=1}^r \bigcap_{j=1}^t \Span_C\left\{\frac{f_1}{\tau_j(g_i)}, \ldots,\frac{f_n}{\tau_j(g_i)}\right\}\subseteq E\cap C((x)).
	\end{equation}
	
	\ref{it:assump1}: By~\eqref{eq:colon-alg}, every power series in $(V(M): V(L))$ is algebraic and hence it is D-finite.
	
	\ref{it:assump2}: For a fixed element $h\in (V(M): V(L))$, let $Q_h$ be a minimal annihilator of $h$ in $C(x)[D]$. By~\eqref{eq:colon-alg}, we have $h\tau_j(g_i) \in V(M)$ for all $1\leq i \leq r$ and all $1\leq j\leq t$. Then Lemma~\ref{lem: min ann - algebraic}.\ref{it:alg1} implies that 
	\begin{equation}\label{eq:tau}
		\tau(h\tau_j(g_i)) = \tau(h) (\tau\circ\tau_j)(g_i)\in V(M) \quad\text{for all }\tau\in \Gal(E/C(x)).
	\end{equation}
	For each fixed $\tau\in \Gal(E/C(x))$, $\tau\circ \tau_j$ for $j=1,\ldots,t$ run through all elements of the group $\Gal(E/C(x))$. It follows from \eqref{eq:colon-alg} and~\eqref{eq:tau} that $\tau(h)\in (V(M): V(L))$ for all $\tau \in \Gal(E/C(x))$. Then all conjugates of~$h$ belong to $(V(M): V(L))$. Therefore, by Lemma \ref{lem: min ann - algebraic}.\ref{it:alg2}, we have $V(Q_h)\subseteq (V(M): V(L))$ and $\dim_CV(Q_h) = \ord(Q_h)$.
\end{proof}
\section{Truncation and intersection of power series subspaces}\label{sec:truncation}

For an operator $N\in C[x][D]$, let $V$ be the solution space of $N$ in the ring of formal power series $C[[x]]$. Then $V$ is a $C$-vector space of finite dimension, at most $\ord(N)$. By linear algebra, the intersection of several $C$-vector subspaces of $V$ is still a $C$-vector space of finite dimension. However, the elements of $V$ are formal power series with infinitely many coefficients. To compute a basis of the intersection space, or to determine its dimension, we shall work with truncated power series to approximate the intersection. Since power series solutions of $N$ satisfy a recurrence relation, the required truncation precision can be determined by the following proposition. This result will be used in the next section to determine the dimension of the colon space $(V(M):  V(L))$, and to compute a basis for it. 

For a $C$-vector subspace $W$ of $C[[x]]$,  and $g\in C[[x]]$, we write $gW$ to denote the set $\{gf \mid f\in W\}$. Then $gW$ is also a $C$-vector subspace of $C[[x]]$.

\begin{prop}\label{prop: intersection dim}
	Let $V$ be the solution space of an operator $N\in C[x][D]$ in $C[[x]]$. Let $W_1,\ldots, W_r$ be $C$-vector subspaces of $C[[x]]$ such that $gW_1,\ldots, gW_r$ are $C$-vector subspaces of $V$, where $g=\sum_{i=\mu}^\infty b_ix^i$ with $b_i\in C$ and $b_\mu\neq 0$. Let $k_0=\max \Z_N - \mu$, where $\Z_N$ is defined in~\eqref{eq:Z_N}. Then for all $k> k_0$,
	\begin{enumerate}[label=(\roman*)]
		\item $\dim_{C}\left(\, \bigcap_{i=1}^rW_i \right) = \dim_C\left(\, \bigcap_{i=1}^r T_k(W_i) \right)$
		\item $T_k\left(\, \bigcap_{i=1}^r W_i \right) = \bigcap_{i=1}^r T_k(W_i)$
	\end{enumerate}
\end{prop}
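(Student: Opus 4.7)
The plan is to reduce both statements to a single injectivity property: the truncation map $T_k$, when restricted to any $W_i$ (and hence to their intersection), is injective whenever $k > k_0$. Once this is established, part (ii) reduces to a straightforward lifting argument, and part (i) follows from the fact that an injective $C$-linear map preserves dimension.

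First, I would prove injectivity of $T_k\colon W_i \to T_k(W_i)$ for each $i$ and each $k > k_0$. Suppose $f \in W_i$ satisfies $T_k(f) = 0$, i.e., $f = O(x^{k+1})$. Because $gW_i \subseteq V$, the product $gf$ lies in $V$ and is a power series solution of $N$. Since $g = \sum_{i \geq \mu} b_i x^i$ with $b_\mu \neq 0$, we have $gf = O(x^{k+1+\mu})$. Now $k > k_0 = \max \Z_N - \mu$ gives $k + \mu \geq \max \Z_N + 1$, so every coefficient of $gf$ up to index $\max \Z_N$ vanishes. Applying Lemma~\ref{lem:series-sol-rec} inductively, the recurrence satisfied by solutions of $N$ forces all remaining coefficients to vanish too, since for $k > \max \Z_N$ we have $p_0(k) \neq 0$ and the right-hand side of the recurrence involves only previously vanishing coefficients. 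Hence $gf = 0$, and since $g \neq 0$ is a non-zero-divisor in $C[[x]]$ (write $g = x^\mu \tilde g$ with $\tilde g$ a unit in $C[[x]]$), we conclude $f = 0$.

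For part~(ii), the inclusion $T_k\bigl(\bigcap_{i=1}^r W_i\bigr) \subseteq \bigcap_{i=1}^r T_k(W_i)$ is immediate from $T_k$ being $C$-linear. For the reverse, take $h \in \bigcap_{i=1}^r T_k(W_i)$, so that for each $i$ there exists $f_i \in W_i$ with $T_k(f_i) = h$. For any $i, j$, the difference $f_i - f_j$ satisfies $T_k(f_i - f_j) = 0$, and both $g f_i$ and $g f_j$ lie in $V$, so $g(f_i - f_j) \in V$ and $g(f_i - f_j) = O(x^{k+1+\mu})$. The same vanishing argument as above gives $f_i = f_j$, so all $f_i$ coincide with a single element $f \in \bigcap_{i=1}^r W_i$ with $T_k(f) = h$.

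Part~(i) then follows at once: the restriction of $T_k$ to $\bigcap_{i=1}^r W_i$ is injective (as a restriction of the injective map on each $W_i$), and by part~(ii) its image is exactly $\bigcap_{i=1}^r T_k(W_i)$, so both sides have the same $C$-dimension. The main obstacle I expect is making sure the bound $k > k_0 = \max \Z_N - \mu$ is used cleanly, i.e., that the shift by $\mu$ coming from the leading exponent of $g$ is accounted for so that the coefficients of $g f$ that one needs to determine via the recurrence of $N$ lie strictly beyond $\max \Z_N$. The non-zero-divisor property of $g$ in $C[[x]]$ (via factoring out $x^\mu$ and inverting the unit) is the other technical point, but it is straightforward given Lemma~\ref{lem: series multi}.
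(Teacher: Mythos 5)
Your proof is correct, and it takes a genuinely more direct route than the paper's. The paper proceeds through coefficient matrices: it shows that the row space of $\coeff(\varphi_1,\ldots,\varphi_\rho)$ stabilizes under truncation at precision $k>\max\Z_N$ for solutions of $N$ (Lemma~\ref{lem:T_k(W)}), transfers this through multiplication by $g$ (Lemma~\ref{lem: gW}), proves the case $r=2$ by identifying $\ker(A)$ with $\ker(A_k)$ (Lemma~\ref{lem:dim r=2}), and then inducts on $r$. Your argument isolates the single fact that really drives all of this: for $k>k_0$ the truncation $T_k$ annihilates no nonzero element of any subspace $W$ with $gW\subseteq V$, because $gf\in V$ with $gf=O(x^{k+\mu+1})$ and $k+\mu\geq\max\Z_N+1$ forces $gf=0$ via the recurrence of Lemma~\ref{lem:series-sol-rec}, and $C[[x]]$ is a domain. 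The one point worth making explicit — and you do handle it correctly — is that in part (ii) the difference $f_i-f_j$ lies in $W_i+W_j$ rather than in a single $W_\ell$, but since $g(W_i+W_j)\subseteq V$ the same vanishing argument still applies; this is exactly the content the paper encodes in its kernel equality $\ker(A)=\ker(A_k)$ for the stacked coefficient matrix. What your approach buys is the elimination of the induction on $r$ and of the row-space bookkeeping; what the paper's formulation buys is a matrix presentation that feeds directly into the linear-algebra computations of Section~\ref{sec:order}. Both rest on the same underlying lemma, so the arguments are equivalent in substance though not in organization.
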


To prove this proposition, we need several lemmas.  Let $C^\bN$ denote the set of all infinite sequences $(a_0,a_1,a_2,
\ldots)$ with $a_i\in C$.  A formal power series can be viewed as a sequence in $C^\bN$ via its sequence of coefficients. Under this identification, $C[[x]]$ is isomorphic to $C^\bN$ as a ring. In particular, $C^\bN$ is also a $C$-vector space of infinite dimension. The \emph{coefficient vector} of a formal power series $\varphi=\sum_{i=0}^\infty a_ix^i\in C[[x]]$ is defined as the column vector: \[\coeff(\varphi):=(a_0,a_1,a_2,\ldots)^T\in C^\bN,\]
where the $i$-th entry of $\coeff(\varphi)$ is the coefficient of $f$ in $x^{i-1}$. Let $\varphi_1,\ldots,\varphi_\rho$ be several formal power series in $C[[x]]$. We write each $\varphi_j = \sum_{i=0}^\infty a_{i,j}x^i$ with $a_{i,j}\in C$. The \emph{coefficient matrix} of $\varphi_1,\ldots,\varphi_{\rho}$ is defined as 
\[\coeff(\varphi_1,\ldots,\varphi_\rho):= (\coeff(\varphi_1),\ldots,\coeff(\varphi_\rho))=	\begin{pmatrix}
	a_{0,1}&\cdots&a_{0,\rho}\\
	a_{1,1}&\cdots&a_{1,\rho}\\
	\vdots& & \vdots
\end{pmatrix}\in C^{\bN\times \rho}.\]

For a matrix $A=(a_{i,j})\in C^{\bN\times \rho}$, its \emph{row space} is defined by 
\[\row(A):=\Span_C\{(a_{i,1},\ldots,a_{i,\rho})\in C^\rho \mid i\in\bN\},\]
which is a $C$-vector subspace of $C^\rho$. The \emph{kernel} of $A$ is defined by
\[\ker(A) :=\left\{(s_1,\ldots,s_\rho)\in C^{\rho} \,\middle |\, \sum_{j=1}^\rho a_{i,j}s_j= 0 \text{ for all $i\in \bN$} \right\}\]
If $A=\coeff(\varphi_1,\ldots,\varphi_\rho)\in C^{\bN\times \rho}$ with $\varphi_i\in C[[x]]$, then $\ker(A) = \{(s_1,\ldots,s_\rho)\in C^\rho \mid \sum_{j=1}^\rho s_i\varphi_i = 0\}$.

\begin{lemma}\label{lem:T_k(W)}
	Let $N\in C[x][D]$ and let $\{\varphi_1,\ldots,\varphi_\rho\}$ be a finite set of solutions of $N$ in $C[[x]]$. Let $k_0=\max \Z_N$. Then for all $k> k_0$, the row spaces of \[\coeff(\varphi_1,\ldots,\varphi_\rho)\quad \text{and} \quad \coeff(T_{k}(\varphi_1),\ldots, T_{k}(\varphi_\rho))\] 
	are equal.
\end{lemma}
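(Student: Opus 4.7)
The plan is to unwind the definition of the row space on both sides and then use Lemma~\ref{lem:series-sol-rec} to show that every row of $\coeff(\varphi_1,\ldots,\varphi_\rho)$ with index $i>k_0$ is already a $C$-linear combination of earlier rows, hence lies in the row space of the truncated coefficient matrix.

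Write $\varphi_j=\sum_{i=0}^\infty a_{i,j}x^i$ for $j=1,\ldots,\rho$. By definition, the row space of $\coeff(\varphi_1,\ldots,\varphi_\rho)$ equals $\Span_C\{(a_{i,1},\ldots,a_{i,\rho})\mid i\in\bN\}$, while the row space of $\coeff(T_k(\varphi_1),\ldots,T_k(\varphi_\rho))$ equals $\Span_C\{(a_{i,1},\ldots,a_{i,\rho})\mid 0\le i\le k\}$ since the rows of index $>k$ are zero after truncation. The inclusion ``$\supseteq$'' is immediate.

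For the converse, fix $N\cdot x^s=x^{s+\sigma_N}(p_0(s)+p_1(s)x+\cdots+p_t(s)x^t)$ as in~\eqref{eq:Nx^s}. Since each $\varphi_j$ is a solution of $N$ and $k_0=\max\Z_N$, Lemma~\ref{lem:series-sol-rec} gives, for every $i>k_0$ and every $j\in\{1,\ldots,\rho\}$,
\[
a_{i,j}\;=\;-\frac{1}{p_0(i)}\sum_{\ell=1}^{\min\{t,i\}}p_\ell(i-\ell)\,a_{i-\ell,j}.
\]
Crucially, the scalar coefficients $-p_\ell(i-\ell)/p_0(i)$ do not depend on $j$. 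Hence the entire $i$-th row satisfies
\[
(a_{i,1},\ldots,a_{i,\rho})\;=\;-\frac{1}{p_0(i)}\sum_{\ell=1}^{\min\{t,i\}}p_\ell(i-\ell)\,(a_{i-\ell,1},\ldots,a_{i-\ell,\rho}),
\]
so the $i$-th row lies in the $C$-span of the rows with indices $i-1,\ldots,i-t$.

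A straightforward induction on $i>k_0$ then shows that every such row lies in $\Span_C\{(a_{j,1},\ldots,a_{j,\rho})\mid 0\le j\le k_0\}$. Since $k>k_0$, all rows with index $\le k_0$ are already among the rows of $\coeff(T_k(\varphi_1),\ldots,T_k(\varphi_\rho))$, and so every row of the full coefficient matrix lies in its row space, giving the ``$\subseteq$'' inclusion and completing the proof. There is no real obstacle here; the only thing to take care of is the uniformity of the recurrence coefficients in $j$, which is what lets a single linear dependence collapse $\rho$ separate scalar recurrences into one vector recurrence.
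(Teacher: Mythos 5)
Your proof is correct and follows essentially the same route as the paper: identify the row spaces with the spans of the coefficient rows, apply Lemma~\ref{lem:series-sol-rec} to express each row of index $i>k_0$ as a $C$-linear combination of the preceding rows (with coefficients independent of $j$, exactly the uniformity the paper exploits), and conclude by induction that all rows lie in the span of the first $k_0+1$ rows.
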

\begin{proof}
	For each $1\leq j \leq \rho$, we write $\varphi_j= \sum_{i=0}^\infty a_{i,j}x^i$ with $a_{i,j}\in C$. For each $i\in \bN$, let $\va_i =(a_{i,1},\ldots, a_{i,\rho})\in C^\rho$. Then the row space of $\coeff(\varphi_1,\ldots,\varphi_\rho)$ is generated by $\{\va_i \mid i\in \bN\}$. The row space of $\coeff(T_{k}(\varphi_1),\ldots, T_{k}(\varphi_\rho))$ is generated by $\{\va_i \mid 0\leq i\leq k\}$. So it suffices to show by induction on~$k$ that $\va_k \in \Span_C\{\va_0,\ldots,\va_{k_0}\}$ for all $k\geq 0$. For $0\leq k\leq k_0$, it is clearly true. For $k>k_0$, we assume that $a_i\in \Span_C\{\va_0,\ldots,\va_{k_0}\}$ for $0\leq i\leq k-1$.  We write $N\cdot x^s=x^{s+\sigma_N}(q_0(s)+q_1(s)x+\cdots + q_t(s)x^t)$ in the form~\eqref{eq:Nx^s}. Since the $\varphi_j$ are power series solutions of $N$ and $k> k_0$, it follows from Lemma~\ref{lem:series-sol-rec} that 
	\[\va_k = -\frac{1}{q_0(k)}\left(\,\sum_{i=1}^{\min\{t,k\}}\va_{k-i}q_i(k-i)\right)\in \Span_C \{\va_{k-1},\ldots, \va_{k-\min\{t,k\}}\}\subseteq \Span_C \{\va_{k-1},\ldots, \va_0\}.\]
	By the induction hypothesis, we get $\va_k\in\Span_C\{\va_0,\ldots,\va_{k_0}\}$. This completes the proof.
\end{proof}

\begin{lemma}\label{lem: gW}
	Let $\{\varphi_1,\ldots,\varphi_\rho\}\subseteq C[[x]]$ and let $g=\sum_{i=\mu}^\infty b_ix^i \in C[[x]]$ with $b_i\in C$ and $b_\mu \neq 0$. Then for all $k\geq 0$, the row spaces of
	\[\coeff(T_k(\varphi_1),\ldots,T_k(\varphi_\rho))\quad \text{and} \quad \coeff(T_{k+\mu}(g\varphi_1),\ldots, T_{k+\mu}(g\varphi_\rho))\]
	are equal. Therefore the row spaces of 
	\[\coeff(\varphi_1,\ldots,\varphi_\rho)\quad \text{and} \quad \coeff(g\varphi_1,\ldots, g\varphi_\rho)\]
	are equal.
\end{lemma}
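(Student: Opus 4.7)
The plan is to show that multiplication by $g$ corresponds, on the level of coefficient rows, to an infinite lower-triangular transformation whose diagonal entries are all equal to $b_\mu\neq 0$, hence invertible. Everything else is a bookkeeping consequence of this.

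First I would set up notation. Write $\varphi_j=\sum_{i=0}^\infty a_{i,j}x^i$ and let $R_i:=(a_{i,1},\ldots,a_{i,\rho})\in C^\rho$ be the $i$-th row of $\coeff(\varphi_1,\ldots,\varphi_\rho)$. Then $g\varphi_j=\sum_{i=\mu}^\infty c_{i,j}x^i$ with $c_{i,j}=\sum_{l=\mu}^i b_l\,a_{i-l,j}$, so the $i$-th row of $\coeff(g\varphi_1,\ldots,g\varphi_\rho)$ is
\[
S_i \;=\;\begin{cases}0,& 0\le i<\mu,\\ \sum_{l=\mu}^{i} b_l\, R_{i-l} \;=\; b_\mu R_{i-\mu}+b_{\mu+1}R_{i-\mu-1}+\cdots+b_i R_0,& i\ge \mu.\end{cases}
\]

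Next I would prove the two inclusions between the row spaces of the truncated matrices, which are $\Span_C\{R_0,\ldots,R_k\}$ and $\Span_C\{S_\mu,\ldots,S_{k+\mu}\}$ respectively (the zero rows $S_0,\ldots,S_{\mu-1}$ contribute nothing). The inclusion $\Span_C\{S_\mu,\ldots,S_{k+\mu}\}\subseteq\Span_C\{R_0,\ldots,R_k\}$ is immediate from the display above, since for $0\le j\le k$ the row $S_{j+\mu}$ is a $C$-linear combination of $R_0,\ldots,R_j$. For the reverse, I would use $b_\mu\neq 0$ to invert the triangular relation: rewrite
\[
R_j \;=\; \tfrac{1}{b_\mu}\bigl(S_{j+\mu} - b_{\mu+1}R_{j-1}-\cdots-b_{j+\mu}R_0\bigr),
\]
and argue by induction on $j$ (base case $j=0$: $R_0=\tfrac{1}{b_\mu}S_\mu$) that $R_j\in\Span_C\{S_\mu,S_{\mu+1},\ldots,S_{j+\mu}\}$.

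Having established the equality of row spaces for every $k\ge 0$, the statement for the non-truncated matrices follows by letting $k\to\infty$: the row space of $\coeff(\varphi_1,\ldots,\varphi_\rho)$ equals $\bigcup_{k\ge 0}\Span_C\{R_0,\ldots,R_k\}$, which by the above equals $\bigcup_{k\ge 0}\Span_C\{S_\mu,\ldots,S_{k+\mu}\}$, i.e.\ the row space of $\coeff(g\varphi_1,\ldots,g\varphi_\rho)$. There is no real obstacle here; the only thing to be careful about is the bookkeeping of indices and the role of the hypothesis $b_\mu\neq 0$, which is precisely what makes the triangular transformation invertible and hence lets one pass from $S_i$'s back to $R_i$'s.
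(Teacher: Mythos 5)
Your proof is correct and follows essentially the same route as the paper: both establish the forward inclusion from the triangular relation $S_{i}=b_\mu R_{i-\mu}+\cdots+b_iR_0$ and then invert it using $b_\mu\neq 0$. The only cosmetic difference is that you invert by explicit back-substitution and induction on $j$, whereas the paper invokes the inverse series $g^{-1}=\sum_{i=-\mu}^\infty\tilde b_ix^i$ and applies the forward argument symmetrically; these are the same computation.
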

\begin{proof}
	For each $1\leq j\leq \rho$, we denote $\psi_j :=g\varphi_j$, and write $\varphi_j= \sum_{i=0}^\infty a_{i,j}x^i$ and $\psi_j= \sum_{i=0}^\infty c_{i,j}x^i$, where $a_{i,j}, c_{i,j}\in C$. For each $i\in\bN$, let $\va_i=(a_{i,1},\ldots,a_{i,\rho})$ and $\vc_i=(c_{i,1},\ldots,c_{i,\rho})$. It suffices to prove the first statement in the lemma. In other words, we need to show the claim that \[\Span_C\{\va_0,\va_1,\ldots,\va_k\}=\Span_C\{\vc_0,\vc_1,\ldots,\vc_{k+\mu}\} \text{ for all } k\geq 0.\]
	
	Since $\psi_j =g\varphi_j$, it follows that for all $1\leq j<\rho$, 
	\[c_{j,\ell} =0 \text{ for all } 0\leq \ell<\mu, \quad \text{and}\quad c_{j,\ell} = b_\ell a_{j,0} + b_{\ell-1}a_{j,1} + \cdots + b_\mu a_{j,\ell-\mu} \text{ for all } \ell\geq \mu.\]
	Therefore, for all $0\leq \ell < \mu$, $\vc_\ell =0$ and for all $\ell\geq \mu$, $\vc_\ell = b_\ell \va_{0} + b_{\ell-1}\va_{1} + \cdots + b_\mu \va_{\ell-\mu}$ is a $C$-linear combination of $\va_0, \va_1,\ldots, \va_{\ell-\mu}$. Thus $\Span_C\{\vc_0,\ldots,\vc_{k+\mu}\}\subseteq\Span_C\{\va_0,\ldots,\va_k\}$. Since $g$ is an invertible element in $C((x))$, we have $\varphi_j = g^{-1}\psi_j$, where $g^{-1} = \sum_{i=-\mu}^\infty \tilde b_i x^i\in C((x))$ with $\tilde b_i\in C$ and $\tilde b_{-\mu} = b_\mu^{-1}\neq 0$. Similarly, we get $\Span_C\{\va_0,\ldots,\va_k\}\subseteq \Span_C\{\vc_0,\ldots,\vc_{k+\mu}\}$. This proves the claim.
\end{proof}

\begin{lemma}\label{lem:dim r=2}
	Let $V$ be the solution space of an operator $N\in C(x)[D]$ in $C[[x]]$. Let $W_1,W_2$ be $C$-vector subspaces of $C[[x]]$ such that $gW_1,gW_2$ are $C$-vector subspaces of $V$, where $g=\sum_{i=\mu}^\infty b_ix^i\in C[[x]]$ with $b_\mu\neq 0$. Let $k_0=\max \Z_N - \mu$. Then for all $k> k_0$,
	\begin{enumerate}[label=(\roman*)]
		\item $\dim_{C}\left( W_1\cap W_2 \right) = \dim_C\left(\,T_k(W_1)\cap T_k(W_2) \right)$
		\item $T_k\left(W_1\cap W_2 \right) =  T_k(W_1)\cap T_k(W_2)$
	\end{enumerate}
\end{lemma}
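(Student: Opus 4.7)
The plan is to chain Lemmas~\ref{lem: gW} and~\ref{lem:T_k(W)} to promote the hypothesis ``$gW_i\subseteq V$'' into a statement about truncation of the $W_i$ themselves. Concretely, for any finite tuple $\varphi_1,\ldots,\varphi_\rho$ lying in $W_1$ (or in $W_2$), the series $g\varphi_j$ lie in $V$, and for $k>k_0$ one has $k+\mu>\max\Z_N$. Chaining Lemma~\ref{lem: gW} (applied to $\varphi_1,\ldots,\varphi_\rho$), Lemma~\ref{lem:T_k(W)} (applied to the solutions $g\varphi_1,\ldots,g\varphi_\rho\in V$ at precision $k+\mu$), and Lemma~\ref{lem: gW} in reverse then yields that the row spaces, and hence the kernels viewed as subspaces of $C^\rho$, of $\coeff(\varphi_1,\ldots,\varphi_\rho)$ and $\coeff(T_k\varphi_1,\ldots,T_k\varphi_\rho)$ coincide.

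Next, I fix bases $\varphi_1,\ldots,\varphi_p$ of $W_1$ and $\varphi_{p+1},\ldots,\varphi_{p+q}$ of $W_2$ and set $\rho=p+q$. A routine linear-algebra argument shows that the assignment $(s_1,\ldots,s_\rho)\mapsto\sum_{j=1}^p s_j\varphi_j$ restricts to a bijection $\ker\coeff(\varphi_1,\ldots,\varphi_\rho)\to W_1\cap W_2$: surjectivity follows because every $f\in W_1\cap W_2$ has unique expansions $\sum a_j\varphi_j=\sum b_j\varphi_{p+j}$, and injectivity follows from the linear independence of each basis. The analogous assignment $(s_j)\mapsto\sum_{j=1}^p s_j T_k\varphi_j$ is likewise surjective onto $T_k(W_1)\cap T_k(W_2)$; applying the kernel equality above separately to the $p$-tuple $\varphi_1,\ldots,\varphi_p$ and to the $q$-tuple $\varphi_{p+1},\ldots,\varphi_{p+q}$ shows that $T_k\varphi_1,\ldots,T_k\varphi_p$ and $T_k\varphi_{p+1},\ldots,T_k\varphi_{p+q}$ remain linearly independent, from which the injectivity of this second assignment follows.

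Both bijections share the same domain (the two kernels being equal), so composing one with the inverse of the other yields a bijection $W_1\cap W_2\to T_k(W_1)\cap T_k(W_2)$ that by construction sends $f\mapsto T_k(f)$. This gives (ii) immediately, and (i) is an automatic byproduct. The main technical care lies in the chain of row-space equalities and in confirming that the index shift by $\mu$ is handled correctly: the bound $k>k_0=\max\Z_N-\mu$ forces $k+\mu>\max\Z_N\geq 0$ (with the usual convention $\max\emptyset=-\infty$ if $\Z_N$ is empty), so $T_{k+\mu}$ is a legitimate truncation precision and Lemma~\ref{lem:T_k(W)} applies as intended even when $\mu$ is negative, in which case $gW_i\subseteq C[[x]]$ is guaranteed by the hypothesis $gW_i\subseteq V$.
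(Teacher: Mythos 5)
Your proof is correct and follows essentially the same route as the paper's: both rest on the chain $\row(\coeff(\varphi_j))=\row(\coeff(g\varphi_j))=\row(\coeff(T_{k+\mu}(g\varphi_j)))=\row(\coeff(T_k\varphi_j))$ from Lemmas~\ref{lem: gW} and~\ref{lem:T_k(W)}, applied to the concatenated bases of $W_1$ and $W_2$, to conclude that the kernel of the combined coefficient matrix is unchanged by truncation. Your packaging of the conclusion as a single bijection $f\mapsto T_k(f)$ from $W_1\cap W_2$ onto $T_k(W_1)\cap T_k(W_2)$ is only a cosmetic reorganization of the paper's two-step argument (dimension count via kernels, then double inclusion for (ii)).
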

\begin{proof}
	Since $V$ is a $C$-vector space of finite dimension, it follows that $W_1, W_2, gW_1, gW_2$ are also $C$-vector spaces of finite dimension. Let $\{\varphi_1,\ldots,\varphi_{\rho_1}\}$ and $\{\phi_1,\ldots,\phi_{\rho_2}\}$ be bases of $W_1$ and $W_2$, respectively. Then $\{g\varphi_1,\ldots,g\varphi_{\rho_1}\}$ and $\{g\phi_1,\ldots,g\phi_{\rho_2}\}$ are bases of $gW_1$ and $gW_2$, respectively. 
	
	Let $A=\coeff(\varphi_1,\ldots,\varphi_{\rho_1}, \phi_1,\ldots,\phi_{\rho_2})\in C^{\bN\times (\rho_1+\rho_2)}$  and let $A_k\in C^{(k+1)\times(\rho_1+\rho_2)}$ be the matrix consisting of the first $k+1$ rows of $A$. Then, for all $k\geq 0$, $\row(A_k)$ is equal to the row space of $\coeff(T_k(\varphi_1),\ldots,T_k(\varphi_{\rho_1}), T_k(\phi_1),\ldots,T_k(\phi_{\rho_2}))$.
	By linear algebra, we know
	\begin{equation}\label{eq: ker A}
		(s_1,\ldots,s_{\rho_1},t_1,\ldots,t_{\rho_2})\in \ker(A)\quad \Leftrightarrow\quad \sum_{j=1}^{\rho_1}s_i\varphi_i = -\sum_{j=1}^{\rho_2}t_j\phi_j\in W_1\cap W_2,
	\end{equation}
	and for all $k\geq 0$,
	\begin{equation}\label{eq:ker A_k}
		(s_1,\ldots,s_{\rho_1},t_1,\ldots,t_{\rho_2})\in \ker(A_k)\quad \Leftrightarrow\quad \sum_{j=1}^{\rho_1}s_jT_k(\varphi_j) = -\sum_{j=1}^{\rho_2}t_jT_k(\phi_j)\in T_k(W_1)\cap T_k(W_2).
	\end{equation}
	Moreover, we have $\dim_C(W_1\cap W_2)= \dim_C(\ker(A))$ and if $\dim_C(T_k(W_i))=\rho_i$ for $i=1,2$, then $\dim_C(T_k(W_1)\cap T_k(W_2))= \dim_C(\ker(A_k))$.
	\begin{enumerate}[label=(\roman*)]
		\item It suffices to show that for all $k> k_0$, $\ker(A) =\ker(A_k)$ and $\dim_C(T_k(W_i))=\rho_i$ for $i=1,2$. If this holds, then for all $k> k_0$, we have \[\dim_C(W_1\cap W_2) = \dim_C(\ker(A))=\dim_C(\ker(A_k))=\dim_C(T_k(W_1)\cap T_k(W_2)).\]
		To~prove this claim, we use the assumption that $gW_1, gW_2$ are subspaces of the solution space $V$ of the linear differential operator $N$. Let  $B=\coeff(g\varphi_1,\ldots,g\varphi_{\rho_1}, g\phi_1,\ldots,g\phi_{\rho_2})\in C^{\bN\times (\rho_1+\rho_2)}$ and let $B_k\in C^{(k+1)\times(\rho_1+\rho_2)}$ be the matrix consisting of the first $k+1$ rows of $B$. Then, for all $k\geq 0$, $\row(B_k)$ is equal to the row space of $\coeff(T_k(g\varphi_1),\ldots,T_k(g\varphi_{\rho_1}), T_k(g\phi_1),\ldots,T_k(g\phi_{\rho_2}))$. By the assumption, $\{g\varphi_j\}_{j=1}^{\rho_1}$, $\{g\phi_j\}_{j=1}^{\rho_2}$ are solutions of the linear differential operator $N$. Thus by Lemmas \ref{lem:T_k(W)} and~\ref{lem: gW} we obtain that for all $k>k_0$, 
		\begin{equation}\label{eq: row A = row A_k}
			\row(A) = \row(B) = \row(B_{k+\mu})=\row(A_{k}).
		\end{equation}
		Since the kernel of a matrix is determined by its row space, it follows from~\eqref{eq: row A = row A_k} that
		\begin{equation}\label{eq: ker A =ker A_k}
			\ker(A)  =\ker(A_k).
		\end{equation}
		
		Considering the first $\rho_1$ columns of $A$ and $A_k$, we obtain from~\eqref{eq: row A = row A_k} that \[\row (\coeff (\varphi_1,\ldots,\varphi_{\rho_1} )) = \row (\coeff (T_k(\varphi_1),\ldots,T_k(\varphi_{\rho_1}) )).\]
		Thus $\rho_1=\dim_C(W_1)=\dim_C(T_k(W_1))$ because the column rank of a matrix is equal to its row rank. Similarly, we get $\rho_2=\dim_C(W_2)=\dim_C(T_k(W_2))$. 	
%
%
%
%
		\item Since $W_1\cap W_2$ is a subspace of $W_1$, it follows that $T_k(W_1\cap W_2)\subseteq T_k(W_1)$. Similarly, we have $T_k(W_1\cap W_2)\subseteq T_k(W_2)$. Therefore $T_k(W_1\cap W_2)\subseteq T_k(W_1)\cap T_k(W_2)$.
		
		On the other hand, fix an arbitrary integer $k> k_0$, and suppose that $f\in T_k(W_1)\cap T_k(W_2)$. Then $f=\sum_{j=1}^{\rho_1}s_jT_k(\varphi_j) = - \sum_{j=1}^{\rho_2}t_jT_k(\phi_j)$ for some $s_j, t_j\in C$. By \eqref{eq:ker A_k} and~\eqref{eq: ker A =ker A_k}, we get $(s_1,\ldots,s_{\rho_1},t_1,\ldots,t_{\rho_2})\in \ker (A_k)=\ker(A)$. It then follows from~\eqref{eq: ker A} that
		\[g:= \sum_{j=1}^{\rho_1}s_j\varphi_j = - \sum_{j=1}^{\rho_2}t_j\phi_j\in W_1\cap W_2.\]
		Thus $f = T_k(g) \in T_k(W_1\cap W_2)$ because $s_j, t_j\in C$, and hence $T_k(W_1)\cap T_k(W_2)\subseteq T_k(W_1\cap W_2)$.\qedhere
	\end{enumerate}
\end{proof}
\begin{proof}[Proof of Propostion~\ref{prop: intersection dim}]
	\begin{enumerate}[label=(\roman*)]
		\item For $r=1$, since $gW_1$ is a subspace of the solution space $V$ of the operator~$N$ and $g=\sum_{i=\mu}^\infty b_ix^i\in C[[x]]$ with $b_\mu\neq 0$, it follows from Lemmas \ref{lem:T_k(W)} and~\ref{lem: gW} that for all $k> k_0$, 
		\[\dim_C(W_1) =\dim_C(gW_1)=\dim_C(T_{k+\mu}(gW_1))=\dim_C(T_k(W_1)).\]
		Here we use the fact that the column rank of a matrix is equal to its row rank. Suppose $r>1$. Note that $g(W_2\cap\cdots\cap W_r) = (gW_2)\cap \cdots\cap (gW_r)$  because $g$ is invertible in $C((x))$. Then by the assumption, $gW_1$ and $g(W_2\cap\cdots\cap W_r) $ are two $C$-vector subspaces of the solution space~$V$. By Lemma~\ref{lem:dim r=2}, we obtain 
		\[ \dim_{C}\left(W_1 \cap\left(W_2\cap \cdots \cap W_r\right)\right) =\dim_C\left(T_k\left(W_1 \cap \left( W_2 \cap \cdots \cap W_r\right)\right)\right).\]
		Thus $\dim_C\left(\,\bigcap_{i=1}^r W_i\right) =\dim_C\left(T_k\left(\,\bigcap_{i=1}^rW_i\right)\right)$ because  $\,\bigcap_{i=1}^r W_i = W_1\cap \left(W_2\cap \cdots\cap W_r\right)$. 
		\item We prove the statement by induction on $r$. For $r=1$, it is clearly true. For $r>1$, by Lemma~\ref{lem:dim r=2}, we have $T_k(W_1)\cap T_k(W_2) = T_k(W_1\cap W_2)$. By the induction hypothesis on $r-1$, we have
		\[T_k(W_1\cap W_2) \cap T_k(W_3) \cap \cdots \cap T_k(W_r) = T_k((W_1\cap W_2) \cap W_3\cap \cdots \cap W_r).\]
		Therefore $T_k(W_1)\cap T_k(W_2)\cap \cdots \cap T_k(W_r) = T_k(W_1\cap W_2\cap \cdots\cap W_r)$.\qedhere
	\end{enumerate}
\end{proof}

\section{Order bounds for symmetric quotients}\label{sec:order}
An upper bound for the orders of symmetric quotients is given in~\eqref{eq:upper-lower-orders}. A smaller upper bound is given in Proposition~\ref{prop:order} using the dimension of the colon space. To compute a basis of this colon space, we need the following notations.
  
\begin{convention}\label{con}
	Let $L,M\in C(x)[D]$ be of positive order, with $0$ an ordinary point of both $L$ and $M$. Let $\{g_1,\ldots,g_r\}$ be a basis of the solution space $V(L)$ in $C((x))$, where $r=\ord(L)$ and $g_i=x^{i-1}+O(x^i)$ for $i=1,\ldots, r$. Let $\{f_1,\ldots,f_n\}$ be a basis of the solution space $V(M)$ in $C((x))$, where $n=\ord(M)$ and $f_i=x^{i-1}+O(x^i)$ for $i=1,\ldots, n$. Let $(V(M): V(L))$ be the colon space in~$C((x))$. For~each $i=1,\ldots, r$, let $W_i = (V(M): \{g_i\})\cap C[[x]]$, where $(V(M): \{g_i\})$ is the colon space in $C((x))$.
\end{convention}
\begin{lemma}\label{lem: w_i}
	Let $V(L)$, $V(M)$ and $W_i$ be as in Convention~\ref{con}.  Then
	\begin{enumerate}[label=(\roman*)]
		\item\label{it: wi-1} $(V(M): V(L)) = \bigcap_{i=1}^r W_i$;
		\item\label{it: wi-2} for each $i=1,\ldots, r$, $W_i = \Span_C\left\{\frac{f_i}{g_i},\ldots, \frac{f_n}{g_i}\right\}$;
		\item\label{it: wi-3} for each $i=1,\ldots, r$ and for all $k\geq 0$, 
		
		$T_k(W_i) = \Span_C\left\{T_k\left(\frac{f_i}{g_i}\right), \ldots, T_k\left(\frac{f_n}{g_i}\right)\right\}=\Span_C\left\{T_k\left(\frac{T_{k+r-1}(f_i)}{T_{k+r-1}(g_i)}\right),\ldots, T_k\left(\frac{T_{k+r-1}(f_n)}{T_{k+r-1}(g_i)}\right)\right\}$.
		\item\label{it: wi-4} for all $k\geq n-1$, $T_k$ is an injective map from $\bigcap_{i=1}^rT_{k+1}(W_i)$ to $\bigcap_{i=1}^r T_{k}(W_i)$. In particular, $\dim_C\bigcap_{i=1}^rT_{k+1}(W_i)\leq \dim_C\bigcap_{i=1}^rT_{k}(W_i)$.
	\end{enumerate}

\end{lemma}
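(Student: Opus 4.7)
The plan is to prove (i)--(iv) in order, with (i)--(iii) being essentially bookkeeping that combines Corollary~\ref{cor: colon} and Proposition~\ref{prop: colon properties} with the echelon structure of the bases, and (iv) being the only item that needs a genuine argument -- a valuation comparison exploiting $f_j = x^{j-1} + O(x^n)$.

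For (i), Corollary~\ref{cor: colon} directly yields $(V(M):V(L)) = \bigcap_{i=1}^r (V(M):\{g_i\})$, because each $g_i$ is a nonzero (hence invertible) element of $C((x))$. To replace $(V(M):\{g_i\})$ by $W_i$, I would show that this intersection already lies in $C[[x]]$: any $h$ in it satisfies $hg_1 \in V(M)$, and since $0$ is an ordinary point of $M$ we have $V(M) \subseteq C[[x]]$ by Lemma~\ref{lem:ordinary_fund_sys}; because $g_1 = 1 + O(x)$ is a unit in $C[[x]]$, the quotient $h = (hg_1) g_1^{-1}$ lies in $C[[x]]$.

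For (ii), Proposition~\ref{prop: colon properties}(ii) gives $(V(M):\{g_i\}) = \Span_C\{f_1/g_i,\ldots,f_n/g_i\}$, and $f_j/g_i$ has valuation $j-i$, so the generators with $j\ge i$ are already in $C[[x]]$, yielding the inclusion $\supseteq$. For the reverse inclusion, if $h = \sum_{j=1}^n c_j f_j/g_i \in C[[x]]$, then $hg_i = \sum c_j f_j$ has valuation at least $i-1$ (because $g_i$ starts at $x^{i-1}$); using $f_j = x^{j-1} + O(x^n)$, the coefficient of $x^{\ell}$ in $\sum c_j f_j$ equals $c_{\ell+1}$ for $0 \le \ell < n$, so $c_1 = \cdots = c_{i-1} = 0$. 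Item (iii) is then immediate: the first equality uses that $T_k$ is $C$-linear on a finite spanning set, and the second equality follows by applying Corollary~\ref{cor: series division} to each quotient $f_j/g_i$ with $j \ge i$ (so both numerator and denominator have valuation at least $i-1$).

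For (iv), it suffices to show $T_k$ is injective on $T_{k+1}(W_1)$, since $\bigcap_i T_{k+1}(W_i) \subseteq T_{k+1}(W_1)$. Suppose $h \in T_{k+1}(W_1)$ and $T_k(h) = 0$; write $h = T_{k+1}(h')$ with $h' \in W_1$, and by (ii) expand $h' = \sum_{j=1}^n c_j f_j/g_1$. The hypothesis $T_k(h) = 0$ forces $h' = O(x^{k+1})$, hence $h' g_1 = \sum_{j=1}^n c_j f_j = O(x^{k+1})$ since $g_1$ has valuation $0$. The standing assumption $k \ge n-1$ converts this into $\sum c_j f_j = O(x^n)$, and then the triangular form $f_j = x^{j-1} + O(x^n)$ makes the coefficient of $x^{\ell}$ in the sum equal to $c_{\ell+1}$ for $0 \le \ell \le n-1$; this forces $c_1 = \cdots = c_n = 0$, so $h' = 0$ and $h = 0$. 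The inequality on dimensions then follows because an injective $C$-linear map cannot increase dimension. The only delicate point is the threshold $k \ge n-1$: this is precisely what upgrades the valuation bound $\ge k+1$ to $\ge n$, which is exactly the range in which the echelon property of $\{f_1,\ldots,f_n\}$ gives full cancellation, so getting this inequality tight is the main (and only modest) obstacle.
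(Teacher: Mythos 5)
Your proposal follows essentially the same route as the paper: (i) via Corollary~\ref{cor: colon} plus the observation that the intersection already lies in $C[[x]]$, (ii)--(iii) via Proposition~\ref{prop: colon properties} and Corollary~\ref{cor: series division} together with the triangular structure of the bases, and (iv) by reducing to a single $W_i$. The only substantive difference is in (iv), where you compute the kernel of $T_k$ directly, whereas the paper counts dimensions ($\dim_C T_k(W_i)=n-i+1$ for all $k\ge n-1$, so the surjection $T_k\colon T_{k+1}(W_i)\to T_k(W_i)$ between equidimensional spaces is injective); both arguments rest on the same fact that the generators $f_j/g_i$ have distinct valuations $j-i$.

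One imprecision worth fixing: you repeatedly invoke the echelon form $f_j=x^{j-1}+O(x^n)$, but Convention~\ref{con} only guarantees $f_j=x^{j-1}+O(x^{j})$ (and indeed the basis in Example~\ref{ex:2*3=4} is not in echelon form), so the coefficient of $x^{\ell}$ in $\sum_j c_jf_j$ is not literally $c_{\ell+1}$. This does not break the argument: from the weaker normalization, if some $c_j\neq 0$ then $\sum_j c_jf_j$ has valuation exactly $j_0-1$ where $j_0$ is the least index with $c_{j_0}\neq 0$, which yields $c_1=\dots=c_{i-1}=0$ in (ii) and $c_1=\dots=c_n=0$ in (iv) by the same forward substitution. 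With that adjustment the proof is correct.
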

\begin{proof}
	\begin{enumerate}[label=(\roman*)]
		\item By Corollary~\ref{cor: colon}, we have $(V(P): V(L)) = \bigcap_{i=1}^r (V(P): \{g_i\})$. Since $g_1 = 1 + O(x)$ is invertible in $C[[x]]$, it follows from Proposition~\ref{prop: colon properties}.\ref{it:colon2} that $(V(P): \{g_1\}) =\Span_C\left\{\frac{f_1}{g_1},\ldots, \frac{f_n}{g_1}\right\}$ is a subspace of $C[[x]]$. Therefore $(V(P): V(L))\subseteq (V(P): \{g_1\}) \subseteq C[[x]]$ and hence
		\[(V(P): V(L)) = (V(P): V(L)) \cap C[[x]] = \bigcap_{i=1}^r ((V(P): \{g_i\})\cap C[[x]]) = \bigcap_{i=1}^r W_i.\]
		\item By Proposition~\ref{prop: colon properties}.\ref{it:colon2}, $W_i = \Span_C\left\{\frac{f_1}{g_i},\ldots,\frac{f_n}{g_i}\right\} \cap C[[x]]$. By Convention~\ref{con}, $\{f_1,\ldots,f_n\}$ are linearly independent over $C$ and satisfy $f_j = x^{j-1} + O(x^j)$ for $j=1,\ldots, n$. Since $g_i = x^{i-1} + O(x^i)$, it follows that $\frac{f_j}{g_i} = x^{j-i} + O(x^{j-i+1})$. Therefore for a fixed $i$, a linear combination of $\frac{f_j}{g_i}$ with $j=1,\ldots,n$ lies in $C[[x]]$ if and only if the linear combination involves only  $\frac{f_j}{g_i}$ for $j=i,\ldots, n$. 
		\item Since $T_k$ is a $C$-linear map, it follows from~\ref{it: wi-2} that $T_k(W_i) = \Span_C\left\{T_k\left(\frac{f_i}{g_i}\right), \ldots, T_k\left(\frac{f_n}{g_i}\right)\right\}$ for all~$k\geq0$. By Corollary~\ref{cor: series division}, we have
		$T_k\left(\frac{f_j}{g_i}\right) = T_k\left(\frac{T_{k+r-1}(f_j)}{T_{k+r-1}(g_i)}\right)$ for all $j=i,\ldots,n$ and all $k\geq 0$. Thus we obtained the desired result.
		\item Since $\frac{f_j}{g_i} = x^{j-i} + O(x^{j-i+1})$, it follows that $T_k\left(\frac{f_i}{g_i}\right), \ldots, T_k\left(\frac{f_n}{g_i}\right)$ are linearly independent over $C$ for all $k\geq n-1$ and all $i=1,\ldots,r$. Therefore, for all $k\geq n-1$,  $\dim_CT_{k}(W_i)=n-i+1$ and $T_k$ is an injective map from $T_{k+1}(W_i)$ to $T_{k}(W_i)$. Since $\bigcap_{i=1}^rT_{k+1}(W_i)$ is a subspace of $T_{k+1}(W_i)$, it follows that $T_k$ is an injective map from $\bigcap_{i=1}^rT_{k+1}(W_i)$ to $\bigcap_{i=1}^r T_{k}(W_i)$.\qedhere
	\end{enumerate}
\end{proof}

\begin{theorem}\label{thm: order computation}
	With Convention~\ref{con}, let $N=M\otimes L^{\otimes (r-1)} \in C(x)[D]$ and $k_0= \max \Z_N-\frac{r(r-1)}{2}$. Then for all $k>k_0$, 
	\begin{enumerate}[label=(\roman*)]
		\item $\dim_C(V(M): V(L)) = \dim_C\bigcap_{i=1}^r T_k(W_i)$;
		\item $T_k(V(M): V(L)) = \bigcap_{i=1}^r T_k(W_i)$.
	\end{enumerate}
\end{theorem}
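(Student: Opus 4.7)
The plan is to reduce Theorem~\ref{thm: order computation} directly to Proposition~\ref{prop: intersection dim}. By Lemma~\ref{lem: w_i}\ref{it: wi-1}, we already know that $(V(M):V(L))=\bigcap_{i=1}^r W_i$, and each $W_i$ is a subspace of $C[[x]]$, so once we exhibit a single power series $g$ and a single operator making all the hypotheses of Proposition~\ref{prop: intersection dim} hold simultaneously, both claims in the theorem follow at once from its two conclusions.

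The natural choice is $g := g_1 g_2 \cdots g_r$, the product of the canonical basis of $V(L)$. By Convention~\ref{con} each $g_i = x^{i-1}+O(x^i)$, so by Lemma~\ref{lem: series multi}\ref{it:series1} applied iteratively we obtain
\[g = x^{0+1+\cdots+(r-1)} + O\bigl(x^{r(r-1)/2+1}\bigr) = x^{r(r-1)/2} + O\bigl(x^{r(r-1)/2+1}\bigr),\]
so the valuation of $g$ is exactly $\mu = r(r-1)/2$, matching the shift appearing in the definition $k_0=\max\Z_N-\frac{r(r-1)}{2}$.

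Next I would verify that $g W_i \subseteq V(N)\cap C[[x]]$ for every $i$. Fix $h\in W_i$. By definition of $W_i$, we have $g_i h\in V(M)$ (and $h\in C[[x]]$, so $g h\in C[[x]]$). Write
\[gh = \Bigl(\prod_{j\ne i} g_j\Bigr)\cdot (g_i h).\]
The factor $\prod_{j\ne i} g_j$ is a product of $r-1$ solutions of $L$, and hence belongs to $V(L^{\otimes(r-1)})$ by Lemma~\ref{lem:sol-lcml-sym}\ref{it:sol-sym}. Since $g_i h\in V(M)$, applying Lemma~\ref{lem:sol-lcml-sym}\ref{it:sol-sym} once more gives $gh\in V(M\otimes L^{\otimes(r-1)})=V(N)$. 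As $gh\in C[[x]]$, this yields $gW_i \subseteq V(N)\cap C[[x]]$, which is exactly the hypothesis required in Proposition~\ref{prop: intersection dim}, with solution operator $N$ and shift $\mu=r(r-1)/2$.

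With that in place, Proposition~\ref{prop: intersection dim} immediately delivers, for every $k>k_0=\max\Z_N-r(r-1)/2$,
\[\dim_C\Bigl(\bigcap_{i=1}^r W_i\Bigr)=\dim_C\Bigl(\bigcap_{i=1}^r T_k(W_i)\Bigr)\quad\text{and}\quad T_k\Bigl(\bigcap_{i=1}^r W_i\Bigr)=\bigcap_{i=1}^r T_k(W_i),\]
and substituting $\bigcap_{i=1}^r W_i=(V(M):V(L))$ from Lemma~\ref{lem: w_i}\ref{it: wi-1} finishes both (i) and (ii). The only real content is the identification $gW_i\subseteq V(N)$, which essentially just repackages the definition of the symmetric product $N=M\otimes L^{\otimes(r-1)}$; I do not anticipate a serious obstacle. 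A minor bookkeeping point is that Proposition~\ref{prop: intersection dim} is stated for $N\in C[x][D]$, whereas $N$ a priori lives in $C(x)[D]$; clearing denominators of $N$ changes $\Z_N$ only by the trivial factor that is multiplied onto the indicial polynomial, so the bound $k_0$ is unaffected.
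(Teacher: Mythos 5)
Your proof is correct and follows essentially the same route as the paper: multiply each $W_i$ by $g=g_1\cdots g_r$ (of valuation $r(r-1)/2$), observe that $gW_i\subseteq V(N)$ because $\prod_{j\neq i}g_j$ solves $L^{\otimes(r-1)}$ and $g_ih$ solves $M$, and then invoke Proposition~\ref{prop: intersection dim} together with Lemma~\ref{lem: w_i}\ref{it: wi-1}. The remark about clearing denominators of $N$ is a reasonable extra precaution that the paper leaves implicit.
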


\begin{proof}
	By Lemma~\ref{lem: w_i}, we obtain that $(V(M): V(L)) = \bigcap_{i=1}^r W_i$, where $W_i =\Span_C\left\{\frac{f_i}{g_i},\ldots,\frac{f_n}{g_i}\right\}$. To determine the dimension of the intersection of $W_i$, we shall multiply $W_i$ by $g:=g_1\ldots g_r$ and consider the solution space $V$ of $N=M\otimes L^{\otimes (r-1)}$ in $C[[x]]$. 
	
	For each $g_i$, let $\bar g_i = \frac{g}{g_i} = \prod_{j=1,j\neq i}^r g_i \in C[[x]]$. Then for each $1\leq i\leq r$,
	\[gW_i =\Span_C\left\{\bar g_if_i,\ldots, \bar g_i f_n\right\}\subseteq C[[x]],\]
	is a subspace $V$, because $\bar g_i$ is a solution of $L^{\otimes (r-1)}$, and $f_j$ is a solution of $M$ for $j=1,\ldots,n$. Note that $g=x^{\frac{r(r-1)}{2}} + O(x^{\frac{r(r-1)}{2}+1})\in C[[x]]$. It follows from Lemma~\ref{lem: w_i} and Proposition~\ref{prop: intersection dim} that for all $k>k_0$,
	
	\[\dim_C(V(M): V(L)) = \dim_{C}\left(\, \bigcap_{i=1}^rW_i \right) = \dim_C\left(\, \bigcap_{i=1}^r T_k(W_i) \right)\]
	and 
	\[T_k(V(M): V(L)) = T_k\left(\, \bigcap_{i=1}^r W_i \right) = \bigcap_{i=1}^r T_k(W_i).\qedhere\]
\end{proof}

\begin{example}\label{ex:2*3=4,order}
	We continue with Example~\ref{ex:2*3=4}. We want to compute the dimension of $(V(M): V(L))$ and a basis for it at precision $k>k_0$. We have $r=\ord(L)=2$, $n=\ord(M)=4$, and \begin{align*}
		N=M\otimes L^{\otimes (r-1)}=\ &(x^2 - 2x + 2)^2(x - 1)^5D^5+ 5(x^2 - 2x + 2)(x^2 - 2x - 2)(x - 1)^4D^4  \\&+ 40(x^2 - 2x + 4)(x - 1)^3D^3  - 120(x^2 - 2x + 6)(x - 1)^2D^2 \\&+120(x - 1)(2x^2 - 4x + 17)D -120(2x^2 - 4x + 23).
	\end{align*}
	The indicial polynomial of $N$ at $0$ is $\ind_0(N)=s(s - 1)(s - 2)(s - 3)(s - 4)$. The set of nonnegative roots of $\ind_0(L)$ is $\Z_N = \{0,1,2,3,4\}$. Thus $k_0=\max \Z_N -\frac{r(r-1)}{2}=3$. 
	
	By Lemma~\ref{lem: w_i}, we have
	$(V(M): V(L))=W_1\cap W_2$, where 
	
		$W_1=\Span_\bC\left\{\frac{f_1}{g_1},\frac{f_2}{g_1},\frac{f_3}{g_1},\frac{f_4}{g_1}\right\}=\Span_\bC\{1 - x + O(x^5),  x-x^2 +O(x^5), x^2- x^3 +O(x^5),x^3-x^4+O(x^5)\},$
		
		$W_2 =\Span_\bC\left\{\frac{f_2}{g_2},\frac{f_3}{g_2},\frac{f_4}{g_2}\right\}=\Span_\bC\{1 - x + O(x^5),  x-x^2 +O(x^5), x^2- x^3 +O(x^5)\}.$

	\noindent Let $k=k_0+1=4$. Then by Theorem~\ref{thm: order computation}, 
	\begin{equation}\label{eq: 2*3=4, basis}
		T_4(V(M): V(L))=T_4(W_1)\cap T_4(W_2)=\Span_\bC\{1 - x + O(x^5),  x-x^2 +O(x^5), x^2- x^3 +O(x^5)\}.
	\end{equation}
	Since the above truncated spaces have dimension three, Theorem~\ref{thm: order computation} implies that $(V(M): V(L))$ also has dimension three. The truncated basis in~\eqref{eq: 2*3=4, basis} can be uniquely extended to a basis of $(V(M): V(L))$.
\end{example}

\section{Symmetric products of generalized indicial polynomials}\label{sec:indpolys}
Given two linear differential operators $L,Q\in C(x)[D]$, the (generalized) local exponents of their symmetric product $L\otimes Q$ were studied by Singer~\cite[Lemma 3.2]{Singer93} and by van Hoeij and Weil~\cite[\S 3]{vanHoeijWeil97}. However, the multiplicities of these local exponents remain unclear. 
In terms of indicial polynomials, we will show that the symmetric product of their indicial polynomials is a divisor of the indicial polynomial of their symmetric product $L\otimes Q$. This result also extends to the generalized indicial polynomials. To avoid ambiguity, we use $s$ as the variable in the (generalized) indicial polynomials. Here, the symmetric product in $C[s]$ refers to the symmetric product of linear differential operators with constant coefficients.

\begin{lemma}\label{lem: indicial - multiplicity}
	Let $L\in C[x^{1/v}][D]$ for some $v\in \bN\setminus \{0\}$. If $L$ has a solution $g$ in $x^\alpha C[[x^{1/v}]][\log(x)]$ with initial term $x^\alpha \log(x)^{\mu-1}$, then $\alpha$ is a root of the indicial polynomial $\ind_0(L)$ with multiplicity at least $\mu$.
\end{lemma}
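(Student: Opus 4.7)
The plan is to compute $L\cdot g$ explicitly and read off constraints on $\ind_0(L)$ from its leading terms. My starting point is the identity
\[
L\cdot(x^s\log(x)^j)\;=\;\frac{\partial^j}{\partial s^j}\bigl(L\cdot x^s\bigr),
\]
which holds because $\partial_s$ commutes with the $x$-action of $L$ and $\partial_s x^s=x^s\log(x)$. Writing, in analogy with~\eqref{eq:Nx^s},
\[
L\cdot x^s\;=\;x^{s+\sigma_L}\sum_{i\geq 0}p_i(s)\,x^{i/v},\qquad p_0(s)=\ind_0(L),
\]
the Leibniz rule yields
\[
L\cdot(x^s\log(x)^j)\;=\;x^{s+\sigma_L}\sum_{i\geq 0}x^{i/v}\sum_{k=0}^{j}\binom{j}{k}(\log x)^{j-k}\,p_i^{(k)}(s).
\]

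Next, I would expand $g$ in the monomials $x^{\alpha+\ell/v}(\log x)^j$ as
\[
g\;=\;\sum_{j=0}^{J}\sum_{\ell\geq 0}c_{j,\ell}\,x^{\alpha+\ell/v}(\log x)^j,
\]
where $J$ is the log-degree of $g$, and set $b_j:=c_{j,0}$, the coefficient of $x^\alpha(\log x)^j$. The hypothesis that $x^\alpha\log(x)^{\mu-1}$ is the initial term of $g$ --- which, per the convention immediately preceding the lemma, says that $\mu-1$ is the \emph{minimal} log-power occurring at the minimal $x$-exponent $\alpha$ --- translates to $b_0=\cdots=b_{\mu-2}=0$ and $b_{\mu-1}\neq 0$. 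In particular, $J_0:=\max\{j:b_j\neq 0\}$ is well defined and $J_0\geq \mu-1$.

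Substituting this expansion into $L\cdot g=0$ and extracting the coefficient of $x^{\alpha+\sigma_L}(\log x)^n$, the minimality of $\alpha$ forces $\ell=0$ and $i=0$ in the Leibniz formula, leaving the scalar equations
\[
\sum_{k\geq 0}b_{n+k}\binom{n+k}{k}\,p_0^{(k)}(\alpha)\;=\;0\qquad(n=0,1,\ldots,J).
\]
I would then descend on $n$ from $J_0$ down to $0$. At $n=J_0$ only the $k=0$ summand survives (all $b_{J_0+k}$ with $k\geq 1$ vanish), so $b_{J_0}p_0(\alpha)=0$ and hence $p_0(\alpha)=0$. Assuming inductively that $p_0(\alpha)=\cdots=p_0^{(r-1)}(\alpha)=0$, the equation at $n=J_0-r$ collapses to $b_{J_0}\binom{J_0}{r}p_0^{(r)}(\alpha)=0$, forcing $p_0^{(r)}(\alpha)=0$. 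Iterating for $r=0,1,\ldots,J_0$ shows that $(s-\alpha)^{J_0+1}$ divides $\ind_0(L)$, and since $J_0+1\geq\mu$, this is the claimed multiplicity bound.

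The main obstacle is keeping the bookkeeping clean: one has to track the interplay between the $x^{\ell/v}$ and $(\log x)^j$ components in the Leibniz expansion, and verify that extracting the coefficient at the minimum $x$-exponent kills every contribution involving $p_i$ with $i\geq 1$, so that only $p_0$ enters the leading equations. Once that reduction is in place, the descending induction is routine since each new equation isolates exactly one previously unknown Taylor coefficient of $p_0$ at $\alpha$.
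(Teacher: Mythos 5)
Your proof is correct, but it follows a genuinely different route from the paper's. The paper argues structurally: it invokes Lemma~\ref{lem: indicial - log} to obtain a basis $g_{i,j}$ of the solution space (in the ring of combinations $x^\beta b(x,\log x)$ with $\beta\in\alpha+\set Z$) whose initial terms are exactly $x^{\alpha_i}\log(x)^{j-1}$ for each root $\alpha_i$ of $\ind_0(L)$ of multiplicity $\mu_i$; writing $g$ as a linear combination of these and comparing initial terms forces $\alpha=\alpha_{i_0}$ and $\mu\le\mu_{i_0}$. Your argument instead redoes the Frobenius computation from scratch: the identity $L\cdot(x^s\log(x)^j)=\partial_s^j(L\cdot x^s)$ is correct, the extraction of the coefficient of $x^{\alpha+\sigma_L}(\log x)^n$ does kill every contribution with $\ell>0$ or $i>0$ (since $\ell+i=0$ with $\ell,i\ge0$ forces $\ell=i=0$), and the resulting triangular system $\sum_{k\ge0}b_{n+k}\binom{n+k}{k}p_0^{(k)}(\alpha)=0$ together with the descending induction from $n=J_0$ cleanly yields $p_0^{(r)}(\alpha)=0$ for $r=0,\dots,J_0$. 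What your approach buys is self-containedness --- it does not rely on the structure theorem for the solution basis, which the paper in turn cites from the literature --- and it actually proves the slightly stronger bound that the multiplicity is at least $J_0+1$, where $J_0$ is the \emph{largest} log-power occurring at the minimal exponent $\alpha$, not merely $\mu$ (the smallest one plus one). What the paper's route buys is brevity and reuse of a lemma it needs elsewhere anyway. The one point you flag as an obstacle (verifying that only $p_0$ enters the leading equations) is in fact already disposed of by your minimality observation, so the sketch is essentially complete.
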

\begin{proof}
	Suppose that \[\ind_0(L) =  u (s)(s-\alpha_1)^{\mu_1}\cdots(s-\alpha_I)^{\mu_I} ,\] where $\mu_1,\ldots,\mu_I\in \bN\setminus\{0\}$, the roots $\alpha_1,\ldots, \alpha_I\in \alpha+\bZ$ are distinct, and $u(x)\in C[x]$ does not have any root in $\alpha+\bZ$. We may further assume that \[\alpha_1<\cdots<\alpha_I,\]
	where $\alpha_i<\alpha_{i'}$ means $\alpha_i - \alpha_{i'}<0$. Now we consider the ring of all finite $C$-linear combinations of series of the form $x^\beta b(x,\log(x))$ with $\beta\in \alpha+\bZ$ and $b\in C[[x^{1/v}]][y]$. By Lemma~\ref{lem: indicial - log}, the solution space of $L$ in this ring has a basis of the form:
	\[g_{i,j} = x^{\alpha_i}\log(x)^{j-1} + \cdots \quad ( i=1,\ldots, I, j = 1,\ldots, \mu_i),\]
	where $x^{\alpha_i} \log(x)^{j-1}$ is the initial term of $g_{i,j}$. Then $g$ is a linear combination of the $g_{i,j}$'s. 
	
	Suppose that
	\[g=\sum_{i=1}^I\sum_{j=1}^{\mu_i} c_{i,j}g_{i,j}\]
	for some nonzero $c_{i,j}\in C$. Let $i_0\in\{1,\ldots,I\}$ be the minimal index $i$ such that $g_{i,j}$ appears in this linear combination for some $j$, and $j_0\in\{1,\ldots,\mu_{i_0}\}$ be the minimal index $j$ such that $g_{i_0,j}$ appears. Then the initial term of $\sum_{i=1}^I\sum_{j=1}^{\mu_i} c_{i,j}g_{i,j}$ is $x^{\alpha_{i_0}} \log(x)^{j_0-1}$. By the assumption, the initial term of $g$ is $x^{\alpha}\log(x)^{\mu-1}$. Comparing initial terms, we conclude that $\alpha=\alpha_{i_0}$ and $\mu-1=j_0-1\leq \mu_{i_0}-1$, which implies $\mu\leq \mu_{i_0}$. Thus $\alpha$ is a root of $\ind_{0}(L)$ with multiplicity at least $\mu$.
\end{proof}

\begin{theorem}\label{thm: generalized indicial poly}
	Let $L,Q\in C(x)[D]$ and let $p,q\in C[x^{1/v}]$ with $v\in \bN\setminus\{0\}$. Let $\ind_{0,\exp(p(x^{-1}))}(L)$, $\ind_{0,\exp(q(x^{-1}))}(Q)$ be the generalized indicial polynomial of $L$ and $Q$ at $x=0$ with respect to exponential parts $\exp(p(x^{-1}))$ and $\exp(q(x^{-1}))$, respectively. Then the symmetric product \[\ind_{0,\exp(p(x^{-1}))}(L) \otimes \ind_{0,\exp(q(x^{-1}))}(Q)\] divides $\ind_{0,\exp(p(x^{-1})+q(x^{-1}))}(L\otimes Q)$.
\end{theorem}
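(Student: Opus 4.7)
The plan is to factor both generalized indicial polynomials over $C$, use the existence of generalized series solutions with prescribed initial $\log$-powers guaranteed by Lemma~\ref{lem: indicial - log}, multiply such solutions of $L$ with such solutions of $Q$ to obtain solutions of $L\otimes Q$ with a controllable initial term, and then invoke Lemma~\ref{lem: indicial - multiplicity} to convert that initial term information back into a lower bound on the multiplicity of the corresponding root in $\ind_{0,\exp(p(x^{-1})+q(x^{-1}))}(L\otimes Q)$. Finally, Lemma~\ref{lem: C-finite sym prod}, applied to polynomials in $s$ viewed as constant-coefficient differential operators, pins down the factorization of the left-hand side and makes the divisibility statement an exponent-by-exponent comparison.

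Concretely, write
\[
\ind_{0,\exp(p(x^{-1}))}(L) = c_1\prod_{i=1}^I (s-\alpha_i)^{\mu_i},\qquad \ind_{0,\exp(q(x^{-1}))}(Q) = c_2\prod_{j=1}^J (s-\beta_j)^{\nu_j},
\]
with distinct $\alpha_i$ and distinct $\beta_j$ and nonzero $c_1,c_2\in C$. The conjugated operators $\tilde L := \exp(-p(x^{-1}))\,L\,\exp(p(x^{-1}))$ and $\tilde Q:=\exp(-q(x^{-1}))\,Q\,\exp(q(x^{-1}))$ lie in $C[x^{1/v},x^{-1/v}][D]$, so Lemma~\ref{lem: indicial - log} applies. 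Thus for each $i$, $\tilde L$ has a solution $\tilde g_i\in x^{\alpha_i}C[[x^{1/v}]][\log(x)]$ whose initial term is $x^{\alpha_i}\log(x)^{\mu_i-1}$; equivalently, $g_i := \exp(p(x^{-1}))\,\tilde g_i$ is a solution of $L$. Similarly, $Q$ has a solution $h_j := \exp(q(x^{-1}))\,\tilde h_j$ with $\tilde h_j$ having initial term $x^{\beta_j}\log(x)^{\nu_j-1}$.

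Next I form the product $g_i h_j$, which is a solution of $L\otimes Q$ by definition of the symmetric product. Conjugating by $\exp(-(p+q)(x^{-1}))$ yields $\tilde g_i \tilde h_j$, a solution of $\exp(-(p+q)(x^{-1}))(L\otimes Q)\exp((p+q)(x^{-1}))$, whose indicial polynomial is $\ind_{0,\exp(p(x^{-1})+q(x^{-1}))}(L\otimes Q)$ by Definition~\ref{Def: generalized indicial poly}. The main technical point is to check that the initial term of $\tilde g_i \tilde h_j$ is exactly $x^{\alpha_i+\beta_j}\log(x)^{\mu_i+\nu_j-2}$: the smallest $x$-exponent in the product is $\alpha_i+\beta_j$, and among terms with that $x$-exponent the minimal log-power is achieved only by the product of the two minimal log-power terms, whose coefficient is the nonzero product of the two nonzero leading coefficients. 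Granted this, Lemma~\ref{lem: indicial - multiplicity} gives that $\alpha_i+\beta_j$ is a root of $\ind_{0,\exp(p(x^{-1})+q(x^{-1}))}(L\otimes Q)$ of multiplicity at least $\mu_i+\nu_j-1$.

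To finish, identify $C[s]$ with $C[D_s]$ so that the symmetric product on the left-hand side is given by Lemma~\ref{lem: C-finite sym prod}:
\[
\ind_{0,\exp(p(x^{-1}))}(L)\otimes \ind_{0,\exp(q(x^{-1}))}(Q) = c\cdot \lcm_{i,j}(s-\alpha_i-\beta_j)^{\mu_i+\nu_j-1}
\]
for some $c\in C\setminus\{0\}$. For every $\gamma\in C$, the multiplicity of $(s-\gamma)$ in this lcm equals $\max\{\mu_i+\nu_j-1 : \alpha_i+\beta_j=\gamma\}$, and by the previous paragraph the multiplicity of $(s-\gamma)$ in $\ind_{0,\exp(p(x^{-1})+q(x^{-1}))}(L\otimes Q)$ is at least as large. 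The divisibility claim follows root by root. The main obstacle in executing this plan is the careful verification of the initial term of $\tilde g_i\tilde h_j$: one must argue that no lower-$x$-exponent term can appear from the tails of $\tilde g_i$ or $\tilde h_j$, and that the log-powers at the leading $x$-exponent really combine additively so that the coefficient of $x^{\alpha_i+\beta_j}\log(x)^{\mu_i+\nu_j-2}$ does not vanish; once the definition of initial term in the paper is unpacked cleanly, this reduces to observing that a nonzero leading log-coefficient in each factor produces a nonzero leading log-coefficient in the product.
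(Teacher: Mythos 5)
Your proposal is correct and follows essentially the same route as the paper's proof: conjugate by the exponential parts, use Lemma~\ref{lem: indicial - log} to produce solutions with initial terms $x^{\alpha}\log(x)^{\mu-1}$ and $x^{\beta}\log(x)^{\lambda-1}$, multiply, read off the initial term $x^{\alpha+\beta}\log(x)^{\mu+\lambda-2}$ of the product, and conclude via Lemma~\ref{lem: indicial - multiplicity} together with the description of the left-hand side from Lemma~\ref{lem: C-finite sym prod}. Your explicit verification of the product's initial term (no lower $x$-exponent can arise since the exponents of the factors lie in $\alpha+\tfrac1v\bN$ and $\beta+\tfrac1v\bN$, and the minimal log-power at exponent $\alpha+\beta$ has nonzero coefficient) is a detail the paper asserts without comment, so your write-up is, if anything, slightly more careful.
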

\begin{proof}
	Suppose that $\alpha\in C$ is a root of $\ind_{0,\exp(p(x^{-1}))}(L)$ with multiplicity $\mu$, and $\beta\in C$ is a root of $\ind_{0,\exp(q(x^{-1}))}(Q)$ with multiplicity $\lambda$. By Lemma~\ref{lem: C-finite sym prod}, it suffices to prove that $\alpha+\beta$ is a root of $\ind_{0,\exp(p(x^{-1})+q(x^{-1}))}(L\otimes Q)$ with multiplicity at least $\mu+\lambda-1$.
	
	By Definition~\ref{Def: generalized indicial poly}, $\alpha$ is a root of the indicial polynomial of $\tilde L = \exp(-p(x^{-1}))\,L\, \exp(p(x^{-1}))$ with multiplicity $\mu$. By Lemma~\ref{lem: indicial - log}, $\tilde L$ has a solution $\tilde g\in x^\alpha C[[x]][\log(x)]$ with initial term $x^\alpha\log(x)^{\mu-1}$. Therefore $L$ has a solution \[g(x) = \exp(p(x^{-1}))\tilde g(x).\] Similarly, the operator $Q$ has a solution \[h(x) = \exp(q(x^{-1}))\tilde{h}(x),\] where $\tilde{h}\in x^\beta C[[x^{1/v}]][\log(x)]$ with initial term $x^\alpha\log(x)^{\lambda-1}$. By definition of symmetric product, 
	\[f(x) = g(x)h(x) = \exp(p(x^{-1})+q(x^{-1}))\tilde f(x)\]
	is a solution of $M:=L\otimes Q$, where
	\[\tilde f(x) = \tilde g(x)\tilde h(x) =  x^{\alpha+\beta}\log(x)^{\mu +\lambda-2}+\cdots \in x^{\alpha+\beta} C[[x]][\log(x)]\]
	with the initial term $x^{\alpha+\beta}\log(x)^{\mu +\lambda-2}$. Thus $\tilde f$ is a solution of 
	\[\tilde M = \exp(-p(x^{-1})-q(x^{-1})) \, M\, \exp(p(x^{-1})+q(x^{-1})) \in C[x^{1/v},x^{-1/v}][D].\]
	Applying Lemma~\ref{lem: indicial - multiplicity} to the operator $\tilde M$, we obtain that $\alpha+\beta$ is a root of the indicial polynomial $\ind_0(\tilde{M})$ with multiplicity at least $\lambda+\mu-1$. By Definition~\ref{Def: generalized indicial poly}, $\alpha+\beta$ is a root of the generalized indicial polynomial $\ind_{0,\exp(p(x^{-1})+q(x^{-1}))}({M})$ with multiplicity at least $\lambda+\mu-1$.
\end{proof}

Taking $p=q=0$ in Theorem~\ref{thm: generalized indicial poly} yields the following corollary.
\begin{corollary}\label{cor: indicial - sym prod}
	Let $L,Q\in C(x)[D]$ and let $\ind_0(L), \,\ind_0(Q)$ be their indicial polynomials at $0$. Then the symmetric product \[\ind_0(L) \otimes \ind_0(Q)\]  divides $\ind_0(L\otimes Q)$.
\end{corollary}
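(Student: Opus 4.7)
The plan is to derive Corollary~\ref{cor: indicial - sym prod} as an immediate specialization of Theorem~\ref{thm: generalized indicial poly}. The only thing to check is that the three generalized indicial polynomials appearing in the theorem collapse to the classical indicial polynomials when the exponential parts are trivial.

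Concretely, I would take $v = 1$ and choose $p = q = 0 \in C[x^{1/v}]$. With this choice, $p(x^{-1}) = 0$ and $q(x^{-1}) = 0$, so by Definition~\ref{Def: generalized indicial poly} (which explicitly states that the generalized indicial polynomial coincides with the classical one when the exponential part is zero), we have
\[
\ind_{0,\exp(p(x^{-1}))}(L) = \ind_0(L), \qquad \ind_{0,\exp(q(x^{-1}))}(Q) = \ind_0(Q).
\]
Since also $p(x^{-1}) + q(x^{-1}) = 0$, the same remark gives
\[
\ind_{0,\exp(p(x^{-1}) + q(x^{-1}))}(L \otimes Q) = \ind_0(L \otimes Q).
\]

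Plugging these three identifications into the divisibility conclusion of Theorem~\ref{thm: generalized indicial poly} yields exactly that $\ind_0(L) \otimes \ind_0(Q)$ divides $\ind_0(L \otimes Q)$, which is the claim. There is no real obstacle here; all the technical work, including the interaction of the exponential part with the symmetric product and Lemma~\ref{lem: indicial - multiplicity} relating log-initial terms to root multiplicities, has already been carried out in the proof of the theorem. The corollary is essentially a one-line specialization.
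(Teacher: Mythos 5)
Your proposal is correct and matches the paper exactly: the paper also obtains this corollary by setting $p=q=0$ in Theorem~\ref{thm: generalized indicial poly} and invoking the remark in Definition~\ref{Def: generalized indicial poly} that the generalized indicial polynomial reduces to the classical one for trivial exponential parts.
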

\begin{example}\label{ex:2*2=4,indicial}
	Let $L= (2x-1)D^2 -4xD+4$, $Q=(x-1)D^2 -xD +1\in \bC(x)[D]$. The point $\frac{1}{2}$ is an apparent singularity of $L$, but an ordinary point of $Q$ and $L\otimes Q$. We have
	\begin{align*}
		\ind_{\frac{1}{2}}(L)&=s(s-2),\\ 
		\ind_{\frac{1}{2}}(Q)&=s(s-1),\\
		\ind_{\frac{1}{2}}(L\otimes Q) &= s(s-1)(s-2)(s-3).
	\end{align*}
	Then $\ind_{\frac{1}{2}}(L)\otimes \ind_{\frac{1}{2}}(Q)=s(s-1)(s-2)(s-3)$ divides $\ind_{\frac{1}{2}}(L\otimes Q)$. The roots $\xi_{i}$ $(i=1,2,3,4)$  of $12x^4-44x^3+63x^2-52x+18$ are apparent singularities of $L\otimes Q$, but ordinary points of $L$ and $Q$. For each $\xi_i$, we have
	\begin{align*}
		\ind_{\xi_i}(L)&=s(s-1),\\
		\ind_{\xi_i}(Q)&=s(s-1), \\
		\ind_{\xi_i}(L\otimes Q)&=s(s-1)(s-2)(s-4).
	\end{align*} 
	Then $\ind_{\xi_i}(L)\otimes \ind_{\xi_i}(Q) = s(s-1)(s-2)$ divides $\ind_{\xi_i}(L\otimes Q)$.
\end{example}

Let $L,M\in C(x)[D]$. If $M=LQ$ for some $Q\in C(x)[D]$ under usual multiplication, then $\ind_0(Q)$ divides $\ind_0(M)$. So one can find the possible indicial polynomials of a right factor by factoring $\ind_0(M)$, see~\cite{BostanRivoalSalvy2024,vanHoeij97b}. For~symmetric product, if $M=L\otimes Q$ for some $Q\in C(x)[D]$, combining Corollary~\ref{cor: indicial - sym prod} and Proposition~\ref{prop: maximal quasi quotient} yields that the indicial polynomial $\ind_0(Q)$ divides the global quasi-symmetric quotient $\qsquo(\ind_0(M),\ind_0(L))$. So we can compute the possible indicial polynomials of a symmetric quotient. The procedure extends to generalized indicial polynomials as follows, see examples in the next section.
\begin{prop}\label{prop: quotient- exponential parts}
	Let $L,M\in C(x)[D]$ be of positive order and let $0\neq Q\in C(x)[D]$ be such that $L\otimes Q$ is a right factor of $M$. Then one can determine a finite set $\{\exp(q_i(x^{-1}))\}_{i=1}^\kappa$
	where $q_i\in C[x^{1/v}]$ with $v\in\bN\setminus\{0\}$, consisting of candidates for the exponential parts of the series solutions of $Q$ at $0$. Moreover, for each $\exp(q_i(x^{-1}))$, one can compute a polynomial $\widetilde{\ind}_{0,\exp(q_i(x^{-1}))}(Q)\in C[s]$ that is a multiple of the generalized indicial polynomial $\ind_{0,\exp(q_i(x^{-1}))}(Q)$.
\end{prop}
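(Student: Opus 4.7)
My plan combines two structural facts. First, by Lemma~\ref{lem:sol-lcml-sym}.\ref{it:sol-sym}, every solution of $L\otimes Q$ is a $C$-linear combination of products $fg$ with $f\in V(L)$ and $g\in V(Q)$; if $f$ has exponential part $\exp(p(x^{-1}))$ and $g$ has exponential part $\exp(q(x^{-1}))$, then $fg$ has exponential part $\exp((p+q)(x^{-1}))$. Second, since $L\otimes Q$ is a right factor of~$M$, every generalized series solution of $L\otimes Q$ is also a generalized series solution of~$M$; in particular, every exponential part of $L\otimes Q$ is also one of~$M$.

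For the first claim, I would compute the finite sets $P_L$ and $E_M$ of exponential parts of $L$ and~$M$ at~$0$ from their Newton polygons and iterative refinement (see \cite[\S3.4]{kauers23}); after enlarging the ramification index, both sets lie in $C[x^{1/v}]$ for a common~$v$. Combining the two facts above forces every exponential part $q$ of $Q$ to satisfy $p+q\in E_M$ for every $p\in P_L$, so
\[
  q\in\bigcap_{p\in P_L}\bigl(E_M-p\bigr).
\]
This intersection is finite (it is contained in $E_M-p$ for any fixed~$p\in P_L$), and I would take it as the candidate set $\{q_1,\dots,q_\kappa\}$.

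For the second claim, fix a candidate $q_i$ and any $p\in P_L$. Theorem~\ref{thm: generalized indicial poly} yields
\[
  \ind_{0,\exp(p(x^{-1}))}(L)\otimes\ind_{0,\exp(q_i(x^{-1}))}(Q)\ \Bigm|\ \ind_{0,\exp((p+q_i)(x^{-1}))}(L\otimes Q).
\]
Writing $\tilde M$ and $\widetilde{L\otimes Q}$ for the conjugates by $\exp((p+q_i)(x^{-1}))$ as in Definition~\ref{Def: generalized indicial poly}, the right-factor hypothesis implies that every solution of $\widetilde{L\otimes Q}$ in $C[[[x^{1/v}]]]$ is also a solution of~$\tilde M$. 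For each root $\alpha$ of $\ind_0(\widetilde{L\otimes Q})$ of multiplicity~$\mu$, Lemma~\ref{lem: indicial - log} produces a solution of $\widetilde{L\otimes Q}$ with initial term $x^\alpha\log(x)^{\mu-1}$; regarded as a solution of~$\tilde M$, Lemma~\ref{lem: indicial - multiplicity} then forces $\alpha$ to be a root of $\ind_0(\tilde M)$ of multiplicity at least~$\mu$. Hence
\[
  \ind_{0,\exp((p+q_i)(x^{-1}))}(L\otimes Q)\ \Bigm|\ \ind_{0,\exp((p+q_i)(x^{-1}))}(M).
\]
Chaining the two divisibilities gives a symmetric-product divisibility in $C[s]$ viewed as the ring of constant-coefficient differential operators. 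Proposition~\ref{prop: maximal quasi quotient}, applied in this subring (where Theorem~\ref{thm: C-finite case} guarantees the global quasi-symmetric quotient stays in $C[s]$), then yields
\[
  \ind_{0,\exp(q_i(x^{-1}))}(Q)\ \Bigm|\ \qsquo\bigl(\ind_{0,\exp((p+q_i)(x^{-1}))}(M),\,\ind_{0,\exp(p(x^{-1}))}(L)\bigr).
\]
Since this holds for every $p\in P_L$, I would define $\widetilde{\ind}_{0,\exp(q_i(x^{-1}))}(Q)$ to be the gcd of these quasi-quotients as $p$ ranges over $P_L$, giving a polynomial in $C[s]$ computable from $L$ and~$M$ alone.

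The main obstacle I anticipate is the multiplicity statement $\ind_{0,\exp(e)}(L\otimes Q)\mid\ind_{0,\exp(e)}(M)$ for a right factor. The set-theoretic inclusion of generalized local exponents is classical, but lifting it to divisibility of indicial polynomials requires the combined use of Lemmas~\ref{lem: indicial - log} and~\ref{lem: indicial - multiplicity} on both sides of the conjugation by $\exp(e(x^{-1}))$, and a small amount of care is needed to handle the negative powers of $x^{1/v}$ introduced by that conjugation. A secondary, purely bookkeeping issue is passing $P_L$ and $E_M$ to a common ramification index~$v$ so that sums, differences, and gcd's of exponential parts live in the same Puiseux polynomial ring.
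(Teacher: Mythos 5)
Your proposal is correct and follows essentially the same route as the paper: intersect the shifted sets of exponential parts of $M$ by those of $L$ to get the candidates, then chain Theorem~\ref{thm: generalized indicial poly} with the right-factor divisibility and Proposition~\ref{prop: maximal quasi quotient} in $C[s]$, and take a gcd over the admissible pairs. The only difference is that you explicitly justify the step $\ind_{0,\exp(e)}(L\otimes Q)\mid\ind_{0,\exp(e)}(M)$ via Lemmas~\ref{lem: indicial - log} and~\ref{lem: indicial - multiplicity}, which the paper's proof asserts without comment.
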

\begin{proof}
		 Let $\{\exp(p_j(x^{-1}))\}_{j=1}^\eta$ and $\{\exp(w_t(x^{-1}))\}_{t=1}^\rho$ be the exponential parts of the series solutions of $L$ and $M$ at $0$, respectively, where $p_j,w_\rho \in C[x^{1/v}]$ with $v\in\bN\setminus\{0\}$. If $\exp(q(x^{-1}))$ is an exponential part of $Q$ at $0$, then for all $1\leq j\leq \eta$, $\exp(q(x^{-1})+p_j(x^{-1}))$ is an exponential part of $M$ at $0$. Thus the exponential parts of $Q$ at $0$ belong to the set
		 \[\bigcap_{j=1}^\eta\left\{\exp(w_1(x^{-1})-p_j(x^{-1})),\ldots,\exp(w_\rho(x^{-1})-p_j(x^{-1}))\right \},\] 
		 where two exponential parts are considered identical if they differ by multiplication by a nonzero constant in $C$. Let $\{\exp(q_1(x^{-1})),\ldots,\exp(q_\kappa(x^{-1}))\}$ denote this set.
		 
		 For a fixed $\exp(q_i(x^{-1}))$, let 
		 \[\Lambda_i := \{(t,j)\mid \exp(q_i(x^{-1})) = c\exp(w_t(x^{-1})-p_j(x^{-1})) \text{ for some } c\in C\setminus\{0\}\}.\]
		 Then for each pair $(t,j)\in \Lambda_i$, by Theorem~\ref{thm: generalized indicial poly}, we obtain that \[\ind_{0,\exp(q_i(x^{-1}))}(L)\otimes \ind_{0,\exp(p_j(x^{-1}))}(Q) \,\mid\, \ind_{0,\exp(w_t(x^{-1}))}(L\otimes Q) \,\mid\, \ind_{0,\exp(w_t(x^{-1}))}(M).\]
		 Let $\mu_j(s):=\ind_{0,\exp(p_j(x^{-1}))}(L)$ and $\nu_\rho(s):= \ind_{0,\exp(w_t(x^{-1}))}(M)$. 
		  By Proposition~\ref{prop: maximal quasi quotient}, we get that $\ind_{0,\exp(q_i(x^{-1}))}(Q)$ divides the global quasi-symmetric quotient $\qsquo(\nu_t(s), \mu_j(s))$ for all $(t,j)\in \Lambda_i$. Thus we can take
		  \begin{equation}
		  	\widetilde{\ind}_{0,\exp(q_i(x^{-1}))}(Q):= \gcd_{(t,j)\in \Lambda_i}\qsquo(\nu_t(s),\mu_j(s)).\qedhere
		  \end{equation}
\end{proof}

\section{Degree bounds for symmetric quotients}\label{sec:degree}
Let $L,M\in C(x)[D]$ be given. Let $Q=D^\delta + b_{\delta-1}(x)D^{\delta-1}+ \cdots + b_0(x)$ with $b_i\in C(x)$ be such that $L\otimes Q$ is a right factor of $M$. In this section, we compute bounds for the degrees of the numerators and denominators of the $b_i$. Our work is inspired by the computation of degree bounds for a right factor of a given linear differential operator~\cite{vanHoeij97a,BostanRivoalSalvy2021explicit,BostanRivoalSalvy2024}; see~\cite{BostanRivoalSalvy2021explicit} for a detailed computation and~\cite{BostanBKM2016} for an explicit and challenging example. Similarly, these bounds for $Q$ are known when:
\begin{itemize}
	\item for every $b_i$ and for every point $\xi\in\Sing(L)\cup\Sing(M)\cup\{\infty\}$, we have a lower bound for the valuation of $b_i \in C(x)$ at $\xi$;
	\item we have an upper bound for the number of extra singularities. A point $\xi\in C$ is called an \emph{extra singularity} of the quotient $Q$ if $\xi$ is an ordinary point of $L$ and $M$, but a singularity of $Q$.
\end{itemize}

By Proposition~\ref{prop: quotient- exponential parts}, we can compute the possible exponential parts of $Q$ at $0$. Let $\exp(q(x^{-1}))$ with $q\in \bigcup_{v\in \bN\setminus\{0\}}C[x^{1/v}]$ be one of them such that $c:=-\deg(q)$ is minimal. Then $1-c$ is the largest possible slope of Newton Polygon of $Q$ at $0$, see~\cite[\S 3.4]{kauers23}. A lower bound for the valuation of $b_i$ at $0$ can be obtained from the study of the Newton Polygon of $Q$ at $0$, see~\cite{BostanRivoalSalvy2021explicit}. The same process can be performed at every point $\xi\in C\cup\{\infty\}$. So we only need an upper bound for the number of extra singularities. 
\subsection{The Fuchsian case}
Assume that $Q$ is Fuchsian. Note that $L,M$ need not necessarily be Fuchsian. Let $\Extra(Q)$ be the set of all extra singularities of $Q$. If $\xi\in C$ is an extra singularity of $Q$, then by Proposition~\ref{prop:order}, $\xi$ is an apparent singularity of $Q$. Therefore the quantity $S_\xi(Q)$ in~\eqref{EQ: S_xi(L)-fuchsian} is a positive integer. So applying the Fuchs relation~\eqref{EQ: fuchs rel} to $Q$, the number of extra singularities is upper bounded by
\begin{equation}\label{EQ: extra_Q - fuchs}
	\# \Extra(Q) \leq \sum_{\xi\in \Extra(Q)} S_\xi(Q) = -\delta(\delta-1) - \sum_{\xi\in \Sing^*(Q)\cup\{\infty\}}S_\xi(Q),
\end{equation}
where $\Sing^*(Q)$ is a subset of $\Sing(Q)$ that are not extra singularities of $Q$. By the definition of extra singularities, we get $\Sing^*(Q)\subseteq \Sing(L)\cup\Sing(M)$. By Proposition~\ref{prop: quotient- exponential parts}, for each $\xi\in C\cup\{\infty\}$, one can compute a multiple of $\ind_\xi(Q)$, denoted by $\widetilde{\ind}_\xi(Q)$. For example, the quasi-symmetric quotient $\qsquo(\ind_\xi(M),\ind_\xi(L))$ is a multiple of $\ind_\xi(Q)$. Then the roots of $\ind_\xi(Q)$ are roots of $\widetilde{\ind}_{\xi}(Q)$. Therefore, by~\eqref{EQ: S_xi(L)-fuchsian}, we get $S_\xi(Q)\geq \widetilde{S}_\xi(Q)$, where $\widetilde S_\xi(Q)$ denotes the sum of the $\delta$ smallest roots of $\widetilde{\ind}_\xi(Q)$, minus~$\frac{\delta(\delta-1)}{2}$. It follows from \eqref{EQ: extra_Q - fuchs} that
\[\# \Extra(Q) \leq -\delta(\delta-1) - \sum_{\xi\in\Sing(L)\cup\Sing(M)\cup\{\infty\}} \widetilde{S}_\xi(Q).\]

This process can be used whenever the operator $Q$ to be found is known to be Fuchsian. In particular, when $L$ and $M$ are Fuchsian, Proposition~\ref{prop: quotient- exponential parts} implies that $Q$ is Fuchsian.

If the degree of $\widetilde{\ind}_\xi(Q)$ is less than $\delta$, i.e., the number of roots of $\widetilde{\ind}_\xi(Q)$ in $C$ is less than $\delta$, then there is no operator $Q$ of order $\delta$ such that $L\otimes Q$ is a right factor of $M$.

\begin{example}\label{ex:2*3=4,degree}
	We continue with Examples \ref{ex:2*3=4} and~\ref{ex:2*3=4,order}. We show how to compute a degree bound for an unknown operator $Q\in \bC(x)[D]$ of order $3$ such that $L\otimes Q$ is a right factor of $M$. Both $L$ and $M$ have four singularities: $1,\xi_1, \xi_2$ and $\infty$, where $\xi_1,\xi_2$ are distinct roots of $x^2-2x+2$. These four singularities are regular. Hence $L$ and $M$ are Fuchsian, and therefore $Q$ is also Fuchsian. At the point $1$, we have 
	\[\ind_1(M)=(s-2)(s-3)(s-4)(s-5),\quad \ind_1(L)=(s-1)(s-2).\]
	Then $\widetilde{\ind}_1(Q) =\qsquo(\ind_1(M),\ind_1(L)) = (s-1)(s-2)(s-3)$ is a multiple of $\ind_1(Q)$. At the point $\xi_i$ $(i=1,2)$, we have 
	\[\ind_{\xi_i}(M)=(s+1)s(s-1)(s-2),\quad \ind_{\xi_i}(L)=(s+1)s.\]
	Then $\widetilde{\ind}_{\xi_i}(Q) =\qsquo(\ind_{\xi_i}(M),\ind_{\xi_i}(L)) = s(s-1)(s-2)$ is a multiple of $\ind_{\xi_i}(Q)$. At the point $\infty$, we have 
	\[\ind_{\xi_i}(M)=(s+3)(s+2)(s+1)s,\quad \ind_{\xi_i}(L)=s(s-1).\]
	Then $\widetilde{\ind}_{\infty}(Q) =\qsquo(\ind_{\infty}(M),\ind_{\infty}(L)) = (s+3)(s+2)(s+1)$ is a multiple of $\ind_{\infty}(Q)$.
	Therefore, for the operator $Q$,
	\begin{align*}
		\widetilde{S}_1(Q) &= 1+2+3-3=3,\\
		\widetilde{S}_{\xi_i}(Q) &= 0+1+2-3=0,\\
		\widetilde{S}_{\infty}(Q) &= -3-2-1-3=-9.
	\end{align*}
	Thus the number of extra singularities of $Q$ is upper bounded by:
	\[\# \Extra(Q)\leq -3(3-1) - (3+0+0-9) = 0.\]
	This implies that $Q$ has no extra singularities. 
	
	Since $Q$ is Fuchsian, it can be written
	\[Q = D^3 + \frac{a_2(x)}{A(x)}D^2 + \frac{a_1(x)}{A(x)^2} D + \frac{a_0(x)}{A(x)^3},\]
	where $a_i, A\in \bC[x]$ and $\deg(a_i)\leq \deg(A^i) - (3-i)$. Suppose $A(x)=A_1(x)A_2(x)$, where the roots of $A_1$ are elements of $\Sing^*(Q)$ and the roots of $A_2$ are elements of $\Extra(Q)$. It follows that 
	\[\deg(A_1)\leq \# \Sing^*(Q)\leq \# (\Sing(L)\cup\Sing(M)) = \#\{1,\xi_1,\xi_2\}=3\]
	and $\deg(A_2)\leq \# \Extra(Q)=0$. Clearing the denominator of $Q$ gives the bounds $(27,26,25,24)$ on the degrees of the coefficients of $(D^3,D^2,D,1)$. This is the bound used in Example~\ref{ex:2*3=4} leading to the discovery of the symmetric quotient $Q$. From Example~\ref{ex:2*3=4,Q}, we see that $Q$ has only two singularities $1$ and $\infty$. Thus, $Q$ indeed has no extra singularities.
%
\end{example}
\subsection{The general case}
Applying the generalized Fuchs relation~\eqref{EQ: fuchs rel generalized} to $Q$, we obtain the analogue of~\eqref{EQ: extra_Q - fuchs}:
\begin{equation}\label{EQ: extra_Q}
	\# \Extra(Q) \leq \sum_{\xi\in \Extra(Q)} \left(S_\xi(Q) -\frac{1}{2}I_\xi(Q)\right)= -\delta(\delta-1) - \sum_{\xi\in \Sing^*(Q)\cup\{\infty\}}\left(S_\xi(Q)-\frac{1}{2}I_\xi(Q)\right).
\end{equation}
As in the Fuchsian case, $\Sing^*(Q)\subseteq \Sing(L)\cup \Sing(M)$.
By Proposition~\ref{prop: quotient- exponential parts}, for each $\xi\in C$, one can compute the possible exponential parts $\{\exp(q_i((x-\xi)^{-1}))\}_{i=1}^{\kappa}$ of the series solutions of $Q$ at $\xi$, where $q_i\in C[x^{1/v}]$ with $v\in \bN\setminus\{0\}$ and $q_i(0)=0$. One can also compute a multiple of the generalized indicial polynomial $\ind_{\xi,\exp(q_i((x-\xi)^{-1}))}(Q)$, denoted by $\widetilde{\ind}_{\xi,\exp(q_i((x-\xi)^{-1}))}(Q)$. Therefore, by~\eqref{EQ: S_xi(L)}, we get $S_\xi(Q)\geq \widetilde{S}_\xi(Q)$, where $\widetilde S_\xi(Q)$ denotes the sum of the $\delta$ smallest roots of $\prod_{i=1}^\kappa\widetilde{\ind}_{\xi,\exp(q_i((x-\xi)^{-1}))}(Q)$, minus~$\frac{\delta(\delta-1)}{2}$. 

For each $1\leq i\leq \kappa$, there are at most $d_i$ linearly independent solutions of $Q$ at $\xi$ with the exponential part $\exp(q_i((x-\xi)^{-1}))$, where $d_i$ is the degree of $\widetilde{\ind}_{\xi,\exp(q_i((x-\xi)^{-1}))}(Q)$. So counting  $\exp(q_i((x-\xi)^{-1}))$ repeated $d_i$ times, we get a list $\exp(\tilde q_1((x-\xi)^{-1})), \ldots, \exp(\tilde q_{\tilde \delta}((x-\xi)^{-1}))$ of the possible exponential parts for the operator $Q$ at $\xi$, where $\tilde \delta = \sum_{i=1}^{\kappa} d_i$. Therefore, by~\eqref{EQ: I_xi(L)}, we get $I_\xi(Q)\geq \widetilde{I}_\xi(Q)$, where $\widetilde{I}_\xi(Q)$ denotes twice the sum of the $\frac{1}{2}\delta(\delta-1)$ smallest values among $\deg(\tilde q_i - \tilde q_j)$ for all $1\leq i< j\leq \tilde \delta$. 

The case at $\xi=\infty$ is similar. It then follows from \eqref{EQ: extra_Q} that
\[\# \Extra(Q) \leq -\delta(\delta-1) - \sum_{\xi\in\Sing(L)\cup\Sing(M)\cup\{\infty\}} \left(\widetilde{S}_\xi(Q)-\frac{1}{2}\widetilde{I}_\xi(Q)\right).\]

If $\tilde \delta<\delta$, then there is no operator $Q$ of order $\delta$ such that $L\otimes Q$ is a right factor of $M$.

To compute a sharper degree bound for $Q$, one may use integer linear programming as in~\cite{BostanRivoalSalvy2024}.

\begin{example}
	Let $L,M\in \bC(x)[D]$ be two operators:
	\begin{align*}
			L = & \ (2x-1)D^2 -4xD+4,\\
			M= & \ (12x^4 - 44x^3 + 63x^2 - 52x + 18) D^4 +(-72x^4 + 216x^3 - 246x^2 + 186x - 56)D^3 \\
			&+(132x^3 - 232x^2 + 99x - 96)xD^2 + (-72x^4 - 108x^3 + 232x^2 + 18x + 96)D +  144x^3 - 48x^2 - 234.
		\end{align*}
	Since $\dim_\bC(V(M):V(L))=2$, we assume that $Q\in \bC(x)[D]$ is an operator of order two such that $L\otimes Q$ is a right factor of $M$. We compute an upper bound on the number of extra singularities of $Q$. The singularities of $L$ are: $\frac{1}{2}$ and $\infty$. The singularities of $M$ are: $\xi_i$ $(i=1,2,3,4)$ and $\infty$, where the $\xi_i$ are distinct roots of $12x^4 - 44x^3 + 63x^2 - 52x + 18$. As shown in Example~\ref{ex:2*2=4,indicial}, the point $\frac{1}{2}$ is an apparent singularity of $L$. The points $\xi_i$ are apparent singularities of $M$. Similar to Example~\ref{ex:2*3=4,degree} in the Fuchsian case, we have
	\begin{align*}
		&\widetilde{\ind}_{\frac{1}{2}}(Q) =\qsquo(\ind_{\frac{1}{2}}(M),\ind_{\frac{1}{2}}(L)) =\qsquo(s(s-1)(s-2)(s-3), s(s-2))= s(s-1),\\
		&\widetilde{\ind}_{\xi_i}(Q) =\qsquo(\ind_{\xi_i}(M),\ind_{\xi_i}(L)) =\qsquo(s(s-1)(s-2)(s-4), s(s-1))= s(s-1).
	\end{align*}
	The point $\infty$ is an irregular singularity of $L$ and $M$. The generalized indicial polynomials of $L$ are \[\ind_{\infty, \exp(0)}(L) = 2(s+1),\quad\ind_{\infty, \exp(2x)}(L) = -2s.\] 
	The generalized indicial polynomials of $M$ are \[\ind_{\infty, \exp(0)}(M){=}6(s+2), \,\ind_{\infty, \exp(x)}(M){=} {-}2(s+1), \,\ind_{\infty, \exp(2x)}(M) {=} 2(s+1), \,\ind_{\infty, \exp(3x)}(M) {=} {-}6s.\] 
	Thus by Proposition~\ref{prop: quotient- exponential parts}, the possible exponential parts of $Q$ are
	\[\{\exp(0),\exp(x),\exp(2x),\exp(3x)\}\cap\{\exp(-2x),\exp(-x),\exp(0),\exp(x)\}=\{\exp(0),\exp(x)\}.\]
	Since $\exp(0) = \exp(0-0) =\exp(2x-2x)$, we have 
	\[\widetilde{\ind}_{\infty,\exp(0)}(Q) = \gcd(\qsquo(6(s+2), 2(s+1)), \, \qsquo(2(s+1), -2s)) =\gcd(s+1,s+1)=s+1.\]
	Since $\exp(x) = \exp(x-0) =\exp(3x-2x)$, we have 
	\[\widetilde{\ind}_{\infty,\exp(x)}(Q) = \gcd(\qsquo(-2(s+1), 2(s+1)), \, \qsquo(-6s, -2s)) =\gcd(s,s)=s.\]
	Thus, for the operator $Q$,
	\begin{align*}
		\widetilde S_\frac{1}{2}(Q) &= 0+1- 1 = 0,\\
		\widetilde S_{\xi_i}(Q) &= 0+1- 1 = 0,\\
		\widetilde S_\infty(Q) &= 0-1- 1 = -2,\\
		\widetilde I_\infty(Q) &= 2\cdot 1= 2.
	\end{align*}
	It follows that
	\[\# \Extra(Q) \leq -2(2-1) - (0+0+0+0+0-2-\frac{1}{2}2) =1.\]
	Since $M=L\otimes Q$, with $Q$ as in Examples \ref{ex:2*2=4,Q} and~\ref{ex:2*2=4,indicial}, we see that $Q$ has only two singularities: $1$ and and $\infty$. Thus $Q$ indeed has one extra singularity $1$. 
\end{example}
\section{Another example}
Algorithms for factoring operators of orders three and four with respect to symmetric product are known~\cite{Singer1985,Hessinger2001,vanHoeij2002,vanHoeij2007}. Here we give an example of computing a symmetric quotient of an operator of order nine by a factor of order three. Although this example does not fall into any of the three special cases described in Section~\ref{sec:special}, Algorithm~\ref{alg:global} successfully produces a symmetric quotient. 

Let $L=(x-1)^3D^3 + (5(x-1)^2 +(x-1)^3)D^2 + ((x-1)^2 - 17(x-1))D + 24\in \bC(x)[D]$ and let $M=L\otimes P$, where $P=(x-1)^3D^3+(11(x-1)^2+(x-1)^3)D^2 + 30(x-1)+18\in \bC(x)[D]$. Then $r=\ord(L)=3$, $n=\ord(M)=9$. The leading coefficient of $M$ is $m(x)(x-1)^8$, where
\begin{align*}
	m=&\ 1568x^{11} -161032x^{10} - 2017870x^9 + 31228120x^8 - 506595359x^7 + 8763692179x^6 - 91370341057x^5 \\
	&+ 610286581763x^4 - 3583448187232x^3 + 15654415322868x^2 - 37066249396506x + 24398082715566
\end{align*}
is an irreducible polynomial over $\bQ$ of degree $11$. Assume that $P$ is unknown. The goal is to compute the global quasi-symmetric quotient $Q$ of $M$ by $L$. 

 First we compute an upper bound for the order of $Q$. Since the indicial polynomial of $N=M\otimes L^{\otimes 2}$ is $\ind_0(N) =\prod_{i=0}^{29}(x-i)$, we take $k_0=29 - 3 = 26$ and $k=k_0+1=27$. The space $(V(M):V(L))$ has dimension three, with a basis given by
\begin{align*}
	&h_1 = 1 + 3x^3 + \frac{39}{4}x^4+\frac{201}{10}x^5+\frac{1343}{40}x^6+\frac{1737}{35}x^7+\frac{151717}{2240}x^8+\frac{125849}{1440}x^9+\frac{7269929}{67200}x^{10}+O(x^{11}),\\
	&h_2 = x-5x^3-\frac{57}{4}x^4-\frac{271}{10}x^5-\frac{5123}{120}x^6-\frac{301}{5}x^7-\frac{530161}{6720}x^8-\frac{593527}{6048}x^9-\frac{23664101}{201600}x^{10}+O(x^{11}),\\
	&h_3=x^2+\frac{10}{3}x^3+\frac{27}{4}x^4+\frac{65}{6}x^5+\frac{5471}{360}x^6+\frac{410}{21}x^7+\frac{94985}{4032}x^8+\frac{351137}{12960}x^9+\frac{18123281}{604800}x^{10}+O(x^{11}).
\end{align*}
So the order of $Q$ is at most three. Here if $k=9$, the truncated space $\bigcap_{i=1}^3T_k(W_i)$ has dimension four. When $k\geq10$, this dimension remains stable: $\dim_C(\bigcap_{i=1}^3T_k(W_i))=\dim_C(V(M):V(L))=3$. 

Now we compute the number of extra singularities of $Q$. The singularities of $L$ are: $1$ and $\infty$. The singularities of $M$ are: $1$, $\xi_i$ $(i=1,\ldots,11)$ and $\infty$, where the $\xi_i$ are distinct roots of $m(x)$. The point $1$ is a regular singularity of $L$ and $M$. The points $\xi_i$ are apparent singularities of $M$. So we have
\begin{align*}
	&\widetilde{\ind}_{1}(Q) =\qsquo(\ind_{1}(M),\ind_{1}(L)) =\qsquo(s^3(s+8)(s+9)^2(s+1)^3, (s+6)(s-2)^2)= (s+2)(s+3)^2\!,\\
	&\widetilde{\ind}_{\xi_i}(Q) =\qsquo(\ind_{\xi_i}(M),\ind_{\xi_i}(L)) =\qsquo((s-9)\textstyle\prod_{i=0}^7(s-i), s(s-1)(s-2))= \textstyle\prod_{i=0}^5(s-i).
\end{align*}
	The point $\infty$ is an irregular singularity of $L$ and $M$. The generalized indicial polynomials of $L$ are \[\ind_{\infty, \exp(0)}(L) =-s^2,\quad\ind_{\infty, \exp(-x)}(L) = s-4.\] 
The generalized indicial polynomials of $M$ are \[\ind_{\infty, \exp(0)}(M) = -2(s+1)^2s^2,\, \ind_{\infty, \exp(-x)}(M) = -(s-3)(s-4)(s-11)^2,\, \ind_{\infty, \exp(-2x)}(M) = 16(s-15).\] 
Thus by Proposition~\ref{prop: quotient- exponential parts}, the possible exponential parts of $Q$ are
\[\{\exp(0),\exp(-x),\exp(-2x)\}\cap\{\exp(x),\exp(0),\exp(-x)\}=\{\exp(0),\exp(-x)\}.\]
Since $\exp(0) = \exp(0-0) =\exp(-x-(-x))$, we have 
\[\widetilde{\ind}_{\infty,\exp(0)}(Q) = \gcd(\qsquo(-2(s+1)^2s^2, -s^2), \, \qsquo(-(s-3)(s-4)(s-11)^2, s-4)) =s(s+1).\]
Since $\exp(-x) = \exp(-x-0) =\exp(-2x-(-x))$, we have 
\[\widetilde{\ind}_{\infty,\exp(-x)}(Q) = \gcd(\qsquo(-(s-3)(s-4)(s-11)^2, -s^2), \, \qsquo(16(s-15), s-4)) =s-11.\]
Thus, for the operator $Q$,
\begin{align*}
	\widetilde S_1(Q) &= -2-3-3- 3 = -11,\\
	\widetilde S_{\xi_i}(Q) &= 0+1+2- 3 = 0,\\
	\widetilde S_\infty(Q) &= 0-1+11-3 = 7,\\
	\widetilde I_\infty(Q) &= 2\cdot 2= 4.
\end{align*}
It follows that
\[\# \Extra(Q) \leq -3(3-1) - (-11+0+7-\frac{1}{2}4) =0.\]
Therefore, $Q$ has no extra singularities and at most $13$ singularities: $1$, $\xi_i (i=1,\ldots, 11)$ and $\infty$. 

Since the singularities $1$, $\xi_i (i=1,\ldots, 11)$ are regular, the Newton polygons of $Q$ at each of these points have only one edge with slope $1$. At the point $\infty$, since the possible exponential parts of $Q$ are $\exp(0)$ and $\exp(-x)$, the possible slopes of the Newton polygon of $Q$ are $1$ and $2$. 
We write \[Q = D^3 + \frac{A_2(x)}{B(x)}D^2 + \frac{A_1(x)}{B(x)} D+\frac{A_0(x)}{B(x)},\] where $A_i,B\in C[x]$. Then $\deg(B)\leq 3(11+1+2)=45$ and $\deg(A_i)\leq \deg(B)+3(2-1)=48$, see details in~\cite{BostanRivoalSalvy2021explicit}. Clearing the denominator of $Q$ gives the bounds $(45,48,48,48)$ on the degrees of the coefficients of $(D^3,D^2,D,1)$. By solving the linear system $Q\cdot h_j=O(x^k)$ for $j=1,2,3$ and sufficiently large $k$, we find that $Q=P$ is the global quasi-symmetric quotient of $M$ by $L$. 

\bibliographystyle{plain}
\bibliography{bib}

\end{document}